\journal{Journal of \LaTeX\ Templates}
\newcommand{\rpc}{$\lambda_{rpc}$}
\newcommand{\polyrpc}{$\lambda_{rpc}^{\forall}$}
\newcommand{\polycs}{$\lambda_{cs}^{\forall}$}
\newcommand{\stateenccs}{$\lambda_{cs}^{enc}$}
\newcommand{\statefulcs}{$\lambda_{cs}^{state}$}
\newcommand{\client}{\textbf{c}}
\newcommand{\server}{\textbf{s}}
\newcommand{\evalRPC}[3]{#1\Downarrow_{#2}#3}
\newcommand{\lamL}[3]{\lambda^{#1}#2.#3}
\newcommand{\subst}[2]{\{#1/#2\}}
\newcommand{\textsfReq}{\textsf{req}}
\newcommand{\req}[2]{\textsfReq(#1,#2)}
\newcommand{\textsfCall}{\textsf{call}}
\newcommand{\call}[2]{\textsfCall(#1,#2)}
\newcommand{\textsfGen}{\textsf{gen}}
\newcommand{\gen}[3]{\textsfGen(#1,#2,#3)}
\newcommand{\funL}[1]{\xrightarrow{#1}}    
\newcommand{\tyenv}{\Gamma}     
\newcommand{\tyenvExt}[2]{\Gamma,#1:#2}
\newcommand{\tyenvExtWith}[1]{\Gamma,#1}
\newcommand{\typing}[4]{#1\vdash_{#2} #3 : #4}
\newcommand{\Loc}{Loc}
\newcommand{\mono}[1]{[\![#1]\!]}
\newcommand{\html}[2]{\textlangle{#1}\textrangle{#2}\textlangle/{#1}\textrangle}
\newcommand{\ul}[1]{\html{ul}{#1}}
\newcommand{\li}[1]{\html{li}{#1}}
\renewcommand{\cite}[1]{\citep{#1}}
\newtheorem{lemma}{Lemma}[section]
\newtheorem{theorem}{Theorem}[section]
\begin{document}

\begin{frontmatter}

\title{A  Polymorphic RPC Calculus}

\author[jnu]{Kwanghoon Choi \fnref{fn1}}

\author[edinburgh,turinginstitute]{James Cheney \fnref{fn2}}

\author[edinburgh]{Simon Fowler}

\author[edinburgh,imperialcollege]{Sam Lindley \fnref{fn3}}

\address[jnu]{Chonnam National University, Gwangju, Republic of Korea}

\address[edinburgh]{The University of Edinburgh, Scotland, UK}

\address[turinginstitute]{The Alan Turing Institute, London, UK}

\address[imperialcollege]{Imperial College, London, UK}

\fntext[fn1]{This research was supported by the National Research Foundation of Korea grants
 from MSIP (No. 2017R1A2B4005138) and MoE (No. 2019R1I1A3A01058608).}
\fntext[fn2]{This work was supported by ERC Consolidator Grant Skye (grant number 682315).
}
\fntext[fn3]{This work was supported by EPSRC Programme Grant ``From Data Types to
Session Types---A Basis for Concurrency and Distribution'' (EP/K034413/1).}





\begin{abstract}
	The RPC calculus is a simple semantic foundation for multi-tier programming languages such as Links in which located functions can be written for the client-server model. Subsequently, the typed RPC calculus is designed to capture the location information of functions by types and to drive location type-directed slicing compilations.
	However, the use of locations is currently limited to monomorphic ones, which is one of the gaps to overcome to put into practice the theory of RPC calculi for client-server model. 
	
	This paper proposes a polymorphic RPC calculus to allow programmers to write succinct multi-tier programs using polymorphic location constructs.
	Then the polymorphic multi-tier programs can be automatically translated into programs only containing location constants amenable to the existing slicing compilation methods.  
	We formulate a type system for the polymorphic RPC calculus, and prove its type soundness. Also, we design a monomorphization translation together with proofs on its type and semantic correctness for the translation. 
\end{abstract}

\begin{keyword}
multi-tier programming \sep location polymorphism   \sep remote procedure call\sep client-server model
\end{keyword}

\end{frontmatter}


\section{Introduction}
\label{sec:introduction}
	Multi-tier programming languages for the client-server model are designed to address the client-server dichotomy. 
	For example, a web system basically consists of a web server that accesses databases and a web client that provides user interfaces, and they are connected by a network. Programmers have to develop two individual programs separately for the two machines, which increases the programmer's burden. 
	Once a program is developed, programmers need to test the two programs together, which is more complex than with one program on a single machine. 
	After that, the integrity between the two programs should be properly maintained when each of them evolves. Further, some tasks cross the boundary of two computers, and so the separate development increases coupling between the client and server modules for the tasks.

	
	Multi-tier programming attempts to solve this problem by allowing programmers to write a unified program for client and server expressions together in a single programming language, and by providing a slicing compilation method that can slice the unified program into separate client and server programs automatically.
	
	The untyped RPC calculus \cite{Cooper:2009:RC:1599410.1599439} is the semantic foundation for the RPC (remote procedure call) feature of Links \cite{Cooper:2006:LWP:1777707.1777724}, which is a functional programming language for multi-tier web programming. Here is an example excerpted from the RPC calculus paper, rewritten in Links as: 

\ \ 
\begin{minipage}{0.9\textwidth}
\begin{lstlisting}[escapechar=\%,language=lisp]
fun main() client { %\underline{authenticate}% () }

fun authenticate() server { 
   var creds = %\underline{getCredentials}%( "Enter name:passwd > " )
   if ( creds == "ezra:opensesame" ) { 
     "The secret document" 
   } else { "Access denied" }
}

fun getCredentials(prompt) client { (print(prompt);  read) }
\end{lstlisting}
\end{minipage}

	There are location attributes {\texttt client} and {\texttt server} that indicate where the associated functions should run. The program begins with $main$, which is a client function. It invokes a server function $authenticate$. Subsequently, in the body of the server function, a client function $getCredentials$ is invoked. These are two examples of remote procedure call (RPC), one from the client to the server and the other from the server to the client. 
	The RPC calculus thus expresses remote procedure calls as plain lambda applications. It also expresses local procedure calls such as $print (prompt)$ as the same syntax of lambda application. As a result, it may take some time to see if a given lambda application is a remote procedure call, particularly when higher-order functions are extensively used. The two remote procedure calls in the example are underlined. 
	
		Then a slicing compilation in the untyped RPC calculus slices a unified program such as the one above into a client program and a server program where each separate program contains only functions that must run at their own location and the remote procedure calls are compiled with some communication primitives. 
	
	
	
	The typed RPC calculus \cite{choijfp2019} is an extension with location types to specify where functions must run. For example, \texttt{authenticate} has type $Unit \funL{\server} String$ while \texttt{getCredentials} has type $String \funL{\client} String$ where $\server$ denotes the server and $\client$ does the client. It is equipped with a type system that can account for remote procedure calls at the type level as is done in the previous example by underlining them.
	Also, it provides a type-directed slicing compilation method simpler than the untyped slicing compilation method. Thanks to the simplicity, the method offers a spectrum of slicing compilations: one for the  stateless server where no states are maintained in the server, which is good for scalability, and the other for the stateful server where the server maintains all states during multiple interactions with the client. The method even suggests an idea of how to mix the two styles. The details for these slicing compilations are in  \cite{choijfp2019}.

	Note that in Links, location attributes are hints that are used at run-time rather than part of the type as in the typed RPC calculus.
	
	In spite of the advancement, there are still gaps in putting into practice the theory of RPC calculi for client-server model. The typed RPC calculus is good for writing functions with specific locations such as web page modification and database accesses. But it is bad for writing location neutral functions such as list utilities and primitive type functions because programmers write them twice, one for the client and the other for the server. Instead, the calculus should allow programmers to write location neutral functions only once for the two locations. Then the compiler should be extended to translate them into location-specific versions, for example, one written in JavaScript for the client and the other written in OCaml for the server, automatically. It is much like the convenience of polymorphically typed functions that are written once but can be applied multiple times over different instantiated types. 
	
	An introduction of polymorphic locations to multi-tier programming languages including the RPC calculi poses a technical problem. In the RPC calculi, programmers write remote procedure calls in the same syntax as for local procedure calls. They provide no RPC keyword. This is believed to be a good design for programmers because this abstracts out a difference in terms of the use of client-server communication. For implementation, however, we need to distinguish  between these two kinds of procedure calls, one implemented by a jump instruction and the other by a RPC library, exposing the difference explicitly in terms. The typed RPC calculus does this by location types as, for example, every application is a remote procedure call if the location of the application is different from the location of a function to invoke. On introducing polymorphic locations, we will have {\it location variables} where such location information is uncertain at compile-time.  

	 In this paper, we propose a polymorphic RPC calculus, which is an extension of the typed RPC calculus with polymorphic locations. 
	 A key idea behind the polymorphic RPC calculus is to introduce location variables on lambda abstractions, to abstract locations by {\it location abstraction}, $\Lambda l.M$, and to instantiate them by {\it location application}, $M[Loc]$ for some location $Loc$. For example, the location-neutral \texttt{map} function could be written in  Links extended with the feature of polymorphic locations as 
`fun map(f, xs) l \{ the body of map \}' where the location attribute is replaced with a location variable $l$. In other words,

	 \[
	 	map = \Lambda l. \lamL{l}{f}{ \lamL{l}{xs}{\ \cdots \textit{the body of map} \cdots }}
	\] 
	that has type $\forall l. (A \funL{l} B) \funL{l} ([A] \funL{l} [B])$. The type of $f$ is $A \funL{l} B$, the type of $xs$ is a list type whose elements have type $A$, i.e., $[A]$, and the ultimate return type is $[B]$. The map function should run at location $l$, which is specified by location application. To run it as a client function, we use a location application $map[\client]$, which becomes $\lamL{\client}{f}{ \lamL{\client}{xs}{\ \cdots}}$ of type $ (A \funL{\client} B) \funL{\client} ([A] \funL{\client} [B])$ by replacing all occurrences of $l$ with $\client$. To run it as a server function, $map[\server]$ will be used. 
	Thus, every polymorphic $\lambda$-abstraction can be regarded as a location-neutral one, and a choice of a location specific $\lambda$-abstraction is done by a location application with the location.  

	For implementation of the location polymorphism, we design a method to translate away both the location abstraction construct and the location application construct from polymorphic RPC terms. It then becomes possible to make use of the existing slicing compilation methods for the typed RPC calculus \cite{choijfp2019}. Combining the translation and each of the existing slicing compilations, we can obtain two new slicing compilation methods for the polymorphic RPC calculus. Figure \ref{fig:overview} shows an overview of the polymorphic RPC calculus. 
	
\begin{figure}[t]
\centering
\includegraphics[width=\textwidth]{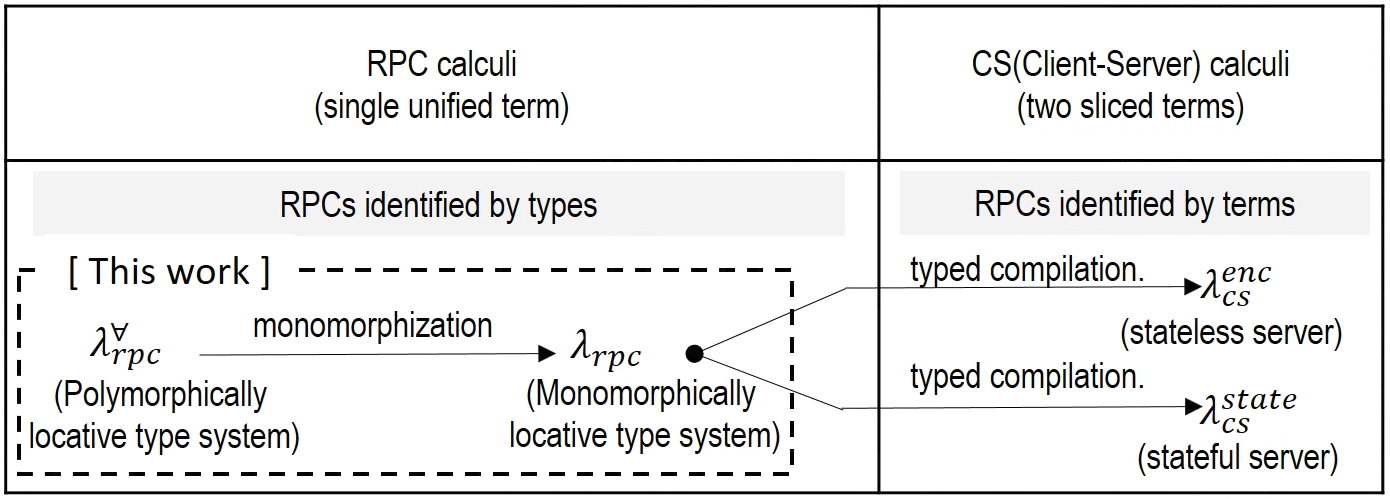}
\caption{Overview of a polymorphic RPC calculus}
\label{fig:overview}
\end{figure}	
	
	The contributions of this paper are as follows:
\begin{itemize}
	\item We propose a new polymorphic RPC calculus with the notion of polymorphic location, and prove its type soundness property. 
	\item We design a monomorphization translation for the polymorphic RPC calculus, and prove the type correctness and the semantic correctness of the translation.
\end{itemize}
	 
	 The roadmap of this paper is this. Section \ref{sec:rpc} reviews the typed RPC calculus as a background. In Section \ref{sec:polyrpc}, we propose a polymorphic RPC calculus and a monomorphization translation, and prove a few important properties. After we discuss related work in Section \ref{sec:relatedwork}, we conclude in Section \ref{sec:conclusion}.

\section{Background: The typed RPC calculus}
\label{sec:rpc}	

	In this section, we review the typed RPC calculus. Figure \ref{fig:rpc} shows a typed RPC calculus {\rpc}. It is a call-by-value $\lambda$-calculus with location annotations on $\lambda$-abstractions.  The location annotations tell where the $\lambda$-abstractions must execute. The client-server model is assumed in the design of the calculus, and so the location annotations are either $\client$ denoting client or $\server$ denoting server. The syntax of the typed RPC calculus is thus defined as shown in the figure. 

\begin{figure}[t]
\begin{tabular}{ l  l  l  c  l    c  l  c  l  c  l }
\multicolumn{11}{l}{\textbf{Syntax}} \\
  \ \ \ \ \
            & Location 		& $a,b$ 		& $::=$ 	& $\client$	& $|$		& $ \server$		 \\
            & Term 	& $L,M,N$	& $::=$ 	& $V$			& $|$		& $L \ M$		& 	$|$		&	$M[A]$ \\
            & Value    			& $V,W$		& $::=$ 	& $x$			& $|$		& $ \lambda^{a} x.M$ &		$|$		& $\Lambda\alpha.V$  
\\[0.1cm]
\multicolumn{11}{l}{\textbf{Semantics}} \\
  \multicolumn{11}{c}{
  \ \ 
  	\mbox{
       \begin{prooftree}
       		\Infer0[(Abs)]{ \evalRPC{\lamL{b}{x}{M}}{a}{\lamL{b}{x}{M} }}
	\end{prooftree}
	\ \ \ 
	\begin{prooftree}
  		\Hypo{ \evalRPC{L}{a}{\lamL{b}{x}{N}} }
  		\Hypo{ \evalRPC{M}{a}{W} }
  		\Hypo{ \evalRPC{N\subst{W}{x}}{b}{V}  }
  		\Infer3[(App)]{ \evalRPC{L \ M}{a}{V}  }
	\end{prooftree} 
  	}
    }
\\[0.5cm]
  \multicolumn{11}{c}{
  \ \ \ 
  	\mbox{
       \begin{prooftree}
       		\Infer0[(Tabs)]{ \evalRPC{\Lambda\alpha.V}{a}{\Lambda\alpha.V }}
	\end{prooftree}
	\ \ \ \ \ 
	\begin{prooftree}
  		\Hypo{ \evalRPC{M}{a}{\Lambda\alpha.V} }
  		\Infer1[(Tapp)]{ \evalRPC{M[B]}{a}{V\subst{B}{\alpha}}  }
	\end{prooftree} 
  	}
    } 
\end{tabular}
\caption{A typed RPC calculus \rpc{}}
\label{fig:rpc}
\end{figure}

	The semantics of {\rpc} is defined in a big-step operational semantics with evaluation judgment, $\evalRPC{M}{a}{V}$ that denotes the evaluation of a term $M$ at location $a$ resulting in a value $V$. (Abs) straightforwardly defines an evaluation relation between a location annotated $\lambda$-abstraction and itself. (App) is more interesting for an application $L \ M$ at location $a$: it performs $\beta$-reduction at location $b$, which a $\lambda$-abstraction from $L$ has as an annotation, with a value $W$ from $M$, and it continues to evaluate the $\beta$-reduced term at the location. Here, $N\subst{W}{x}$ is a substitution of $W$ for $x$ in $N$. 
	
	Note that in (App), $L \ M$ is a remote procedure call whenever the caller location $a$ is different from the callee location $b$. Otherwise, it is a local procedure call. When $a$ is client and $b$ is server, a server function is invoked from the client, and vice versa. The typed RPC calculus is a simple semantic foundation because it uses the same syntax of $\lambda$-application and the same evaluation rule (App) both for remote procedure calls and local ones. But every remote procedure call  must be implemented differently from local procedure calls since it involves communication between client and server. 
	
	The typed RPC calculus in the figure is in fact an extension with polymorphic types of  the original typed one \cite{choijfp2019}, but is still limited to monomorphic locations in the same way. So, the typed RPC calculus now has type abstraction $\Lambda\alpha.V$ and type application $M[A]$ where $\alpha$ is a type variable and $A$ is a type, which will be defined soon. (Tabs) and (Tapp) are quite the standard definitions for evaluation of type abstraction and type application. Note $V\subst{B}{\alpha}$ is a substitution of type $B$ for type variable $\alpha$ in $V$. 

	A type system for the typed RPC calculus in Figure \ref{fig:rpctysystem} basically comes from the one in  \cite{choijfp2019} that accounts for remote procedure calls in the type level. It extends the original type system with polymorphic types by having two standard typing rules for type abstraction and type application, (T-Tabs) and (T-Tapp). $A\subst{B}{\alpha}$ is a substitution of $B$ for each occurrence of $\alpha$ in $A$.
	Accordingly, every typing environment $\tyenv$ now has type variables as well as associations of variables and types in general as $\{\alpha_1,\cdots,\alpha_k, x_1:A_1, \cdots, x_m:A_m\}$. 

	The type system has two features related to location. First, location annotations are introduced to function types as $A \funL{a} B$. Every $\lambda$-abstraction that must run at location $a$ gets this function type. For example, $(\lamL{ \server }{f}{ \  (f \ M)  }) \ \ (\lamL{\client}{y}{  \cdots   })$ is well-typed when $f$ is of type $A\funL{\client} B$ for some types $A$ and $B$.  However, $(\lamL{\client}{f}{ \ \cdots (\lamL{\client}{h}{\ \cdots}) \ f \ \ \cdots \ (\lamL{\client}{g}{\ \cdots}) \ f } \cdots )$ is ill-typed when $h$ is of $A\funL{\client} B$ and $g$ is of $A\funL{\server}B$ since the type language of the typed RPC calculus  only allows either $\client$ or $\server$, not both of them on a function type. In this respect, this typed RPC calculus (or the original one \cite{choijfp2019}) is monomorphic in terms of specifying location of evaluation. 
	
	Second, location annotations are also attached on typing judgments as $\typing{\tyenv}{a}{M}{A}$ saying a term $M$ at location $a$ has type $A$ under a type environment $\tyenv$. 
	(T-Var) is defined as usual. 
	(T-Abs) assigns $\lambda$-abstraction a function type with the same location as its annotation. Note that a location on the typing judgment in the conclusion changes to the annotated location in the premise for the body of  $\lambda$-abstraction.
	
	Combining these two features, (T-App) is designed to be a refinement of the conventional $\lambda$-application typing with respect to the combinations of location $a$ (where to evaluate the application) and location $b$ (where to evaluate the function). When $a$ is different from $b$, $L \ M$ is statically found to be a remote procedure call: if $a=\client$ and $b=\server$, it is to invoke a server function from the client, and if $a=\server$ and $b=\client$, it is to invoke a client function from the server. Otherwise, one can statically decide that it is a local procedure call. 
	
	\begin{figure}[t]
\begin{tabular}{l l l l l l l l l l l l l }
\multicolumn{13}{l}{\textbf{Types}} \\
   \ \ \ \ \ & Type & $A,B,C$ & $::=$ & $base$ & $ | $ & $ A\funL{a}A$ & $|$ & $\alpha$ & $|$ & $\forall\alpha.A$ \\[0.1cm]
\multicolumn{13}{l}{\textbf{Typing Rules}} \\
 & 
 \multicolumn{12}{c}{  
  \mbox{
  	\begin{prooftree}
		\Hypo{  \tyenv(x)=A }
		\Infer[left label=(T-Var)]1{ \typing{\tyenv}{a}{x}{A} }
	\end{prooftree}
		\ \ \ 
	\begin{prooftree}
		\Hypo{ \typing{\tyenvExt{x}{A}}{b}{M}{B} }
		\Infer[left label=(T-Abs)]1{ \typing{\tyenv}{a}{\lamL{b}{x}{M}}{A\funL{b}B} } 
	\end{prooftree}
    } 
    }
\\[0.5cm] 
& 
 \multicolumn{12}{c}{  
 \mbox{
	\begin{prooftree}
		\Hypo{  \typing{\tyenv}{a}{L}{A\funL{b}B } }
		\Hypo{  \typing{\tyenv}{a}{M}{A} }
		\Infer[left label=(T-App)]2{ \typing{\tyenv}{a}{L \ M}{B}   }
	\end{prooftree}
	}
    }
\\[0.5cm]
& 
 \multicolumn{12}{c}{  
 \mbox{
	\begin{prooftree}
		\Hypo{  \typing{\tyenv,\alpha}{a}{V}{A} }
		\Infer[left label=(T-Tabs)]1{ \typing{\tyenv}{a}{\Lambda\alpha.V}{\forall\alpha.A}   }
	\end{prooftree}
	\ \ \ \ \
	\begin{prooftree}
		\Hypo{  \typing{\tyenv}{a}{M}{\forall\alpha.A} }
		\Infer[left label=(T-Tapp)]1{ \typing{\tyenv}{a}{M[B]}{A\subst{B}{\alpha}}   }
	\end{prooftree}
	}
    }
\end{tabular}
\caption{A type system for the  typed RPC calculus}
\label{fig:rpctysystem}
\end{figure}

	The type soundness theorem for the typed RPC calculus   \cite{choijfp2019} guarantees that every remote procedure call thus identified statically will never change to a local procedure call under evaluation. The two slicing compilations for the typed RPC calculus depend on this capability of the type system as an analysis on dynamic communication patterns. 
	
	As shown in Figure \ref{fig:overview}, the typed RPC calculus offers slicing compilation methods to slice a unified RPC program in {\rpc} into a client program and a server program in the client-server (CS) calculi {\stateenccs} or {\statefulcs}, automatically.
	Each separate program will contain only functions that must run at one's own location, and the remote procedure calls in the unified program will be compiled with some communication primitives in the client-server programs. 

	The slicing methods are type-directed compilations
	where every input RPC program is type-checked to produce a typing derivation for it and then  sliced programs are generated. In a unified RPC program, there is no particular syntax to specify remote procedure calls but only location types can be used to identify them. In separate client and server programs, we introduce explicit constructs for remote procedure calls like this. A construct $\req{V}{W}$ is used to invoke a server function $V$ with an argument $W$ from the client, and $\call{V}{W}$ is another remote procedure call construct in the reverse direction. In separate programs, we use $V(W)$ only for local procedure calls, differently from what we do in  a unified program. 

	Now we are ready to present a key idea of the type-directed slicing compilations. Given a well-typed unified RPC program, each use of (T-App) in the typing derivation is compiled differently depending on the combination of the two locations on where to evaluate the application ($a$) and where to evaluate the function ($b$), as is explained above.   When $a$ is the same as $b$, the slicing compilation methods generate a normal application term, say, $V(W)$ where $V$ is a local function and $W$ is an argument. When $a=\client$ and $b=\server$, the slicing compilation methods generate $\req{V}{W}$ where $V$ is a server function. This term is implemented as sending $V$ and $W$ to the server to apply the function to the argument there and  receiving either the application result or a new server-side call to invoke a client function. When $a=\server$ and $b=\client$, the slicing compilation methods generate $\call{V}{W}$ where $V$ is a client function. This term is implemented in a similar way for $\req{-}{-}$ but in the reverse direction from the server to the client. In sliced programs, only applications of the form $V(W)$ do not involve communication at all. For details, the reader can refer to the formal semantics for the client-server calculi ({\stateenccs} and {\statefulcs}) and the two slicing compilation rules in \cite{choijfp2019}.
	
	Although the original typed RPC calculus does not consider type polymorphism for a simple presentation, we see no problem in applying the two slicing compilations to this typed RPC calculus in the presence of type polymorphism assuming the target language also has type polymorphism.

\section{A typed RPC calculus extended with polymorphic locations}
\label{sec:polyrpc}

	In this section, we firstly extend the typed RPC calculus with the notion of polymorphic location to write polymorphic functions seamlessly with monomorphic functions, which is convenient for programmers.  We call it a {\it polymorphic RPC calculus}, {\polyrpc}. Secondly, we design a translation of the polymorphic RPC calculus into the typed RPC calculus. Then we are able to make use of the two existing slicing compilation methods even for the polymorphic RPC calculus. 

\subsection{A polymorphic RPC calculus}

	An important feature of the polymorphic RPC calculus is the notion of location variable $l$ for which we can substitute a  location (constant) $a$. Accordingly, a new syntactic object $\Loc$ is introduced to be either a location constant or a location variable. 

	Every $\lambda$-abstraction $\lamL{\Loc}{x}{M}$ now has an annotation of $\Loc$ instead of $a$. By substituting a location $b$ for a location variable annotation,  $(\lamL{l}{x}{M})\subst{b}{l}$ becomes a monomorphic $\lambda$-abstraction $\lamL{b}{x}{(M\subst{b}{l})}$. This location variable is abstracted by the location abstraction construct $\Lambda l.V$, and it is instantiated by the location application construct $M[\Loc]$. Except for these three constructs, the other syntax is the same as that in the typed RPC calculus. Figure \ref{fig:polyrpc} summarizes the syntax of the polymorphic RPC calculus. 
	
\begin{figure}[t]
\begin{tabular}{ l  l  l  c  l    c  l  c  l  c  l c l }
\multicolumn{13}{l}{\textbf{Syntax}} \\
  \ \ \ 
            & Location 		& $a,b$ 		& $::=$ 	& $\client$	& $|$		& $ \server$		 \\
            &                   		& $\Loc$   	& $::=$  	& $a$              	& $|$          & $l$   \\
            & Term 			& $L,M,N$	& $::=$ 	& $V$			& $|$		& $L \ M$	 					& 	$|$		&	$M[A]$ & $|$	& $M[\Loc]$	 \\
            & Value    			& $V,W$		& $::=$ 	& $x$			& $|$		& $ \lamL{\Loc}{x}{M}$	&		$|$		& $\Lambda\alpha.V$ 	& $|$	& $\Lambda l.V$
\\[0.1cm]
\multicolumn{13}{l}{\textbf{Semantics}} \\
 \multicolumn{13}{c}{
  	\mbox{
       \begin{prooftree}
       		\Infer0[(Abs)]{ \evalRPC{\lamL{b}{x}{M}}{a}{\lamL{b}{x}{M} }}
	\end{prooftree}
	\ \ \ 
	\begin{prooftree}
  		\Hypo{ \evalRPC{L}{a}{\lamL{b}{x}{N}} }
  		\Hypo{ \evalRPC{M}{a}{W} }
  		\Hypo{ \evalRPC{N\subst{W}{x}}{b}{V}  }
  		\Infer3[(App)]{ \evalRPC{L \ M}{a}{V}  }
	\end{prooftree} 
  	}
    }   
\\[0.5cm]
  \multicolumn{13}{c}{
  \ \ \ 
  	\mbox{
       \begin{prooftree}
       		\Infer0[(Tabs)]{ \evalRPC{\Lambda\alpha.V}{a}{\Lambda\alpha.V }}
	\end{prooftree}
	\ \ \ \ \ 
	\begin{prooftree}
  		\Hypo{ \evalRPC{M}{a}{\Lambda\alpha.V} }
  		\Infer1[(Tapp)]{ \evalRPC{M[B]}{a}{V\subst{B}{\alpha}}  }
	\end{prooftree} 
  	}
    } 
\\[0.5cm]
 \multicolumn{13}{c}{
  	\mbox{
	\begin{prooftree}
       		\Infer0[(Labs)]{ \evalRPC{\Lambda l.V}{a}{\Lambda l.V} }
	\end{prooftree}
	\ \ \ \ \ \ \ 
	\begin{prooftree}
  		\Hypo{ \evalRPC{M}{a}{\Lambda l.{V}} }
    		\Infer1[(Lapp)]{ \evalRPC{M[b]}{a}{V\subst{b}{l}}  }
	\end{prooftree} 
  	}
    }     
\end{tabular}
\caption{The polymorphic  RPC calculus \polyrpc}
\label{fig:polyrpc}
\end{figure}

	The semantics for the polymorphic RPC calculus is  shown in Figure \ref{fig:polyrpc}. Every location abstraction is regarded as a value; it evaluates to itself by (Labs). Every location application $M[\Loc]$ firstly evaluates to a location  abstraction, and then $\Loc$ is substituted for the location variable in the body of the abstraction by (Lapp).
	
	A new form of substitution $M\subst{\Loc}{l}$ is defined as this. 
	\begin{eqnarray*}
	x\subst{\Loc}{l} & = & x \\
	(\lamL{\Loc}{x}{M})\subst{\Loc'}{l} & = & \lamL{\Loc\subst{\Loc'}{l}}{x}{M\subst{\Loc'}{l}} \\
	(\Lambda\alpha.M)\subst{\Loc}{l} & = & \Lambda\alpha. (M \subst{\Loc}{l}) \\
	(\Lambda l.V)\subst{\Loc}{l'}  & = & \left\{\mbox{\begin{tabular}{l l}
													$\Lambda l.V$ &  if  $l=l'$ \\
	                                                         				$\Lambda l.(V\subst{\Loc}{l'})$ & otherwise
												\end{tabular}
												
											} \right.
											 \\
	(L \ M)\subst{\Loc}{l} & = & (L \subst{\Loc}{l}) (M \subst{\Loc}{l}) \\	
	(M[A]) \subst{\Loc}{l} & = & (M \subst{\Loc}{l}) (A \subst{\Loc}{l}) \\
      (M[\Loc])\subst{\Loc'}{l} &=& (M\subst{\Loc'}{l})[\Loc\subst{\Loc'}{l}]
	\end{eqnarray*}
      where the definition of $\Loc\subst{\Loc'}{l}$ is
      	\begin{eqnarray*}
      \Loc\subst{\Loc'}{l} & = & \left\{\mbox{\begin{tabular}{l l}
													$\Loc$ & if $\Loc=a$ \\
	                                                         				$\Loc'$ & if $\Loc=l'$ and $l=l'$ \\
	                                                         				$\Loc$ & if $\Loc=l'$ and $l\not=l'$
												\end{tabular}
												
											} 
								\right.
	\end{eqnarray*}
	and the definition of $A \subst{\Loc}{l}$ will be presented below.

	For example, a polymorphic identity function $id$ is defined as  $\Lambda l.\lamL{l}{x}{x}$ so that  we can use  $id[\client]$ as $\lamL{\client}{x}{x}$,  and $id[\server]$ as $\lamL{\server}{x}{x}$. Thus, every polymorphic $\lambda$-abstraction can be regarded as a function usable both in the client and in the server, and a choice of a location specific $\lambda$-abstraction is done by a location application with the location. 
	
\begin{figure}[t]
\begin{tabular}{l l l l l l l l l l l l l }
\multicolumn{13}{l}{\textbf{Types}} \\
   \ \ \  & Type & $A,B,C$ & $::=$ & $base$ & $ | $ & $ A\funL{\Loc}B$  & $|$ & $\alpha$ & $|$ & $\forall\alpha.A$ & $ | $ & $ \forall l.A$ \\[0.1cm]
\multicolumn{13}{l}{\textbf{Typing Rules}} \\
 & 
 \multicolumn{12}{c}{  
  \mbox{
  	\begin{prooftree}
		\Hypo{  \tyenv(x)=A }
		\Infer[left label=(T-Var)]1{ \typing{\tyenv}{\Loc}{x}{A} }
	\end{prooftree}
	\ \ \ \ \ 
	\begin{prooftree}
		\Hypo{ \typing{\tyenvExt{x}{A}}{\Loc}{M}{B} }
		\Infer[left label=(T-Abs)]1{ \typing{\tyenv}{\Loc'}{\lamL{\Loc}{x}{M}}{A\funL{\Loc}B} } 
	\end{prooftree}
    } 
    }
\\[0.5cm] 
& 
 \multicolumn{12}{c}{  
 \mbox{
	\begin{prooftree}
		\Hypo{  \typing{\tyenv}{\Loc}{L}{A\funL{\Loc'}B } }
		\Hypo{  \typing{\tyenv}{\Loc}{M}{A} }
		\Infer[left label=(T-App)]2{ \typing{\tyenv}{\Loc}{L \ M}{B}   }
	\end{prooftree}
	}
    }
\\[0.5cm]
& 
 \multicolumn{12}{c}{  
 \mbox{
	\begin{prooftree}
		\Hypo{  \typing{\tyenv,\alpha}{\Loc}{V}{A} }
		\Infer[left label=(T-Tabs)]1{ \typing{\tyenv}{\Loc}{\Lambda\alpha.V}{\forall\alpha.A}   }
	\end{prooftree}
	\ \ \ \ \
	\begin{prooftree}
		\Hypo{  \typing{\tyenv}{\Loc}{M}{\forall\alpha.A} }
		\Infer[left label=(T-Tapp)]1{ \typing{\tyenv}{\Loc}{M[B]}{A\subst{B}{\alpha}}   }
	\end{prooftree}
	}
    }
\\[0.5cm]
& 
\multicolumn{12}{c}{  
  \mbox{
	\begin{prooftree}
		\Hypo{ \typing{\tyenvExtWith{l}}{\Loc}{V}{A} }
		\Infer[left label=(T-Labs)]1{ \typing{\tyenv}{\Loc}{\Lambda l.V}{\forall l.A }} 
	\end{prooftree}
	\ \ \ \ \ 
	\begin{prooftree}
		\Hypo{ \typing{\tyenv}{\Loc}{M}{\forall l.A } }
		\Infer[left label=(T-Lapp)]1{ \typing{\tyenv}{\Loc}{M[\Loc']}{A\subst{\Loc'}{l}}} 
	\end{prooftree}
    } 
    }
\end{tabular}
\caption{A type system for the polymorphic  RPC calculus}
\label{fig:polyrpctysystem}
\end{figure}

	Figure \ref{fig:polyrpctysystem} shows a type system for the polymorphic RPC calculus. The type language allows function types to  have the new syntactic object $\Loc$ as their annotation as $A \funL{\Loc} B$. Then every $\lambda$-abstraction at unknown location  gets assigned $A\funL{l} B$ using some location variable $l$. A universal quantifier over a location variable, $\forall l. A$, is also introduced to the type language  accordingly. 
	
	We extend a typing judgment $\typing{\tyenv}{\Loc}{M}{A}$ with two things. First, a location variable can be annotated on it. This extension is used to assert a typing relation in the body of  $\lamL{l}{x}{M}$. Second, typing environments now include location variables as well. 
	A general form of $\tyenv$ can now be written as $\{\alpha_1,\cdots,\alpha_k, l_1, \cdots,l_n,x_1:A_1, \cdots, x_m:A_m\}$. 
	This extension is used to keep track of a set of usable location variables. 
	The domain of environment, $dom(\tyenv)$, is defined as a union of type, location, and term variables as $\{ \alpha_1,\cdots,\alpha_k,l_1, \cdots,l_n, x_1, \cdots, x_m \}$, and the range, $rng(\tyenv)$, is $\{A_1,\cdots,A_m\}$. 
	
	
	Recall that $(\lamL{c}{f}{ \ \cdots (\lamL{c}{h}{\ \cdots}) \ f \ \ \cdots \ (\lamL{c}{g}{\ \cdots}) \ f } \cdots )$ is an ill-typed term in the typed RPC calculus,   where $h$ is of type $A\funL{\client} B$ and $g$ is of type $A\funL{\server} B$. The polymorphic RPC calculus can make the term be well-typed by assigning $f$ a polymorphic type as $\forall l. A \funL{l} B$ and by slightly changing the first and second occurrences of $f$ into $f[\client]$ and $f[\server]$ respectively. 

	As a set of free variables is defined by the conventional definition of $fv(M)$, we define a set of free location variables over various kinds of objects in the form as $flv(-)$. 
	Two definitions for locations and types are as follows.  
	\begin{eqnarray*}
		flv(a) &=& \{ \} \\
		flv(l)  &=& \{ l\} \\
		\ \\
		flv(base) &=& \emptyset \\
		flv(\alpha) &=& \emptyset \\
		flv(A\funL{\Loc}B)) &=& flv(A)\cup flv(\Loc)\cup flv(B) \\
		flv(\forall \alpha.A) &=& flv(A) \\
		flv(\forall l.A) &=& flv(A) \backslash \{l\}
	\end{eqnarray*}
	
	For typing environments, it is a union of location variables there and free location variables occurring in types associated with variables as 
	\[ flv(\{ \alpha_1,\cdots,\alpha_k, l_1, \cdots,l_n,x_1:A_1, \cdots, x_m:A_m \}) = \{ l_1, \cdots, l_n \} \  \cup \bigcup_{1\leq i \leq m} flv(A_i)
	\]
	
	The five typing rules  (T-Var), (T-Abs), (T-App), (T-Tabs), and (T-Tapp) for the polymorphic RPC calculus generalize those for the typed RPC calculus by having $\Loc$ on function types and on typing judgments. Two new typing rules (T-Labs) and (T-Lapp) are similar to the typing rules for type abstraction and type application. (T-Labs) checks if the body of the location abstraction is typed with an extended typing environment with a fresh location variable. 
(T-Lapp) substitutes $\Loc'$ for all occurrences of a location variable $l$ on $\lambda$-abstractions in $M$. 
	
	The definition of $A\subst{\Loc}{l}$ is this.
	\begin{eqnarray*}
	base \subst{\Loc}{l}  & = & base \\
	(A \funL{\Loc} B) \subst{\Loc'}{l}  & = & A \subst{\Loc'}{l} \funL{ \Loc \subst{\Loc'}{l} } B \subst{\Loc'}{l} \\
	\alpha \subst{\Loc}{l} & = & \alpha \\ 
	(\forall\alpha.A) \subst{\Loc}{l}  & = & \forall\alpha. (A \subst{\Loc}{l} ) \\
	(\forall l.A)\subst{\Loc}{l'}  & = & \left\{\mbox{\begin{tabular}{l l}
													$\forall l.A$ &  if  $l=l'$ \\
	                                                         				$\forall l.(A\subst{\Loc}{l'})$ & otherwise
												\end{tabular}
												
											} \right.
	\end{eqnarray*}
	
	From now on, we will consider only well-formed typing judgments where there are no unbound variables, no unbound type variables, and no unbound free location variables. That is, given $\typing{\tyenv}{\Loc}{M}{A}$, we will safely assume three things. First, $fv(M) \subseteq dom(\tyenv)$. Second, $ \bigcup_{A_i \in rng(\tyenv)} ftv(A_i) \cup ftv(M) \cup ftv(A) \subseteq dom(\tyenv)$ where $ftv(A)$ or $ftv(M)$ are the sets of free type variables occurring in the type and the term respectively. They can be defined straightforwardly. Third, $\bigcup_{A_i \in rng(\tyenv)} flv(A_i) \cup flv(\Loc) \cup flv(M) \cup flv(A) \subseteq dom(\tyenv)$. 
	For example, in (T-App), location variables occurring in $A$ and $\Loc'$ are from any location abstractions enclosing the application term, in (T-Tabs), a bound type variable $\alpha$ does not occur as a free type variable in $\tyenv$, and 
	in (T-Labs), a bound  location  variable $l$ never occurs as a free location variable in $\tyenv$ and $\Loc$. 

	
\subsubsection{Type soundness}
	
	Now we are ready to prove the type soundness of the type system for the polymorphic RPC calculus. Theorem \ref{thm:typesoundness} formulates this property. Its proof is done by induction on the height of evaluation derivations, and is available below. 
	The value substitution  lemma (\ref{lemma:valuesubst})  offers a typing hypothesis for the $\beta$-reduced term necessary for proving the case (App) of the theorem. 
	The type substitution lemma (\ref{lemma:typesubst}) proves a typing for a type substituted value from a type abstraction. 
	The value relocation lemma (\ref{lemma:valueloc}) is used to prove that the movement of a return value from a callee location to a caller location does not change its type.  
	The location substitution lemma (\ref{lemma:locsubst}) results in proving the case (Lapp) of the theorem by offering a typing hypothesis for the location-reduced term.  
	The proofs of all these lemmas are available in the appendix.
	
 \begin{lemma}[Value relocation]
 \label{lemma:valueloc} If \ $\typing{\tyenv}{\Loc}{V}{A}$ then $\typing{\tyenv}{\Loc'}{V}{A}$.
 \end{lemma}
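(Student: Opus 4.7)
The plan is to proceed by straightforward induction on the derivation of $\typing{\tyenv}{\Loc}{V}{A}$. Since $V$ is a value, only four typing rules can appear as the last step: (T-Var), (T-Abs), (T-Tabs), and (T-Labs). The key observation making the lemma plausible is that in each of these rules the location labelling the \emph{conclusion} plays no active role in the premises; only the annotation $\Loc''$ appearing \emph{inside} the term (as in $\lamL{\Loc''}{x}{M}$, and the types assigned to $\lambda$-abstractions via (T-Abs)) carries location content.

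For $V = x$, the rule (T-Var) only inspects $\tyenv(x)$, so re-applying it with $\Loc'$ yields $\typing{\tyenv}{\Loc'}{x}{A}$ immediately. For $V = \lamL{\Loc''}{x}{M}$ with $A = A_1 \funL{\Loc''} A_2$, the premise $\typing{\tyenvExt{x}{A_1}}{\Loc''}{M}{A_2}$ is unaffected by the outer location, so (T-Abs) can be re-applied with $\Loc'$ to re-derive the conclusion. For $V = \Lambda\alpha.V'$ and $V = \Lambda l.V'$, the grammar of values forces the body to again be a value, so the induction hypothesis applies to the premise (yielding its typing at $\Loc'$ in the appropriately extended environment), after which (T-Tabs) or (T-Labs) concludes.

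I anticipate no essential obstacle. The lemma is essentially a sanity check that values, being inert, are location-neutral, with all location information properly localized on $\lambda$-annotations and inside the body of each abstraction. The only point needing care is that the grammar of \polyrpc{} really does confine the body of $\Lambda\alpha.\cdot$ and $\Lambda l.\cdot$ to values, so the induction hypothesis is available in those two cases; this is evident from Figure \ref{fig:polyrpc}. Consequently the entire argument is a uniform structural inspection of the four value-forming rules.
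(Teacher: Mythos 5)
Your proposal is correct and follows essentially the same route as the paper's proof: induction on the typing derivation, with (T-Var) and (T-Abs) handled directly because the conclusion's location is a free parameter of those rules, and (T-Tabs)/(T-Labs) handled by the induction hypothesis on the value body. The paper merely phrases the induction as being on the height of the derivation and labels (T-Var)/(T-Abs) as base cases, but the substance is identical.
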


 \begin{lemma}[Value substitution] 
 \label{lemma:valuesubst}
If \ $\typing{\tyenv}{\Loc}{\lamL{\Loc'}{x}{M}}{A\funL{\Loc'}B}$ and $\typing{\tyenv}{\Loc}{V}{A}$ then $\typing{\tyenv}{\Loc'}{M\subst{V}{x}}{B}$. 
 \end{lemma}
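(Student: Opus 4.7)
The plan is to first invert the typing derivation of the $\lambda$-abstraction and then reduce the claim to a standard substitution lemma, using value relocation to bridge the location mismatch.

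First, I would observe that the only rule that can conclude $\typing{\tyenv}{\Loc}{\lamL{\Loc'}{x}{M}}{A\funL{\Loc'}B}$ is (T-Abs). Inversion therefore yields $\typing{\tyenvExt{x}{A}}{\Loc'}{M}{B}$. Next, since the body $M$ is typed at location $\Loc'$ rather than $\Loc$, I would apply the Value relocation lemma (\ref{lemma:valueloc}) to the second hypothesis, obtaining $\typing{\tyenv}{\Loc'}{V}{A}$. This aligns the caller/callee locations so that a standard substitution argument can proceed.

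The real work is a general substitution lemma proved by induction on the derivation of $\typing{\tyenvExt{x}{A}}{\Loc'}{M}{B}$: whenever $\typing{\tyenv}{\Loc'}{V}{A}$, we have $\typing{\tyenv}{\Loc'}{M\subst{V}{x}}{B}$. Each case mirrors the corresponding typing rule. For (T-Var), split on whether the variable is $x$ (use the hypothesis directly) or another variable (use the (T-Var) rule with the smaller environment). For (T-Abs), the body is typed at a different location $\Loc''$, so I would need $\typing{\tyenv}{\Loc''}{V}{A}$, obtained by a further use of value relocation, before invoking the induction hypothesis. For (T-App) the induction hypothesis applies to both premises straightforwardly. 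For (T-Tabs) and (T-Labs), I would assume (by Barendregt's convention) that the bound type or location variable does not occur free in $V$ or $\tyenv$, so that the substitution commutes with the abstraction and the induction hypothesis applies with the extended environment; here an implicit weakening-on-type/location-variable step is used. The (T-Tapp) and (T-Lapp) cases are routine, using the commutation of value substitution with type and location substitution on the result type.

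The main obstacle is the (T-Abs) case of the auxiliary substitution lemma, where the typing location in the premise differs from the ambient location of the conclusion: without value relocation we could not feed $V$ into the induction hypothesis at the nested location. The design of Lemma~\ref{lemma:valueloc} is precisely what makes this argument go through, and once that observation is made the remaining cases are mechanical. Finally, combining the auxiliary substitution lemma with the inverted premise and the relocated $V$ delivers $\typing{\tyenv}{\Loc'}{M\subst{V}{x}}{B}$, completing the proof.
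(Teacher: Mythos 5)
Your proposal is correct and follows essentially the same route as the paper: invert (T-Abs), relocate $V$ to $\Loc'$ via Lemma~\ref{lemma:valueloc}, and then prove a generalized substitution lemma by induction on the typing derivation of the body. Your extra observation that value relocation must be invoked again in the nested (T-Abs) case (where the body is typed at yet another location) is a detail the paper leaves implicit under ``provable similarly,'' and it is exactly right.
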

 
  \begin{lemma}[Type substitution] 
 \label{lemma:typesubst}
If \ $\typing{\tyenv}{\Loc}{\Lambda\alpha.V}{\forall\alpha.A}$ then $\typing{\tyenv}{\Loc}{V\subst{B}{\alpha}}{A\subst{B}{\alpha}}$. 
 \end{lemma}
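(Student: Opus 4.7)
The plan is to prove a generalized substitution-preserves-typing property by structural induction on the typing derivation, then derive the stated lemma as an immediate corollary. First I would invert the hypothesis: since the subject $\Lambda\alpha.V$ is a type abstraction and only (T-Tabs) can conclude with type $\forall\alpha.A$, we obtain the subderivation $\typing{\tyenvExtWith{\alpha}}{\Loc}{V}{A}$.

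Next I would establish the following strengthened statement: whenever $\typing{\tyenv_1,\alpha,\tyenv_2}{\Loc}{M}{C}$ holds and $B$ is a type well-formed in $\tyenv_1$ (i.e.\ $ftv(B)\cup flv(B)\subseteq \dom{\tyenv_1}$), then $\typing{\tyenv_1,\tyenv_2\subst{B}{\alpha}}{\Loc}{M\subst{B}{\alpha}}{C\subst{B}{\alpha}}$, where substitution on an environment is pointwise on the associated types. The (T-Var) case splits on where $x$ is bound: if in $\tyenv_1$, the well-formedness assumption forces $\alpha\notin ftv(\tyenv_1)$ so the looked-up type is unchanged; if in $\tyenv_2$, it gets substituted accordingly. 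The (T-Abs), (T-App), (T-Labs), and (T-Lapp) cases follow routinely by applying the IH to the premises, using that type substitution distributes through arrow formation while preserving the location annotation, commutes with $\forall l$-formation (under freshness), and commutes with location-to-type substitution $A\subst{\Loc'}{l}$. The (T-Tabs) case uses $\alpha$-conversion to assume the bound type variable is fresh relative to $\alpha$ and $ftv(B)$, then applies the IH to the premise. The (T-Tapp) case reduces to the standard commutation of type substitutions, $(A'\subst{B'}{\beta})\subst{B}{\alpha} = (A'\subst{B}{\alpha})\subst{B'\subst{B}{\alpha}}{\beta}$, again under a freshness assumption for $\beta$.

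Instantiating the generalized statement with $\tyenv_1=\tyenv$ and $\tyenv_2=\emptyset$ applied to the inverted subderivation delivers exactly $\typing{\tyenv}{\Loc}{V\subst{B}{\alpha}}{A\subst{B}{\alpha}}$, as required.

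The main obstacle is purely bookkeeping: I would first state two small sublemmas — one establishing that $(-)\subst{B}{\alpha}$ and $(-)\subst{\Loc'}{l}$ commute on types (they operate on disjoint syntactic categories), and one giving the standard commutation of two type substitutions under freshness — and only then run the induction. Managing the freshness side-conditions on bound type and location variables in (T-Tabs) and (T-Labs), and confirming that the location annotation on the arrow is carried through by type substitution, are the parts most likely to trip up a casual reading, but none of them require anything beyond what is standard for System F-style substitution lemmas.
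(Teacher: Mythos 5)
Your proof is correct and follows essentially the same route as the paper's: invert (T-Tabs) to expose the premise $\typing{\tyenvExtWith{\alpha}}{\Loc}{V}{A}$, generalize to a substitution lemma for arbitrary terms, and induct on the typing derivation, with (T-Tapp) discharged by the standard commutation of type substitutions. Your generalized statement (which substitutes $B$ for $\alpha$ in the portion of the context introduced after $\alpha$) is marginally more careful than the paper's, which instead carries the side condition $\alpha\not\in ftv(\tyenv)$; both collapse to the same argument when instantiated to prove the lemma.
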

 
 \begin{lemma}[Location substitution]
 \label{lemma:locsubst} If \ $\typing{\tyenv}{\Loc}{\Lambda l.V}{\forall l.A}$ then $\typing{\tyenv}{\Loc}{V\subst{\Loc'}{l}}{A\subst{\Loc'}{l}}$.
 \end{lemma}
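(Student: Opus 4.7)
My plan is to reduce the lemma to a more general location-substitution property on typing derivations, and then obtain the stated lemma via inversion on the final rule used to derive $\typing{\tyenv}{\Loc}{\Lambda l.V}{\forall l.A}$.

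First, I would invert the derivation. The only rule that can assign a type of the form $\forall l.A$ to $\Lambda l.V$ is (T-Labs), so we obtain $\typing{\tyenvExtWith{l}}{\Loc}{V}{A}$. By the well-formedness convention adopted just before this subsection, the bound variable $l$ appears neither in $\tyenv$ nor in $\Loc$, i.e., $l \notin flv(\tyenv) \cup flv(\Loc)$, so in particular $\Loc\subst{\Loc'}{l} = \Loc$.

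Second, I would prove the following strengthened statement by structural induction on the typing derivation: \emph{if $\typing{\tyenvExtWith{l}}{\Loc_0}{M}{A_0}$ and $l \notin flv(\tyenv)$, then $\typing{\tyenv}{\Loc_0\subst{\Loc'}{l}}{M\subst{\Loc'}{l}}{A_0\subst{\Loc'}{l}}$}. The cases (T-Var), (T-Tabs), (T-Tapp), (T-Lapp) are straightforward: location substitution commutes with the relevant syntactic operations and the environment variable lookup is unaffected (since variable-to-type bindings carry types, to which the IH applies componentwise). For (T-Abs), the premise has the form $\typing{(\tyenvExtWith{l}),x{:}A_1}{\Loc_1}{M_1}{B_1}$; applying the IH and using the fact that $(\tyenvExtWith{l}),x{:}A_1$ and $(\tyenv,x{:}A_1),l$ denote the same environment up to the usual conventions gives the desired conclusion, with the annotation $\Loc_1$ and the types $A_1,B_1$ each getting $\subst{\Loc'}{l}$ pushed inside by the syntactic definitions. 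For (T-App), the IH applied to both premises yields function type $A_1\subst{\Loc'}{l} \funL{\Loc_2\subst{\Loc'}{l}} B_1\subst{\Loc'}{l}$ and argument type $A_1\subst{\Loc'}{l}$, matching the shape required by (T-App).

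The main obstacle is the case (T-Labs): we must substitute under another location binder $\Lambda l''.V''$ with premise $\typing{(\tyenvExtWith{l}),l''}{\Loc_0}{V''}{A''}$. By the Barendregt convention I may assume $l'' \neq l$ and $l'' \notin flv(\Loc')\cup flv(\tyenv)$, performing a $\alpha$-renaming of the inner binder if necessary; then the definitions of $V\subst{\Loc'}{l}$ and $A\subst{\Loc'}{l}$ for the $\forall l''$ and $\Lambda l''$ forms fall into the ``otherwise'' branch and commute with the substitution, so the IH applied with the reordered environment $(\tyenv,l''),l$ closes the case. A completely analogous $\alpha$-renaming argument handles the interaction with (T-Tabs) for type-variable binders, though there the interaction is vacuous since type and location substitutions act on disjoint syntactic categories.

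Finally, instantiating the strengthened statement with the premise $\typing{\tyenvExtWith{l}}{\Loc}{V}{A}$ obtained from inversion (and using $\Loc\subst{\Loc'}{l} = \Loc$) yields $\typing{\tyenv}{\Loc}{V\subst{\Loc'}{l}}{A\subst{\Loc'}{l}}$, which is exactly the conclusion of the lemma.
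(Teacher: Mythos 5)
Your proposal is correct and follows essentially the same route as the paper: invert (T-Labs) to obtain $\typing{\tyenv,l}{\Loc}{V}{A}$ with $l\not\in flv(\tyenv)$, prove a generalized location-substitution statement by induction on the typing derivation, and handle binder cases via the usual freshness conventions. Your strengthened statement is in fact slightly more careful than the paper's version, since you also apply $\subst{\Loc'}{l}$ to the location annotation on the judgment itself, which is exactly what is needed for the (T-Abs) case (where the body is typed at the $\lambda$-abstraction's own annotation, possibly $l$) to go through cleanly.
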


\begin{theorem}[Type soundness for \polyrpc] For a  closed term $M$, if \ $\typing{\emptyset}{a}{M}{A}$ and $\evalRPC{M}{a}{V}$, then $\typing{\emptyset}{a}{V}{A}$.
\label{thm:typesoundness}
\end{theorem}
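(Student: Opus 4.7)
The plan is to proceed by induction on the height of the evaluation derivation $\evalRPC{M}{a}{V}$, performing a case analysis on the last rule applied. The three value-introducing rules (Abs), (Tabs), and (Labs) are immediate, since $V = M$ and the given hypothesis already furnishes the required typing. The three remaining cases correspond to the elimination forms, and each decomposes the given typing by inversion and then reassembles the conclusion with one of the four supporting lemmas.

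For the (App) case, $M = L \ N$ with premises $\evalRPC{L}{a}{\lamL{b}{x}{N'}}$, $\evalRPC{N}{a}{W}$, and $\evalRPC{N'\subst{W}{x}}{b}{V}$. Inversion of (T-App) on $\typing{\emptyset}{a}{L \ N}{A}$ yields $\typing{\emptyset}{a}{L}{A' \funL{b} A}$ and $\typing{\emptyset}{a}{N}{A'}$ for some $A'$. Two applications of the induction hypothesis give typings for the intermediate values $\lamL{b}{x}{N'}$ and $W$. Value substitution (\ref{lemma:valuesubst}) then delivers $\typing{\emptyset}{b}{N'\subst{W}{x}}{A}$, and a further induction-hypothesis appeal yields $\typing{\emptyset}{b}{V}{A}$. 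Finally, Value relocation (\ref{lemma:valueloc}) promotes this typing from the callee location $b$ back to the caller location $a$ demanded by the conclusion. The (Tapp) and (Lapp) cases are entirely analogous but invoke Type substitution (\ref{lemma:typesubst}) and Location substitution (\ref{lemma:locsubst}) respectively; since type and location applications are evaluated entirely at the caller location $a$ according to their operational rules, no relocation is needed in those two cases.

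The main obstacle is not in the induction itself --- which is a straightforward bookkeeping exercise --- but in making sure that the supporting lemmas hold robustly, in particular Value relocation and Location substitution. These lemmas must track how free location variables in the types of subterms behave under the scoping discipline introduced by (T-Labs), and they must respect the $\alpha$-renaming implicit in the substitution clauses for $\Lambda l.V$ and $\forall l.A$. Once those lemmas are in hand, the (App) case's shift of typing location from $b$ back to $a$ --- reflecting the semantic intuition that a value transmitted across the client--server boundary retains its type --- composes cleanly with the substitution lemmas, and the theorem follows.
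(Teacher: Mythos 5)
Your proposal is correct and follows essentially the same route as the paper's proof: induction on the height of the evaluation derivation, with the three value rules immediate, the (App) case handled by inversion, two uses of the induction hypothesis, the Value substitution lemma, a third use of the induction hypothesis, and the Value relocation lemma to move the result type from the callee location $b$ back to $a$, and the (Tapp) and (Lapp) cases discharged by the Type and Location substitution lemmas without any relocation step. Your observation that the substantive burden lies in the supporting lemmas rather than in the induction itself is also consistent with how the paper organizes the argument.
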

\begin{proof}
We prove this theorem by induction on the height of the evaluation derivation. Base cases use (Abs),  (Labs), and (Tabs) while inductive cases involve (App), (Lapp), and (Tapp). 

{\bf Case (Abs): } 
	$\evalRPC{\lamL{b}{x}{M_0}}{a}{\lamL{b}{x}{M_0}}$ where  $M=V=\lamL{b}{x}{M_0}$. Therefore, this case holds by the typing judgment hypothesis.
	
{\bf Case (Labs): } 
	$\evalRPC{\Lambda l.V_0}{a}{\Lambda l.V_0}$ where $M=V=\Lambda l.V_0$. Again, the typing judgment hypothesis proves this case.

{\bf Case (Tabs): }
	$\evalRPC{\Lambda\alpha.V_0}{a}{\Lambda\alpha.V_0}$ where $M=V=\Lambda\alpha.V_0$. Once again, the typing judgment hypothesis proves this case.
	
{\bf Case (App):  }
	$\evalRPC{L \ M_1}{a}{V}$, (1):$\evalRPC{L}{a}{\lamL{b}{x}{M_0}}$, (2):$\evalRPC{ M_1}{a}{W}$, and (3):$\evalRPC{M_0 \subst{W}{x}}{b}{V}$ where $M=L \ M_1$.  
	By $\typing{\emptyset}{a}{L \ M_1}{B}$, (4):$\typing{\emptyset}{a}{L}{A_1 \funL{\Loc'} B}$, and (5):$\typing{\emptyset}{a}{M_1}{A_1}$ where $B=A$. 

	By applying I.H. to (1) and (4), (6):$\typing{\emptyset}{a}{\lamL{b}{x}{M_0}}{A_1 \funL{\Loc'} B}$ where $\Loc'=b$.  
	By applying I.H. to (2) and (5), (7):$\typing{\emptyset}{a}{W}{A_1}$. 
	
	From (6) and (7), (8):$\typing{\emptyset}{b}{M_0 \subst{W}{x}}{B}$ is implied by the lemma (Value substitution).

	By applying I.H to (3) and (8), (9):$\typing{\emptyset}{b}{V}{B}$. By the lemma (Value relocation) with (9), $\typing{\emptyset}{a}{V}{B}$. Since $B=A$, this case is proved. 

{\bf Case (Lapp): } 
	$\evalRPC{L[b]}{a}{V_0\subst{b}{l}}$, (1):$\evalRPC{L}{a}{\Lambda l.V_0}$ where $M=L[b]$ and $V=V_0\subst{b}{l}$. 
	
	$\typing{\emptyset}{a}{L[b]}{A_0\subst{b}{l}}$, (2):$\typing{\emptyset}{a}{L}{\forall l. A_0}$ where $\tyenv=\emptyset$ and $A=A_0\subst{b}{l}$. 
	
	By applying I.H. to (1) and (2), (3):$\typing{\emptyset}{a}{\Lambda l.V_0}{\forall l.A_0}$. 

	The lemma (Location substitution) with (3) implies (4):$\typing{\emptyset}{a}{V_0\subst{b}{l}}{A_0\subst{b}{l}}$, which proves this case.
	
{\bf Case (Tapp):}
	$\evalRPC{M_1[B]}{a}{V_0\subst{B}{\alpha}}$, (1):$\evalRPC{M_1}{a}{\Lambda\alpha.V_0}$ where $M=M_1[B]$ and $V=V_0\subst{B}{\alpha}$. 
	
	By $\typing{\emptyset}{a}{M_1[B]}{A_0\subst{B}{\alpha}}$, (2):$\typing{\emptyset}{a}{M_1}{\forall\alpha.A_0}$ where $A=A_0\subst{B}{\alpha}$. 
	
	By applying I.H. to (1) and (2), (3):$\typing{\emptyset}{a}{\Lambda\alpha.V_0}{\forall\alpha.A_0}$. 
	
	By the lemma (Type substitution) with (3), $\typing{\emptyset}{a}{V_0\subst{B}{\alpha}}{A_0\subst{B}{\alpha}}$, which proves this case.

\end{proof}

\subsubsection{An example using polymorphic location}
\label{subsec:polyrpcexample}

	To explain how polymorphic location is useful in writing multi-tier programs, we will present an example. Whenever a remote user enters text messages, the accumulated texts are converted to an HTML file for a local browser to display. 
	For example, suppose that the user enters ``Hi, Bobby!'', ``Your nose is shiny.'', and ``Where is John?'', in sequence. Then the following HTML files will be generated in sequence:
	
	\begin{itemize}
	\item \ul{} 
	\item \ul{\li{Hi, Bobby!}} 
	\item \ul{\li{Hi, Bobby!} \li{Your nose is shiny.}} 
	\item \ul{\li{Hi, Bobby!} \li{Your nose is shiny.} \li{Where is John?}}
	\end{itemize}
\ \\
Then the browser will display them  one by one replacing each HTML file with the next one. 

	A program that behaves in this way may have type $Stream \ String \funL{\client} Stream \ HTML$ on a hypothetical runtime system, e.g., 
	\[
	\lamL{\client}{prg}{\ browser \ (prg \ keyboard)}
	\]
	where $keyboard$ offers the program, $prg$, a stream of strings that the user enters, and $browser$ displays a stream of HTMLs generated by the program. 

	Thus a simple minded multi-tier program is a conversion function of a string stream into an HTML stream.
	Such a program may be written by composing a server function and a client function by a {\it cross-tier composition function} instantiated with the client location and the server location,  $\circ{[\server,\client]}$ (or written as $\circ_{[\server,\client]}$ for readability), as this. \ \\
	
	\begin{tabular}{l c l}
	$ program =  (foldl_{[\client]} \ f \  [\ ]) \ \ (\circ_{[\server,\client]}) \ \ (foldl_{[\server]} \  g \  [ \  [\ \! ] \ ])$  \\[0.1cm]
	\end{tabular}
	
	\begin{tabular}{l}
	\ \ where \\
	\ \ \ \ $f = \lamL{\client}{htmls}{
					\lamL{\client}{strs}{\ htmls \ {\scriptstyle ++}_{[\client]} } }$ \\
	\ \ \ \ \ \ \ \ \ \ \ \ \ 
	\ \ \ \ \ \  
						$[\ \ul{ \ \ \ map_{[\client]} \  (\lamL{\client}{str}{\ \li{str}}) \ strs \ \ \ } \ ]$ \\[0.1cm] %
	\ \ \ \ $g = \lamL{\server}{strs}{
					\lamL{\server}{str}{\ strs \ {\scriptstyle ++}_{[\server]} \ [ \ \ \  last_{[\server]} \ strs \ {\scriptstyle ++}_{[\server]} \ [str]  \ \ \ ]} }$ \\[0.1cm] %
	\end{tabular}

	Note that the example uses $[$ and $]$ to denote a stream. It also assumes the standard stream functions extended with a single polymorphic location abstraction: the left fold function, $foldl$, the stream concatenation function, (${\scriptstyle ++}$), the map function, $map$, and the last element selection function, $last$. 
	
	Also note that as the multi-tier programming languages such as Links, Ur/Web, and so on provide one's own convenient notation for HTML, the example makes use of such notation as $\ul{\cdots}$ and $\li{\cdots}$ to construct HTML elements. They can be viewed as values of a special base type $HTML$.
	
	In the structure of the program, the server function, $foldl_{[\server]} \  g \  [ \  [\ \! ] \ ]$, is of type $Stream \ String \funL{\server} Stream \ Text$ where $Text$ is the type for a finite list of strings, and it generates a sequence of texts accumulated up to whenever a user enters a string. The client function, $foldl_{[\client]} \ f \  [\ ]$, of type $Stream \ Text \funL{\client} Stream \ HTML$ takes the sequence of texts through the composition function. Each text, $[str_1, \cdots, str_n]$, is converted into an HTML, 	$\ul{\li{str_1}, \cdots, \li{str_n} }$ for the browser to display, and the conversion is repeated over the sequence to update one HTML with the next one. 

	Now we are ready to review how the notion of polymorphic location is useful in writing multi-tier programs. First of all, polymorphic location functions are extensively used in the example. It is straightforward to use them at different locations just by instantiating them with the locations. The example requires no duplicate codes that would otherwise be written in the monomorphically typed RPC calculus, for example, for $foldl$, (${\scriptstyle ++}$), $map$, and $last$. 
	The resulting example looks more like a single computer program when all location applications are ignored, which is an ultimate goal in the multi-tier programming languages.  
	
	Second of all, it is also easy to compose two differently located functions as shown by $\circ[\server,\client]$ in the example. 
	The term and the type of the cross-tier composition function, ($\circ$), can be defined as follows.

	\begin{tabular}{l c l}
	$\circ$ & $:$ & $\forall l_1, l_2. \forall\alpha,\beta,\gamma. (\beta\funL{l_2}\gamma) \funL{l_2} (\alpha\funL{l_1}\beta) \funL{l_2} (\alpha\funL{l_2}\gamma)$ \\
	$\circ$ & $=$ & $\Lambda l_1, l_2. \Lambda\alpha,\beta,\gamma. \lamL{l_2}{f}{\lamL{l_2}{g}{\lamL{l_2}{x}{\ f \ (g \ x)}}}$ \\
	\end{tabular} 

\ \\
Then $f \ \circ_{[\server,\client]} \ g$ composes a server function $g$ and a client function $f$ with data flow from the server to the client. $f \ \circ_{[\client,\server]} \ g$ does another cross-tier composition with the reverse data flow. Of course, it is also easy to express the local composition by $\circ[\client,\client]$ and $\circ[\server,\server]$. 
	Without the notion of polymorphic location, four different located functions would be written, which is more laborious task. 
	
\subsection{A monomorphization translation of the polymorphic RPC calculus}

	Once programmers have the polymorphic RPC calculus in hand to be able to write RPC terms succinctly with the notion of polymorphic location, the next step is to design a method to compile to the client-server model.
	Instead of reinventing the wheel, we choose to make use of the two slicing compilation methods for the typed RPC calculus after translating polymorphic RPC terms into typed RPC terms. To realize this strategy, we need to translate all features of polymorphic locations away completely. This monomorphization translation will be discussed in this section. 

	A key idea behind the monomorphization translation is to interpret a location abstraction as a pair of its client version and its server version, and to regard a location application as a projection of the pair according to the location to be applied.  For example, an identity function $id$, $\Lambda l. \lamL{l}{x}{x}$, is translated as $(\lamL{\client}{x}{x}, \lamL{\server}{x}{x})$. Also, $id[\client]$ and $id[\server]$ are translated as $\pi_1(\mono{id})$ and $\pi_2(\mono{id})$, respectively, where $\mono{M}$ is a translation of $M$. A systematic translation like this allows the monomorphic typed RPC to pretend to have the notion of polymorphic locations without explicit location abstraction and location application. 
	
	Note that this pair-based translation scheme can potentially produce an exponentially large term. We believe that this would not be a big problem in practice as long as the nesting depth of location lambdas is shallow. This issue will be discussed more later. 
	
	For the target  of the translation, we extend the typed RPC calculus with pairs $(M,N)$ and projections $\pi_i(M)$ in the standard way as shown in Figure \ref{fig:rpcwithpair} and Figure \ref{fig:rpctysystemwithpair}. The extended type language includes pair types $A \times B$. From now on, we call this extension just the typed RPC calculus {\rpc}. 
	
	
\begin{figure}[t]
\begin{tabular}{ l  l  l  c  l    c  l  c  l  c  l c l }
\multicolumn{13}{l}{\textbf{Syntax}} \\
  \ \ \ 
            & Location 		& $a,b$ 		& $::=$ 	& $\client$	& $|$		& $ \server$		 \\
            & Term 	& $L,M,N$	& $::=$ 	& $V$			& $|$		& $L \ M$	 & 	$|$		&	$M[A]$					\\
            &			&				& $|$	& $(M,N)$		& $|$ 	& $\pi_i(M)$ \\
            & Value    			& $V,W$		& $::=$ 	& $x$			& $|$		& $ \lambda^{a} x.M$ 	&		$|$		& $\Lambda\alpha.V$	& $|$		& $ (V,W)$
\\[0.1cm]
\multicolumn{13}{l}{\textbf{Semantics}} \\
 \multicolumn{13}{c}{
 \ \  
  	\mbox{
       \begin{prooftree}
       		\Infer0[(Abs)]{ \evalRPC{\lamL{b}{x}{M}}{a}{\lamL{b}{x}{M} }}
	\end{prooftree}
	\ \ \ 
	\begin{prooftree}
  		\Hypo{ \evalRPC{L}{a}{\lamL{b}{x}{N}} }
  		\Hypo{ \evalRPC{M}{a}{W} }
  		\Hypo{ \evalRPC{N\subst{W}{x}}{b}{V}  }
  		\Infer3[(App)]{ \evalRPC{L \ M}{a}{V}  }
	\end{prooftree} 
  	}
    } 
\\[0.5cm]
  \multicolumn{13}{c}{
  \ \ \ 
  	\mbox{
       \begin{prooftree}
       		\Infer0[(Tabs)]{ \evalRPC{\Lambda\alpha.V}{a}{\Lambda\alpha.V }}
	\end{prooftree}
	\ \ \ \ \ 
	\begin{prooftree}
  		\Hypo{ \evalRPC{M}{a}{\Lambda\alpha.V} }
  		\Infer1[(Tapp)]{ \evalRPC{M[B]}{a}{V\subst{B}{\alpha}}  }
	\end{prooftree} 
  	}
    } 
\\[0.5cm]
 \multicolumn{13}{c}{
 \ \  
  	\mbox{
       \begin{prooftree}
       	\Hypo{ \evalRPC{L}{a}{V} }
       	\Hypo{ \evalRPC{M}{a}{W} }
       	\Infer2[(Pair)]{ \evalRPC{(L,M)}{a}{(V,W) }}
	\end{prooftree}
	\ \ \ \ \ 
	\begin{prooftree}
  		\Hypo{ \evalRPC{M}{a}{(V_1,V_2)}  \ \ \ i\in\{1,2\}}
  		\Infer1[(Proj-i)]{ \evalRPC{\pi_i(M)}{a}{V_i}  }
	\end{prooftree} 
  	}
    } 
\end{tabular}
\caption{The typed RPC calculus extended with pairs}
\label{fig:rpcwithpair}
\end{figure}

\begin{figure}[h]
\begin{tabular}{l l l l l l l l l l l l l }
\multicolumn{13}{l}{\textbf{Types}} \\
   \ \ \  & Type & $A,B,C$ & $::=$ & $base$ & $ | $ & $ A\funL{a}B$  & $|$ & $\alpha$ & $|$ & $\forall\alpha.A$ & $ | $ & $ A \times B$ \\[0.1cm]
\multicolumn{13}{l}{\textbf{Typing Rules}} \\
&
 \multicolumn{12}{c}{  
  \mbox{
  	\begin{prooftree}
		\Hypo{  \tyenv(x)=A }
		\Infer[left label=(T-Var)]1{ \typing{\tyenv}{a}{x}{A} }
	\end{prooftree}
		\ \ \ 
	\begin{prooftree}
		\Hypo{ \typing{\tyenvExt{x}{A}}{b}{M}{B} }
		\Infer[left label=(T-Abs)]1{ \typing{\tyenv}{a}{\lamL{b}{x}{M}}{A\funL{b}B} } 
	\end{prooftree}
    } 
    }
\\[0.5cm] 
 &
 \multicolumn{12}{c}{  
  \mbox{
	\begin{prooftree}
		\Hypo{  \typing{\tyenv}{a}{L}{A\funL{b}B } }
		\Hypo{  \typing{\tyenv}{a}{M}{A} }
		\Infer[left label=(T-App)]2{ \typing{\tyenv}{a}{L \ M}{B}   }
	\end{prooftree}
    } 
    }
\\[0.5cm]
& 
 \multicolumn{12}{c}{  
 \mbox{
	\begin{prooftree}
		\Hypo{  \typing{\tyenv,\alpha}{a}{V}{A} }
		\Infer[left label=(T-Tabs)]1{ \typing{\tyenv}{a}{\Lambda\alpha.V}{\forall\alpha.A}   }
	\end{prooftree}
	\ \ \ \ \
	\begin{prooftree}
		\Hypo{  \typing{\tyenv}{a}{M}{\forall\alpha.A} }
		\Infer[left label=(T-Tapp)]1{ \typing{\tyenv}{a}{M[B]}{A\subst{B}{\alpha}}   }
	\end{prooftree}
	}
    }
\\[0.5cm] 
 & 
\multicolumn{12}{c}{  
  \mbox{
	\begin{prooftree}
		\Hypo{ \typing{\tyenv}{a}{L}{A} }
		\Hypo{ \typing{\tyenv}{a}{M}{B} }
		\Infer[left label=(T-Pair)]2{ \typing{\tyenv}{a}{(L,M)}{ A \times B }} 
	\end{prooftree}
    } 
    }
\\[0.5cm] 
&
\multicolumn{12}{c}{  
  \mbox{
	\begin{prooftree}
		\Hypo{ \typing{\tyenv}{a}{M}{A_1 \times A_2} \ \ \ i\in\{1,2\} }
		\Infer[left label=(T-Proj-i)]1{ \typing{\tyenv}{a}{\pi_i(M)}{ A_i } } 
	\end{prooftree}
    } 
    }
\end{tabular}
\caption{A type system for the extended typed RPC calculus}
\label{fig:rpctysystemwithpair}
\end{figure}

	We present a monomorphization translation of {\polyrpc} into {\rpc} as shown in Figure \ref{fig:translation}. The translation has two parts. The first part is for types. 
	The type translation of $base$ is $base$. 
	We translate a monomorphic function type with a location $a$ element-wise leaving the location as it is.  Any attempt to translate a function type with a location variable is undefined. We prevent this undesirable translation by systematically substituting $\client$ or $\server$  for every location variable that we meet during a translation and by starting from a closed type with no free location variables in it. The translation is designed to operate on closed types only with location constants all the time.
	The translation of polymorphic location types  is $\mono{\forall l.A} = (\mono{A\subst{\client}{l}}, \mono{A\subst{\server}{l}})$, which realizes the pair-based interpretation of location polymorphism as well as the invariant of location constants. 
	
	The second part is a translation of terms. 
	The translations of variable, $\lambda$-abstraction, $\lambda$-application, type abstraction, and type application are done element-wise. 
	The translation of location abstraction is defined as a pair-based interpretation with the first element for client and the second element for server: $\mono{\Lambda l.V} = (\mono{V\subst{\client}{l}}, \mono{V\subst{\server}{l}})$. The translation maintains an invariant that whenever a term $\Lambda l. V$ is closed, the translated term also remains closed. This is so by systematically substituting $\client$ or $\server$  for every occurrence of the location variable $l$ in the body $V$.
	The translation of location application with $\client$ or $\server$ is a projection of a pair that represents a polymorphic term. It is undefined when a location variable $l$ appears as an argument in the location application. Our design associates the client version of the polymorphic term with the first element of the pair, and it does the server version with the second element. Therefore, $\mono{ M[a] } = \pi_i( \mono{M} )$ where $i=1$ if $a=\client$ and   $i=2$ if $a=\server$.
	
	The monomorphization translation can be naturally extended for typing environments as: $\mono{\{\alpha_1,\cdots,\alpha_k,x_1:A_1, \cdots, x_n:A_n\}} = \{ \alpha_1,\cdots,\alpha_k,x_1:\mono{A_1}, \cdots, x_n:\mono{A_n}\}$. This translation is undefined if the typing environment contains any location variables or types associated with variables have any free location variables. 
	
	\begin{figure}[t]
\begin{tabular}{l l l l l l l l l }
\multicolumn{9}{l}{\textbf{Translation: types}} \\
\ \ 
	$\mono{base}=base$  & 
	$\mono{A \funL{a}B}=\mono{A}\funL{a}\mono{B}$ & 
	$\mono{\forall l.A} = \mono{ A\subst{\client}{l} } \times \mono{A\subst{\server}{l} }$
\\
\ \ 
	$\mono{\alpha}=\alpha$ &
	$\mono{\forall\alpha.A}=\forall\alpha.\mono{A}$
\\[0.1cm]
\multicolumn{9}{l}{\textbf{Translation: terms}} \\
\ \ 
	$\mono{x}=x$ &
	$\mono{\lamL{a}{x}{M}} = \lamL{a}{x}{\mono{M}}$ &
	$\mono{\Lambda l. V} = (\mono{V\subst{\client}{l}}, \mono{V\subst{\server}{l}})$ & 
\\[0.1cm]
\ \ 
	$\mono{L \ M} = \mono{L} \ \mono{M}$ &
	$\mono{M[\client]}  = \pi_1(\mono{M})$ &
	$\mono{M[\server]} = \pi_2(\mono{M})$
\\
\ \ 
	$\mono{\Lambda\alpha.V}=\Lambda\alpha.\mono{V}$ &
	$\mono{M[B]}=\mono{M}[\mono{B}]$
\end{tabular}
\caption{A translation of the polymorphic RPC calculus into the RPC calculus}
\label{fig:translation}
\end{figure}

\subsubsection{Examples of the monomorphization translation}
\label{subsubsec:monomorphizationexamples}

	Let us discuss a few examples of the translation.
	The simplest one is with an identity function as: 
		\begin{eqnarray*}
		\mono{\Lambda l. \lamL{l}{x}{x}} 
		&=& (\mono{(\lamL{l}{x}{x})\subst{\client}{l}}, \mono{(\lamL{l}{x}{x})\subst{\server}{l}})  \\
		&=& (\lamL{\client}{x}{\mono{x}}, \lamL{\server}{x}{\mono{x}}) \\
		&=& (\lamL{\client}{x}{x}, \lamL{\server}{x}{x}). 
		\end{eqnarray*}
	Next are more complex examples about map functions over lists. Suppose $M_{X}$ is a body of a map function parameterized by location arguments $X$ as: $case \ xs \ of \ \{ \ nil \Rightarrow nil; \ cons \ y \ ys \Rightarrow cons \ (f \ y) \ (map [X] \ f \ ys) \ \}$ under some syntactic extension with case expression and  with list constructors such as $nil$ and $cons$. Also assume that $[A]$ is the list type over element type $A$.
	
	Then a map function of type $\forall l. (A \funL{l} B) \funL{l} ([A] \funL{l} [B])$ can be defined by a recursive let construct as:
	 \[
	\texttt{ letrec } \ map = \Lambda{l}.\lamL{l}{f}{ \lamL{l}{xs}{ M_{l} } } \ \texttt{ in } \ \cdots
	\]
	The translation of this map function proceeds as 
	\begin{eqnarray*}
	\mono{map} &=& ( \mono{ (\lamL{l}{f}{ \lamL{l}{xs}{ M_{l}} }) \subst{\client}{l} } ,  \mono{ (\lamL{l}{f}{ \lamL{l}{xs}{ M_{l}} }) \subst{\server}{l} } ) \\
	&=& ( \lamL{\client}{f}{ \mono{\lamL{\client}{xs}{ M_{\client}} } } , \lamL{\server}{f}{ \mono{\lamL{\server}{xs}{ M_{\server}} } } ) \\
	&=& ( \lamL{\client}{f}{ \lamL{\client}{xs}{ \mono{M_{\client}} } } , \lamL{\server}{f}{ \lamL{\server}{xs}{ \mono{M_{\server}} } } ).
	\end{eqnarray*} 
	To finish the translation of a recursive function such as $map$, we will ultimately need something like tying a knot  over $map[X]$ in the recursive call $map[X] \ f \ ys$. 
	In the first element of the translated pair, $map[\client]$ in $M_{\client}$ is translated as $\pi_1(\mono{map})$, and,
	in the second element, $map[\server]$ in $M_{\server}$ is translated as $\pi_2(\mono{map})$. 
	Here, instead of repeating $\mono{map}$ from the beginning, the translation stops here, and it just refers to the result of the beginning translation of $map$ lazily to construct a recursive function. Then the translation continues over the rest of the body. As a result, we will get a pair of the client map function and the server map function. 
	
	Let us consider another map function of type $\forall l_1.\forall l_2.\forall l_3. (A \funL{l_3} B) \funL{l_1} ([A] \funL{l_2} [B])$ can be defined as:
	 \[
	 \texttt{ letrec } \ map = \Lambda{l_1}.\Lambda{l_2}.\Lambda{l_3}.map_0  \ \texttt{ in } \ \cdots
	\]
	where $map_0$ is $ \lamL{l_1}{f}{ \lamL{l_2}{xs}{ M_{l_1 l_2 l_3} }}$ and multiple location applications $M[\Loc_1 \cdots \Loc_k]$ is defined as $M[\Loc_1]\cdots[\Loc_k]$. The translation $\mono{map}$ becomes
	\[ 	
	\left(
	\mbox{
		\begin{tabular}{l }
		$\left( 
			\mbox{
			\begin{tabular}{l }
			$( \mono{map_0\{\client/l_1,\client/l_2,\client/l_3\}},  \mono{map_0\{\client/l_1,\client/l_2,\server/l_3\}} ),$ \\
			$( \mono{map_0\{\client/l_1,\server/l_2,\client/l_3\}},  \mono{map_0\{\client/l_1,\server/l_2,\server/l_3\}} )$
			\end{tabular}
			}
		\right),$ \\ 
		$\left( 
			\mbox{
			\begin{tabular}{l }
			$( \mono{map_0\{\server/l_1,\client/l_2,\client/l_3\}},  \mono{map_0\{\server/l_1,\client/l_2,\server/l_3\}} ),$ \\
			$( \mono{map_0\{\server/l_1,\server/l_2,\client/l_3\}},  \mono{map_0\{\server/l_1,\server/l_2,\server/l_3\}} )$
			\end{tabular}
			}
		\right)$
		\end{tabular}
	}
	\right)
	\]
	which looks like a full binary tree with eight leaves. As you see, the monomorphization translation could generate exponentially many instances. 

	Let us discuss this problem in two ways. 
	First, we believe that the first version of the map function would be typical in multi-tier programs by annotating the same location variable on all function types of a polymorphic type.  For example, the Links compiler already performs something similar to the monomorphization for the standard library such as list utilities, which is common between the client and the server programs. It does code duplication for each common function as done with the first version, not for each argument of the functions as done with the second version. Our translation method can be regarded as a type-theoretic basis for the ad hoc code duplication in the Links compiler. In addition to this, the polymorphic RPC calculus offers a type-based method to control code duplication in a fine-grained way. 
	
	Note that the Eliom \cite{Radanne2017,Radanne:2018:TWP:3184558.3185953} compiler also uses code duplication to support a kind of location polymorphism, what they call {\it shared sections} in the expression-level and {\it mixed modules} in the module-level. In the related work, we will discuss these features in detail. 

  
	Second, we could take an alternative approach  of dynamically passing locations even in separate client and server programs, which is radically different from the proposed static compilation of location polymorphism. This comes from the idea of compiling polymorphism using intensional type analysis on types arranged dynamically \cite{Harper:1995:CPU:199448.199475}. It would solve the potential code explosion problem completely at the expense of runtime cost for maintaining location information and deciding whether to do local or remote calls dynamically. We will discuss this approach later.
	

\subsubsection{Type correctness}

	Now we are ready to formulate the type correctness of the monomorphization translation as Theorem \ref{theorem:typecorrectness} saying that every closed well-typed term in {\polyrpc} is translated to a well-typed term in {\rpc}. We prove this theorem by Lemma \ref{lemma:typecorrectness}, which generalizes the formulation of the theorem to allow to have non-empty typing environments but only with free variables and free type variables, saying if \ $\typing{\tyenv}{a}{M}{A}$  then $\typing{\mono{\tyenv}}{a}{\mono{M}}{\mono{A}}$. 	
	This lemma should maintain an invariant that the translation always sees type environments, terms, and types only with  location constants. This invariant is sufficient to make sure that the three translations in the lemma are well-defined. 
	
	For this invariant, we need to define a set of free location variables for terms, $flv(M)$, as well. 
	Here is a definition:	
	\begin{eqnarray*}
	flv(x) &=& \emptyset \\
	flv(\lamL{\Loc}{x}{M}) &=& flv(\Loc)\cup flv(M) \\
	flv(\Lambda \alpha. V) &=& flv(V) \\
	flv(\Lambda l. V) &=& flv(V)\backslash \{l\} \\
	flv(L \ M) &=& flv(L) \cup flv(M) \\
	flv(M [A]) &=& flv(M) \cup flv(A) \\
	flv(M [\Loc]) &=& flv(M) \cup flv(\Loc)
	\end{eqnarray*}
	
	The proof of Lemma \ref{lemma:typecorrectness} is done by induction on the height of a typing derivation for a term that the translation is to apply to. The invariant of location constants plays a role to ensure that the translation is well-defined for each subterm. The inductive proof works for all the cases except the case of (T-Tapp) and the case of (T-Labs). To finish the case of (T-Tapp), we use Lemma \ref{lemma:typesubstitutionmono}. This lemma  turns a  type substituted after a translation, which is obtained by the inductive hypothesis, into a translation of a substituted type, which is necessary to prove the case. 
	
	The case of (T-Labs) is more interesting. Dealing with $\Lambda l.V$, it needs two hypotheses over $V\subst{\client}{l}$ and $V\subst{\server}{l}$, not a hypothesis over $V$ that is required by the plain inductive proof but violates the invariant because of the potential occurrence of a free location variable $l$. 
	Lemma \ref{lemma:locationpolymorphism} allows the proof to use the two hypotheses as follows. 
	
	According to Lemma \ref{lemma:locationpolymorphism}, given a typing derivation possibly with the use of free location variables, one can build a new typing derivation with the similar structure and the same height by substituting any of $\client$ or $\server$ for the occurrences of the free location variables in the typing derivation. 
	This lemma accounts for location polymorphism in the polymorphic RPC calculus, which is why we call it the location polymorphism lemma. 
	
	The proof of Lemma \ref{lemma:typesubstitutionmono} and Lemma \ref{lemma:locationpolymorphism}  is available in the appendix.

\begin{lemma}[Type substitution over type under monomorphization] $\mono{A}\subst{\mono{B}}{\alpha}$  $=$ $\mono{A\subst{B}{\alpha}}$.
\label{lemma:typesubstitutionmono}
\end{lemma}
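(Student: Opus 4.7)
The plan is to prove this identity by structural induction on the type $A$, noting that the statement is supported by the standing invariant that the monomorphization translation is only applied to types whose free location variables set is empty (so in particular $flv(B) = \emptyset$).

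First I would dispatch the base cases. For $A = base$ both sides reduce to $base$ by the definitions of $\mono{\cdot}$ and of type substitution. For $A = \alpha$ both sides reduce to $\mono{B}$. For $A = \beta$ with $\beta$ a type variable distinct from $\alpha$, both sides reduce to $\beta$. Next, for the arrow case $A = A_1 \funL{a} A_2$, I push $\subst{\mono{B}}{\alpha}$ through the arrow constructor on the left and $\subst{B}{\alpha}$ through the arrow on the right, apply the induction hypothesis to $A_1$ and $A_2$, and reassemble. The type-polymorphism case $A = \forall\beta.A_0$ (with $\beta$ fresh by the usual Barendregt convention, so $\beta \neq \alpha$ and $\beta \notin ftv(B)$) is analogous: push the substitution under the $\forall$, apply the IH to $A_0$, and reassemble.

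The only case with any content is $A = \forall l.A_0$. On the left, unfolding $\mono{\forall l.A_0}$ via the pair-based definition and then distributing $\subst{\mono{B}}{\alpha}$ through the product gives $\mono{A_0\subst{\client}{l}}\subst{\mono{B}}{\alpha} \times \mono{A_0\subst{\server}{l}}\subst{\mono{B}}{\alpha}$, to which I apply the induction hypothesis (twice) to obtain $\mono{A_0\subst{\client}{l}\subst{B}{\alpha}} \times \mono{A_0\subst{\server}{l}\subst{B}{\alpha}}$. On the right, unfolding $(\forall l.A_0)\subst{B}{\alpha}$ to $\forall l.(A_0\subst{B}{\alpha})$ and then applying $\mono{\cdot}$ yields $\mono{A_0\subst{B}{\alpha}\subst{\client}{l}} \times \mono{A_0\subst{B}{\alpha}\subst{\server}{l}}$. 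Matching these two expressions reduces the case to the commutativity identity $A_0\subst{a}{l}\subst{B}{\alpha} = A_0\subst{B}{\alpha}\subst{a}{l}$ for $a \in \{\client,\server\}$.

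The main obstacle, then, is exactly this commutativity, which I would establish as a short auxiliary induction on $A_0$. Substitutions of different sorts (a type for a type variable versus a location for a location variable) interact nontrivially only at the leaf $A_0 = \alpha$, where the identity becomes $B\subst{a}{l} = B$, which holds because the invariant guarantees $l \notin flv(B)$; all other cases are a straightforward push-through-the-constructor, relying on freshness of bound variables where needed. Once this auxiliary lemma is in hand, the $\forall l$ case closes and the induction is complete.
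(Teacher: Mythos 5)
Your proof is correct and follows essentially the same route as the paper's: structural induction on $A$, with the $\forall l.A_0$ case handled by unfolding to the product, applying the induction hypothesis twice, and commuting the location substitution past the type substitution. The only difference is that you explicitly isolate and justify the commutation identity $A_0\subst{a}{l}\subst{B}{\alpha} = A_0\subst{B}{\alpha}\subst{a}{l}$ (via $l \notin flv(B)$), a step the paper's derivation uses silently; this is a point in your favour rather than a divergence.
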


\begin{lemma}[Location polymorphism] Suppose $\tyenv=\{l_1,\cdots, l_n\}\cup\tyenv_0$  such that $\tyenv_0$ has no location variables. If \ $\typing{\tyenv}{\Loc}{M}{A}$ then $(\typing{\tyenv_0}{\Loc}{M}{A})\{a_1/l_1,\cdots, a_n/l_n\}$ for any $a_1,\cdots, a_n$ with the same height.
\label{lemma:locationpolymorphism}
\end{lemma}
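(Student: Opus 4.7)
The plan is to proceed by induction on the height $h$ of the derivation $\typing{\tyenv}{\Loc}{M}{A}$, showing in each case that uniformly substituting $a_i$ for $l_i$ throughout the derivation yields a valid derivation of $(\typing{\tyenv_0}{\Loc}{M}{A})\{a_1/l_1,\ldots,a_n/l_n\}$ of the same height. Write $\sigma=\{a_1/l_1,\ldots,a_n/l_n\}$ for brevity, and note that $\sigma$ acts pointwise on environments, terms, types, and on the annotating location $\Loc$ according to the substitution definitions given earlier in the paper; in particular, $\tyenv_0\sigma=\tyenv_0$ since $\tyenv_0$ contains no location variables.

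First I would dispatch the easy structural cases. For (T-Var), applying $\sigma$ leaves the variable $x$ and the looked-up type $\tyenv_0(x)$ unchanged, and the premise $\tyenv(x)=A$ is preserved; the derivation height stays at $1$. For (T-App), (T-Tapp), (T-Pair-style) combinators, (T-Abs), and (T-Tabs), I apply the induction hypothesis to each premise, using the fact that $\sigma$ distributes over $\funL{\Loc}$, over $\forall\alpha.A$ (which does not bind a location variable), and over the term-level constructors. For (T-Tapp) the extra ingredient is that type substitution commutes with location substitution, i.e.\ $(A\subst{B}{\alpha})\sigma = (A\sigma)\subst{B\sigma}{\alpha}$, which follows by a straightforward structural induction on $A$ (assuming the usual freshness convention on $\alpha$). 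The (T-Lapp) case is similar: $\sigma$ distributes over the location application by the substitution clauses $(M[\Loc'])\sigma = (M\sigma)[\Loc'\sigma]$ and $(A\subst{\Loc'}{l})\sigma = (A\sigma)\subst{\Loc'\sigma}{l}$, where we assume $l\notin\{l_1,\ldots,l_n\}$ by the Barendregt-style freshness convention on bound location variables.

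The main obstacle is the (T-Labs) case, because the rule extends the environment with a fresh location variable $l$, and we must be careful that $\sigma$ does not capture or accidentally rewrite $l$. The premise is $\typing{\tyenvExtWith{l}}{\Loc}{V}{A}$, where by the well-formedness convention $l$ is fresh for $\tyenv$, and in particular $l\notin\{l_1,\ldots,l_n\}$ (alpha-renaming otherwise). Then $\tyenvExtWith{l}$ fits the shape of the lemma with location variables $\{l_1,\ldots,l_n,l\}\cup\tyenv_0$, and applying the induction hypothesis with the substitution $\sigma$ extended to be the identity on $l$ yields $\typing{\tyenv_0,l}{\Loc\sigma}{V\sigma}{A\sigma}$ at the same height. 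Re-applying (T-Labs) gives $\typing{\tyenv_0}{\Loc\sigma}{(\Lambda l.V)\sigma}{(\forall l.A)\sigma}$, using that $\sigma$ commutes with the binder $\Lambda l.(-)$ and $\forall l.(-)$ since $l$ is fresh. The height increases by exactly one, matching the original derivation.

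Finally I would remark that throughout the induction the substitution $\sigma$ is applied uniformly to every judgment in the derivation tree, so the tree shape and therefore the height are preserved verbatim; the only nonstructural work is the substitution lemmas for commuting $\sigma$ with $\subst{B}{\alpha}$ and $\subst{\Loc'}{l'}$, both of which are routine given the freshness assumptions on bound type and location variables.
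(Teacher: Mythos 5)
Your proof follows essentially the same route as the paper's, which simply asserts that the result follows by a straightforward induction on the height of the typing derivation; you supply the case analysis that the paper omits. The only point worth making explicit is that in the (T-Labs) case the induction hypothesis must be read in the slightly generalized form where the substitution acts only on a designated subset of the environment's location variables (leaving the freshly bound $l$ untouched rather than requiring every location variable to be replaced by a constant), which is exactly what your ``extend $\sigma$ to be the identity on $l$'' step amounts to.
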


\begin{lemma} Given $flv(\tyenv) \cup flv(M) \cup flv(A) = \emptyset$, if \ $\typing{\tyenv}{a}{M}{A}$  then $\typing{\mono{\tyenv}}{a}{\mono{M}}{\mono{A}}$.
\label{lemma:typecorrectness}
\end{lemma}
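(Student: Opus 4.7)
The plan is to proceed by induction on the height of the typing derivation for $\typing{\tyenv}{a}{M}{A}$, maintaining the invariant that $flv(\tyenv)$, $flv(M)$, $flv(A)$, and (implicitly) the location annotation on the judgment are all free of location variables. This invariant is what guarantees every application of $\mono{\cdot}$ in the proof is well-defined, and the well-formedness assumption on typing derivations ensures it propagates to each subderivation: in every rule except (T-Labs), any location variables mentioned in the premises either already occur in the conclusion or originate from $\tyenv$, which by hypothesis contains only constants.

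The cases (T-Var), (T-Abs), (T-App), and (T-Tabs) are routine: apply the IH to each subderivation and reassemble via the corresponding typing rule of the target calculus, using the componentwise definitions of $\mono{\cdot}$. For (T-Tapp), after applying the IH to the premise $\typing{\tyenv}{a}{M}{\forall\alpha.A}$ we obtain $\typing{\mono{\tyenv}}{a}{\mono{M}}{\forall\alpha.\mono{A}}$; the target's (T-Tapp) with type $\mono{B}$ then yields $\mono{A}\subst{\mono{B}}{\alpha}$, which is rewritten as $\mono{A\subst{B}{\alpha}}$ using Lemma \ref{lemma:typesubstitutionmono}. The case (T-Lapp) relies on the invariant forcing $\Loc'$ to be a location constant $b$ (otherwise $b$ would appear free in the conclusion type $A_0\subst{\Loc'}{l}$); the IH gives $\mono{M}$ the pair type $\mono{A_0\subst{\client}{l}} \times \mono{A_0\subst{\server}{l}}$, and a single application of (T-Proj-i) with $i=1$ or $i=2$ (chosen according to $b$) delivers the required component type $\mono{A_0\subst{b}{l}}$.

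The main obstacle is the case (T-Labs) for $\Lambda l.V$: a naive induction would apply the IH to the premise $\typing{\tyenvExtWith{l}}{\Loc}{V}{A}$, but this judgment has $l$ free in its environment, breaking the invariant and making $\mono{\tyenv, l}$ undefined. The remedy is Lemma \ref{lemma:locationpolymorphism}, which converts this single premise into two derivations of the \emph{same} height, namely $\typing{\tyenv}{\Loc}{V\subst{\client}{l}}{A\subst{\client}{l}}$ and $\typing{\tyenv}{\Loc}{V\subst{\server}{l}}{A\subst{\server}{l}}$, both satisfying the invariant since $l$ was the only location variable that could have been free in $V$ or $A$. Applying the IH to each and combining via the target's (T-Pair) rule yields $\typing{\mono{\tyenv}}{\Loc}{(\mono{V\subst{\client}{l}}, \mono{V\subst{\server}{l}})}{\mono{A\subst{\client}{l}} \times \mono{A\subst{\server}{l}}}$, which is exactly $\typing{\mono{\tyenv}}{\Loc}{\mono{\Lambda l.V}}{\mono{\forall l.A}}$ by the translation definitions. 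Inducting on \emph{height} rather than structure is essential here, since the two judgments produced by Lemma \ref{lemma:locationpolymorphism} are not structural subderivations of the original.
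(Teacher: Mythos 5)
Your proposal is correct and follows essentially the same route as the paper's proof: induction on the height of the typing derivation, Lemma \ref{lemma:typesubstitutionmono} to commute translation with type substitution in the (T-Tapp) case, and Lemma \ref{lemma:locationpolymorphism} to replace the (T-Labs) premise by two same-height derivations instantiated at $\client$ and $\server$ before combining them with (T-Pair). One trivial remark: in (T-Lapp) the cleanest justification that $\Loc'$ is a constant is $flv(\Loc')\subseteq flv(L[\Loc'])=flv(M)=\emptyset$ (as the paper argues), rather than its occurrence in the conclusion type, since $l$ need not occur free in $A_0$; this does not affect the argument.
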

\begin{proof}
	We prove this lemma by induction on the height of the typing derivation. A base case uses (T-Var), and  inductive cases involve the other kinds of typing rules. 
	
{\bf (T-Var):} 
	$\mono{\tyenv}$ and $\mono{A}$ are well-defined since $FLV(\tyenv)=\emptyset$ and $FLV(A)=\emptyset$. By the typing derivation, $\tyenv(x)=A$. This implies (1):$\mono{\tyenv}(x)=\mono{A}$. By (T-Var) with (1), $\typing{\mono{\tyenv}}{a}{x}{\mono{A}}$, which proves this case since $\mono{x}=x$. 
	
{\bf (T-Abs):}
	$\typing{\tyenv}{a}{\lamL{\Loc}{x}{M_0}}{A_0\funL{\Loc} B_0}$, (1):$\typing{\tyenvExt{x}{A_0}}{\Loc}{M_0}{B_0}$ where $M=\lamL{\Loc}{x}{M_0}$ and $A=A_0\funL{\Loc} B_0$. It is easy to verify (2):$flv(\tyenvExt{x}{A_0})\cup flv(M_0)\cup flv(B_0)=\emptyset$ by the hypothesis on free location variables.  
	
	I.H. can be applied to to (1) and (2) since $\Loc=b$ for some $b$; $flv(A_0\funL{\Loc} B_0)=\emptyset$ implies $flv(\Loc)=\emptyset$. This implies (3):$\typing{ \mono{\tyenvExt{x}{A_0}} }{\Loc}{ \mono{M_0} }{ \mono{B_0} }$. 
	
	By (T-Abs) with (3), (4):$\typing{ \mono{\tyenv} }{a}{\lamL{\Loc}{x}{ \mono{M_0} }}{ \mono{A_0}\funL{\Loc}\mono{B_0} }$. By the definition of the translation, (4) implies $\typing{ \mono{\tyenv} }{a}{\mono{ \lamL{\Loc}{x}{ M_0 } } }{ \mono{A_0\funL{\Loc}B_0} }$ since $\Loc=b$.

{\bf (T-App):}
	$\typing{\tyenv}{a}{L \ M_0}{A}$, (1):$\typing{\tyenv}{a}{L}{B\funL{\Loc'}A}$, and (2):$\typing{\tyenv}{a}{M_0}{B}$ where $M=L \ M_0$. 
	
	By well-formed typing judgments, $flv(B)=\emptyset$ and $flv(\Loc')=\emptyset$. 
	
	By the argument stated above, we can safely assume (3):$flv(\tyenv)\cup flv(L)\cup flv(B\funL{\Loc'}A)=\emptyset$. Also, we have (4):$flv(\tyenv)\cup flv(M_0)\cup flv(B)=\emptyset$.
	
	By applying I.H. to (1) and (3), (5):$\typing{ \mono{\tyenv} }{a}{ \mono{L} }{ \mono{B\funL{\Loc'}A} }$.
	
	By applying I.H. to (2) and (4), (6):$\typing{ \mono{\tyenv} }{a}{ \mono{M_0} }{ \mono{B} }$.
	
	Since $flv(\Loc')=\emptyset$,  $\Loc'=b$ for some $b$. Therefore, $\mono{B\funL{\Loc'}A}= \mono{B}\funL{\Loc'} \mono{A}$.  By (T-App) with (5) and (6), $\typing{ \mono{\tyenv} }{a}{ \mono{L} \mono{M_0} }{ \mono{A} }$, which proves this case by $\mono{M}=\mono{L M_0}=\mono{L} \mono{M_0}$.

{\bf (T-Tabs):}
	$\typing{\tyenv}{a}{\Lambda\alpha.V_0}{\forall\alpha.A_0}$, and (1):$\typing{\tyenvExtWith{\alpha}}{a}{V_0}{A_0}$ where $M=\Lambda\alpha.V_0$ and $A=\forall\alpha.A_0$.

	(2):$flv(\tyenvExtWith{\alpha}) \cup flv(V_0) \cup flv(A_0)=\emptyset$. 
	
	By applying I.H. to (2), $\typing{ \mono{\tyenvExtWith{\alpha}} }{a}{\mono{V_0}}{\mono{A_0}}$, which is (3):$\typing{\mono{\tyenv},\alpha}{a}{\mono{V_0}}{\mono{A_0}}$ by the def. of the translation.
	
	By (T-Tabs) with (3), $\typing{\mono{\tyenv}}{a}{\Lambda\alpha.\mono{V_0}}{\forall\alpha.\mono{A_0}}$. This is the same as $\typing{\mono{\tyenv}}{a}{\mono{\Lambda\alpha.V_0}}{\mono{\forall\alpha.A_0}}$ by the def. of the translation, which proves this case. 
	
{\bf (T-Tapp):} 
	$\typing{\tyenv}{a}{M_1[B]}{A_0\subst{B}{\alpha}}$, and (1):$\typing{\tyenv}{a}{M_1}{\forall\alpha.A_0}$ where $M=M_1[B]$ and $A=A_0\subst{B}{\alpha}$. 
	
	(2):$flv(\tyenv) \cup flv(M_1) \cup flv(\forall\alpha.A_0) = \emptyset$. 
	
	By I.H. with (2), $\typing{ \mono{\tyenv} }{a}{ \mono{M_1} }{ \mono{\forall\alpha.A_0} }$, which is (3):$\typing{ \mono{\tyenv} }{a}{ \mono{M_1} }{ \forall\alpha.\mono{A_0} }$ by the def. of the translation. 
	
	By (T-Tapp) with(3),(4):$\typing{ \mono{\tyenv} }{a}{ \mono{M_1}[ \mono{B} ] }{ \mono{A_0}\subst{\mono{B}}{\alpha} }$. 
	
	By the lemma \ref{lemma:typesubstitutionmono} with (4), (5):$\typing{ \mono{\tyenv} }{a}{ \mono{M_1}[ \mono{B} ] }{ \mono{A_0\subst{B}{\alpha}} }$, which proves this case by the def. of the translation.
	
{\bf (T-Labs):}
	$\typing{\tyenv}{a}{\Lambda l.V_0}{\forall l.A_0}$, and (1):$\typing{\tyenvExtWith{l}}{a}{V_0}{A_0}$ where $M=\Lambda l.V_0$ and $A=\forall l.A_0$. 
	
	By applying the lemma \ref{lemma:locationpolymorphism} to (1), (2): $\typing{\tyenv}{a}{V_0\subst{a}{l}}{A_0\subst{a}{l}}$ for all $a$. Also, (3):$flv(\tyenv)\cup flv(V_0\subst{a}{l})\cup flv(A_0\subst{a}{l})=\emptyset$ since $l$ is the only free location variable. 
	
	By applying I.H. to (2) and (3), (4):$\typing{ \mono{\tyenv} }{a}{ \mono{V_0\subst{a}{l}} }{ \mono{A_0\subst{a}{l}} }$ for $a=\client,\server$.
	
	By (T-Pair) in the type system for {\rpc} with two premises, one with (4) and $a=\client$ and the other with (4) and $a=\server$, $\typing{ \mono{\tyenv} }{a}{ ( \mono{V_0\subst{\client}{l}}, \mono{V_0\subst{\server}{l}} ) }{  \mono{A_0\subst{\client}{l}} \times \mono{A_0\subst{\server}{l}}  }$. By the definition of the translation, $\typing{ \mono{\tyenv} }{a}{ \mono{\Lambda l.V_0} }{ \mono{\forall l.A_0} }$, which proves this case. 

{\bf (T-Lapp):}
	$\typing{\tyenv}{a}{L[\Loc']}{A_0\subst{\Loc'}{l}}$, and (1):$\typing{\tyenv}{a}{L}{\forall l.A_0}$ where $M=L[\Loc']$, $A=A_0\subst{\Loc'}{l}$. $\Loc'=b$ for some $b$ since $flv(\Loc')=\emptyset$ by the hypothesis on free location variables. Also, (2):$flv(\tyenv)\cup flv(L)\cup flv(\forall l.A_0)=\emptyset$ since $flv(A_0\subst{\Loc'}{l})=\emptyset$ and so $l$ can be the only free location variable in $A_0$. 
	
	By applying I.H. to (1) and (2), $\typing{ \mono{\tyenv} }{a}{ \mono{L} }{ \mono{\forall l.A_0} }$, which is (3): $\typing{ \mono{\tyenv} }{a}{ \mono{L} }{  \mono{A_0\subst{\client}{l}} \times \mono{A_0\subst{\server}{l}}  }$ by the definition of the translation.
	
	By (T-Proj-i) with (3), $\typing{ \mono{\tyenv} }{a}{ \pi_i(\mono{L}) }{ \mono{A_0\subst{\Loc'}{l}} }$ where  $i=1$ if $\Loc'=\client$, and $i=2$ if $\Loc'=\server$, which is (4):$\typing{ \mono{\tyenv} }{a}{ \mono{L[\Loc']} }{ \mono{A_0\subst{\Loc'}{l}} }$ by the definition of the translation. (4) proves this case.
\end{proof}

\begin{theorem}[Type correctness of the monomorphization translation] For a closed term $M$, if \ $\typing{\emptyset}{a}{M}{A}$ then $\typing{\emptyset}{a}{\mono{M}}{\mono{A}}$.
\label{theorem:typecorrectness}
\end{theorem}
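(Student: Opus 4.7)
The plan is to obtain Theorem \ref{theorem:typecorrectness} as an immediate corollary of Lemma \ref{lemma:typecorrectness}, which is already stated and proved in the excerpt as the generalization of the theorem to arbitrary well-formed typing environments. The lemma requires the side condition $flv(\tyenv) \cup flv(M) \cup flv(A) = \emptyset$, so the first step is to verify that this condition is satisfied when we instantiate $\tyenv = \emptyset$.

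Concretely, I would argue as follows. By assumption, $M$ is a closed term, and $\typing{\emptyset}{a}{M}{A}$ is a well-formed typing judgment in the sense adopted earlier in the paper, which demands that $fv(M)$, $ftv(M) \cup ftv(A)$, and $flv(M) \cup flv(a) \cup flv(A)$ are all contained in $dom(\emptyset) = \emptyset$. In particular, $flv(M) = \emptyset$ and $flv(A) = \emptyset$, and trivially $flv(\emptyset) = \emptyset$. Hence the hypothesis of Lemma \ref{lemma:typecorrectness} is satisfied.

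Applying the lemma then yields $\typing{\mono{\emptyset}}{a}{\mono{M}}{\mono{A}}$. By the definition of the monomorphization translation on typing environments, $\mono{\emptyset} = \emptyset$, so this is exactly $\typing{\emptyset}{a}{\mono{M}}{\mono{A}}$, which proves the theorem.

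There is no genuine obstacle here, since all the real work—in particular the delicate case for (T-Labs) that relies on the location polymorphism lemma, and the (T-Tapp) case that relies on the type substitution lemma for the translation—has already been discharged inside Lemma \ref{lemma:typecorrectness}. The only thing to be careful about is to justify the invariant of ``no free location variables'' purely from the well-formedness convention on typing judgments combined with the closedness assumption on $M$, so that Lemma \ref{lemma:typecorrectness} genuinely applies in the empty-environment case.
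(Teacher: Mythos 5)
Your proposal is correct and matches the paper's own proof, which likewise derives the theorem as an immediate instance of Lemma \ref{lemma:typecorrectness} after observing that closedness and well-formedness give $flv(\emptyset) \cup flv(M) \cup flv(A) = \emptyset$. Your additional remark that $\mono{\emptyset} = \emptyset$ is a harmless (and slightly more careful) elaboration of the same argument.
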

\begin{proof} This theorem is proved by the lemma \ref{lemma:typecorrectness} as $M$ is a top-level closed term and so $flv(\emptyset) \cup flv(M) \cup flv(A) = \emptyset$.
\end{proof}


\subsubsection{Semantic correctness}

	This time we formulate a stronger property, the semantic correctness of the monomorphization translation, as Theorem \ref{thm:semanticcorrectness}: for a well-typed closed term, the evaluation of the term in {\polyrpc} is preserved in {\rpc} under the translation. We prove this theorem by induction on the height of the evaluation derivation for the closed well-typed term. Two of the three base cases are (Abs) and (Tabs), which are proved immediately. The case (App) of the proof uses Lemma \ref{lemma:substitutionmono} saying the translation of a substitution is the same as a substitution of the translated terms.  Similarly, the case (Tapp) uses Lemma \ref{lemma:typesubstitutiontermmono} to turn a translated term substituted with a translated type into a translation of a term substituted with a type. The most interesting cases are (Labs) and (Lapp).
	
	(Labs) is the remaining one of the three base cases, and it needs to prove that if $\evalRPC{\Lambda l.V}{a}{\Lambda l.V}$, then $\evalRPC{\mono{\Lambda l.V}}{a}{\mono{\Lambda l.V}}$. The proof in this case needs the second induction because $\mono{\Lambda l.V}$ becomes translated into a possibly deeply nested pair and the evaluation derivation of the translated term grows accordingly. Generally, $\Lambda l.V$ as a closed value is in the form of $\Lambda l. \Lambda l_1. \cdots \Lambda l_n. V_n$ where $V_n=\lamL{\Loc}{x}{M_0}$ or $V_n=\Lambda\alpha.W$ due to the syntactic restriction on $V$. When $n$ is zero, the translated term is a flat pair. When $n$ is greater than zero, the translated pair looks more or less like a full binary tree of height $n$. By the second induction, the proof constructs an evaluation derivation of the same as the structure of the translated term of this case. 
	
	In (Lapp), we prove that if $\evalRPC{L[b]}{a}{V_0\subst{b}{l}}$ then $\evalRPC{\mono{L[b]}}{a}{\mono{V_0\subst{b}{l}}}$ under the assumption that $\evalRPC{L}{a}{\Lambda l. V_0}$. The proof involves two subcases, one with $b=\client$ and the other with $b=\server$, both of which can be proved with no difficulty. 

	The proof of Lemma \ref{lemma:substitutionmono} and Lemma \ref{lemma:typesubstitutiontermmono} is available in the appendix.

\begin{lemma}[Substitution under  monomorphization] $\mono{M}\subst{\mono{W}}{x} = \mono{M\subst{W}{x}}$. 
\label{lemma:substitutionmono}
\end{lemma}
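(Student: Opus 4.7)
The plan is to prove this by structural induction on $M$, checking that each clause of the monomorphization translation commutes with the term substitution $\subst{\mono{W}}{x}$. Throughout, I would maintain the standing invariant that $M$ and $W$ contain no free location variables, so every location annotation encountered in $M$ is a constant and every translation appearing in the statement is well-defined. Most cases are then mechanical. If $M=x$ both sides reduce to $\mono{W}$; if $M$ is any other variable, both sides are unchanged. For $M=\lamL{a}{y}{M_0}$, $M=L\,N$, $M=M_0[B]$, $M=\Lambda\alpha.V$, and $M=M_0[a]$ with $a\in\{\client,\server\}$, the translation pushes homomorphically through the top-level constructor (producing a lambda, an application, a type application, a type abstraction, or a projection, respectively), and the inductive hypothesis on the immediate subterms closes the case after the usual $\alpha$-renaming so that $y$ and $\alpha$ avoid $x$ and the free variables of $W$.

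The only genuinely interesting case is $M=\Lambda l.V$. By the definition of the translation,
\[ \mono{\Lambda l.V}\subst{\mono{W}}{x} \;=\; \bigl(\mono{V\subst{\client}{l}}\subst{\mono{W}}{x},\ \mono{V\subst{\server}{l}}\subst{\mono{W}}{x}\bigr), \]
whereas on the other side
\[ \mono{(\Lambda l.V)\subst{W}{x}} \;=\; \bigl(\mono{V\subst{W}{x}\subst{\client}{l}},\ \mono{V\subst{W}{x}\subst{\server}{l}}\bigr). \]
Applying the inductive hypothesis componentwise rewrites the first display as the pair $\bigl(\mono{V\subst{\client}{l}\subst{W}{x}},\,\mono{V\subst{\server}{l}\subst{W}{x}}\bigr)$, so matching the two displays reduces to establishing the identity $V\subst{a}{l}\subst{W}{x} = V\subst{W}{x}\subst{a}{l}$ for $a\in\{\client,\server\}$.

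The main obstacle is therefore this auxiliary commutation of a term substitution with a location substitution targeting disjoint sorts of variables. I would discharge it as a short separate sublemma by a straightforward structural induction on $V$, using the invariant that $W$ contains no free occurrence of $l$ (which holds by $\alpha$-conversion of the bound $l$ in $\Lambda l.V$, and trivially in the intended use of this lemma inside the semantic correctness theorem, where $W$ is a closed value). With that sublemma in hand, the $\Lambda l.V$ case closes immediately, and the remaining cases of the main induction are routine bookkeeping.
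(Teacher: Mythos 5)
Your proposal is correct and follows essentially the same route as the paper's proof: a structural induction on $M$ in which every case except $\Lambda l.V$ is a homomorphic push-through, and the $\Lambda l.V$ case reduces to commuting the term substitution with the location substitutions $\subst{\client}{l}$ and $\subst{\server}{l}$. You are in fact slightly more careful than the paper, which performs that commutation step silently inside its equation chain, whereas you isolate it as an explicit sublemma with the (correct) side condition that $W$ contains no free occurrence of $l$.
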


\begin{lemma}[Type substitution over term under  monomorphization] $\mono{V}\subst{\mono{B}}{\alpha} = \mono{V\subst{B}{\alpha}}$. 
\label{lemma:typesubstitutiontermmono}
\end{lemma}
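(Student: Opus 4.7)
The plan is to prove this lemma by structural induction, but generalized to all terms simultaneously: I will show $\mono{M}\subst{\mono{B}}{\alpha} = \mono{M\subst{B}{\alpha}}$ for every term $M$ (and hence for every value $V$), since the value cases $\lamL{\Loc}{x}{M_0}$ and $\Lambda l.V'$ contain subterms whose translations must be handled at the same time. The induction proceeds on the structure of $M$, following the clauses of the translation in Figure \ref{fig:translation}.

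For the routine cases I would expand the definitions on both sides and apply the induction hypothesis. The variable case is immediate. For $\lamL{\Loc}{x}{M_0}$ (with $\Loc = a$ since we may assume types are in scope only with constant locations when needed, but here we carry the translation through syntactically), pushing the substitution inside the body and invoking IH on $M_0$ closes the case. For applications $L\, M_0$ the argument splits across the pair of subterms and IH handles each. For $\Lambda\beta.V'$, using the standard Barendregt convention so that $\beta \neq \alpha$, the substitution commutes past the binder and IH finishes. For the type application case $M_0[B']$, after pushing the substitution inside both the operator and the operand, I would apply IH to $M_0$ to obtain $\mono{M_0}\subst{\mono{B}}{\alpha} = \mono{M_0\subst{B}{\alpha}}$, and then invoke Lemma \ref{lemma:typesubstitutionmono} on the operand to rewrite $\mono{B'}\subst{\mono{B}}{\alpha}$ as $\mono{B'\subst{B}{\alpha}}$; these combine to give the required equality. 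The location application case $M_0[\Loc]$ is analogous but easier, since a location is unaffected by a type substitution, so one merely applies IH to $M_0$ and commutes the projection past the substitution.

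The main obstacle, as I see it, is the location abstraction case $\Lambda l.V'$. Here the left-hand side is
\[
\mono{\Lambda l.V'}\subst{\mono{B}}{\alpha} = \bigl(\mono{V'\subst{\client}{l}}\subst{\mono{B}}{\alpha},\ \mono{V'\subst{\server}{l}}\subst{\mono{B}}{\alpha}\bigr),
\]
while the right-hand side is
\[
\mono{(\Lambda l.V')\subst{B}{\alpha}} = \mono{\Lambda l.(V'\subst{B}{\alpha})} = \bigl(\mono{(V'\subst{B}{\alpha})\subst{\client}{l}},\ \mono{(V'\subst{B}{\alpha})\subst{\server}{l}}\bigr).
\]
Matching the two componentwise reduces the case to the commutation property $V'\subst{a}{l}\subst{B}{\alpha} = V'\subst{B}{\alpha}\subst{a}{l}$ for $a \in \{\client, \server\}$, after which IH can be applied to the (strictly smaller, since $l$ has been removed) term $V'\subst{a}{l}$. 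The commutation holds because location substitution and type substitution act on disjoint syntactic categories and cannot capture each other's variables; I would either establish it as a small side lemma by a direct induction on $V'$ paralleling the definition of $\subst{-}{-}$ given earlier, or simply note that the two substitutions operate on non-overlapping binders and variable occurrences so commute trivially. Similarly, the type abstraction subcase $\Lambda\beta.V'$ is really the only place where we rely on the usual freshness assumption on bound variables, but no deeper technical issue arises there.

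Finally, the stated lemma follows as the specialization of the generalized statement to values. I would close the proof by noting that Lemma \ref{lemma:typesubstitutionmono} is the engine that keeps the induction moving through constructs that touch the type annotation $B$ directly (namely type application), while all the remaining constructs are handled by plain IH together with the observation that $\mono{\cdot}$ distributes over the syntactic constructors in the straightforward way dictated by Figure \ref{fig:translation}.
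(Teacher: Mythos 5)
Your proposal matches the paper's proof essentially step for step: the paper likewise generalizes the statement to all terms $M$, proceeds by structural induction on $M$, invokes Lemma \ref{lemma:typesubstitutionmono} in the type-application case, and in the $\Lambda l.V$ case applies the induction hypothesis to $V\subst{\client}{l}$ and $V\subst{\server}{l}$ while silently commuting the location and type substitutions. Your explicit flagging of that commutation as a small side lemma is, if anything, slightly more careful than the paper's own presentation.
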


\begin{theorem}[Semantic correctness of  monomorphization] If \ $\typing{\emptyset}{a}{M}{A}$ and $\evalRPC{M}{a}{V}$, then $\evalRPC{\mono{M}}{a}{\mono{V}}$.
\label{thm:semanticcorrectness}
\end{theorem}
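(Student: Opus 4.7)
The plan is to induct on the height of the evaluation derivation $\evalRPC{M}{a}{V}$, mirroring the structure used in the proof of Theorem \ref{thm:typesoundness}. The three value-producing base cases (Abs), (Tabs), (Labs) require showing that $\mono{V}$ evaluates to itself in \rpc{}; the inductive cases (App), (Tapp), (Lapp) each combine inductive hypotheses on the sub-derivations with the appropriate substitution commutation lemma. I would invoke Theorem~\ref{theorem:typecorrectness} as needed to ensure the translations involved are well-defined (in particular that no free location variable sneaks into the translations on subterms).

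For the easy base cases (Abs) and (Tabs), I observe that $\mono{\lamL{b}{x}{M_0}}=\lamL{b}{x}{\mono{M_0}}$ and $\mono{\Lambda\alpha.V_0}=\Lambda\alpha.\mono{V_0}$ are \rpc{}-values and so evaluate to themselves by one application of (Abs) or (Tabs). For the inductive cases (App) and (Tapp), I first apply the inductive hypothesis to the premise derivations, then rewrite using the pushed-through equalities $\mono{N\subst{W}{x}}=\mono{N}\subst{\mono{W}}{x}$ (Lemma~\ref{lemma:substitutionmono}) and $\mono{V_0\subst{B}{\alpha}}=\mono{V_0}\subst{\mono{B}}{\alpha}$ (Lemma~\ref{lemma:typesubstitutiontermmono}) to reassemble an (App) or (Tapp) derivation for the translated term. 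Case (Lapp) splits on whether the applied location is $\client$ or $\server$: after the inductive hypothesis gives $\evalRPC{\mono{L}}{a}{(\mono{V_0\subst{\client}{l}},\mono{V_0\subst{\server}{l}})}$, one use of (Proj-1) or (Proj-2) yields the desired conclusion, because $\mono{L[b]}=\pi_i(\mono{L})$ and $\mono{V_0\subst{b}{l}}$ is exactly the $i$-th component of the pair.

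The genuinely delicate case is (Labs). Here $M=V=\Lambda l.V_0$ but $\mono{\Lambda l.V_0}$ is the pair $(\mono{V_0\subst{\client}{l}},\mono{V_0\subst{\server}{l}})$, and when $V_0$ itself contains further location abstractions, each component is recursively a pair, yielding a tree-shaped value. In \rpc{}, (Pair) requires sub-derivations that evaluate the two components, so a single use of (Pair) does not suffice; the derivation grows with the depth of nested location abstractions in $V_0$. I would handle this with a secondary induction on the structure of $V_0$, proving the auxiliary claim that for every value $V$ of \polyrpc{}, the term $\mono{V}$ is an \rpc{}-value, and therefore $\evalRPC{\mono{V}}{a}{\mono{V}}$ is derivable by a tree of applications of (Abs), (Tabs), and (Pair) whose shape matches the nesting of pairs in $\mono{V}$. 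The outer induction on evaluation derivations then concludes the case by appealing to this auxiliary claim.

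The principal obstacle is keeping the bookkeeping straight in the (Labs) case: because the translation of a location abstraction performs two location-constant substitutions simultaneously, one must argue carefully that each component $\mono{V_0\subst{a}{l}}$ is itself a value (which requires that substituting a location constant for a location variable preserves value-hood, immediate from the grammar) and that its self-evaluation derivation exists at each level of the nested pair. The substitution commutation lemmas for the (App), (Tapp), and (Lapp) cases are routine but must be stated and proved with enough generality to cover location substitutions as well, since location constants can also appear inside the bodies being substituted.
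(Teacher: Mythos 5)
Your proposal matches the paper's proof essentially step for step: the same outer induction on the height of the evaluation derivation, the same use of Lemmas~\ref{lemma:substitutionmono} and~\ref{lemma:typesubstitutiontermmono} to reassemble the (App) and (Tapp) derivations, the same case split on $b$ with a projection rule in (Lapp), and the same secondary induction on the nesting depth of location abstractions to build the tree of (Pair) derivations in (Labs). The one bookkeeping detail to fix is that the typing judgment needed to apply the inductive hypothesis to the $\beta$-reduced term in (App) comes from Theorem~\ref{thm:typesoundness} combined with Lemma~\ref{lemma:valuesubst}, not from Theorem~\ref{theorem:typecorrectness} as you suggest.
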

\begin{proof}
We prove this theorem by induction on the height of the evaluation derivation. Base cases involve (Abs) and (Labs), and inductive cases use (App) and (Lapp).

{\bf (Abs):}
	$\evalRPC{\lamL{b}{x}{M_0}}{a}{\lamL{b}{x}{M_0}}$ where  $M=V=\lamL{b}{x}{M_0}$. Therefore, $\evalRPC{ \mono{M} }{a}{ \mono{V} }$, which proves this case.
	
{\bf (Tabs):}
	$\evalRPC{\Lambda\alpha.V_0}{a}{\Lambda\alpha.V_0}$ where $M=V=\Lambda\alpha.V_0$. Therefore, $\evalRPC{ \mono{M} }{a}{ \mono{V} }$, which proves this case.

{\bf (Labs):}
	$\evalRPC{\Lambda l.V_0}{a}{\Lambda l.V_0}$ where $M=V=\Lambda l.V_0$. By a careful analysis, it turns out that $V_0$ is written as $\Lambda l_1. \cdots \Lambda l_n. V_n$ for $n\geq 0$. Let us name $\Lambda l_{i+1}.V_{i+1}$ as $V_i$ for $0\leq i\leq n-1$, and $V_n$ is either $\lamL{\Loc}{x}{M_0}$ or $\Lambda\alpha.W$. Note that all $V_i$s and $V_n$ must not be a variable since the empty typing environment is in the hypothesis of the typing derivation over $M$. Hence, we need to prove
\[
	\evalRPC{ \mono{\Lambda l. \Lambda l_1. \cdots \Lambda l_n. V_n} }{a}{ \mono{\Lambda l. \Lambda l_1. \cdots \Lambda l_n. V_n} }
\]
	
	We prove this case by the second induction on the number of the nested location abstractions, $n$. When $n=0$, consider $V_0=\lamL{\Loc}{x}{M_0}$.
	\[
	\mono{M}
	= \mono{V}
	= \mono{\Lambda l.\lamL{\Loc}{x}{M_0}}
	= ( \mono{ (\lamL{\Loc}{x}{M_0})\subst{\client}{l} }, \mono{ (\lamL{\Loc}{x}{M_0})\subst{\server}{l} }) 
	\]
	which is translated into $ (  \lamL{\Loc\subst{\client}{l}} {x}{ \mono{M_0\subst{\client}{l}} }, 
		\lamL{\Loc\subst{\server}{l}}{x}{ \mono{M_0\subst{\server}{l}} } )$. This translation is well-defined in either case over $\Loc$. If $\Loc$ is some $b$ then $\Loc\subst{a}{l}$ is $\Loc$, which is $b$. If $\Loc$ is a location variable, then it must be $l$ since $M$ is a top-level closed term. So, $\Loc\subst{a}{l}$ is $a$. 
		
	By (Abs), (1):$\evalRPC{\lamL{\Loc\subst{\client}{l}} {x}{ \mono{M_0\subst{\client}{l}} }}{a}{\lamL{\Loc\subst{\client}{l}} {x}{ \mono{M_0\subst{\client}{l}} }}$.
	
	By (Abs), (2):$\evalRPC{\lamL{\Loc\subst{\server}{l}}{x}{ \mono{M_0\subst{\server}{l}} }}{a}{\lamL{\Loc\subst{\server}{l}}{x}{ \mono{M_0\subst{\server}{l}} }}$.
	
	By (Pair) in the evaluation rule for {\rpc} with (1) and (2), $\evalRPC{M}{a}{V}$, which proves this base case with $V_0=\lamL{\Loc}{x}{M_0}$ of the second induction.  
	
	Also when $n=0$, consider the other base case with $V_0=\Lambda\alpha.W$.
	\[
	\mono{M}
	= \mono{V}
	= \mono{\Lambda l.\Lambda\alpha.W}
	= ( \mono{ (\Lambda\alpha.W)\subst{\client}{l} }, \mono{ (\Lambda\alpha.W)\subst{\server}{l} }) 
	\]
	which is translated into $( \Lambda\alpha. \mono{ W\subst{\client}{l} } ,  \Lambda\alpha.\mono{W\subst{\server}{l} })$.
	
	By (Tabs), (1)':$\evalRPC{ \Lambda\alpha. \mono{ W\subst{\client}{l} }  } {a} { \Lambda\alpha. \mono{ W\subst{\client}{l} }  }$.
	
	By (Tabs), (2)':$\evalRPC{ \Lambda\alpha.\mono{W\subst{\server}{l} } } {a} { \Lambda\alpha.\mono{W\subst{\server}{l} } }$. 
	
	By (Pair) with (1)' and (2)', $\evalRPC{M}{a}{V}$, which proves the remaining base case with $V_0=\Lambda\alpha.W$ of the second induction.  
	
	Now let us prove the inductive case of the second induction, $n>0$. 
	\[
	\mono{M}
	= \mono{V}
	= \mono{\Lambda l.\Lambda l_1.V_1}
	= ( \mono{ (\Lambda l_1.V_1)\subst{\client}{l} }, \mono{ (\Lambda l_1.V_1)\subst{\server}{l} })  
	\]
	which is $( \mono{ \Lambda l_1.(V_1\subst{\client}{l}) }, \mono{ \Lambda l_1.(V_1\subst{\server}{l}) })$ since $l$ must not be equal to $l_1$. Then, by I.H. with $n-1$ of the second induction, 
	\[
	(3):\evalRPC{\mono{ \Lambda l_1.(V_1\subst{\client}{l}) }}{a}{ \mono{ \Lambda l_1.(V_1\subst{\client}{l}) } }
	\ \ \ \ \ 
	(4):\evalRPC{\mono{ \Lambda l_1.(V_1\subst{\server}{l}) }}{a}{ \mono{ \Lambda l_1.(V_1\subst{\server}{l}) } }
	\]
	By (Pair) in the evaluation rule for {\rpc} with (3) and (4), $\evalRPC{ \mono{M} }{a}{ \mono{V} }$, which proves this inductive case of the second induction.  Therefore, we prove this case of (Labs).

{\bf (App):}
	$\evalRPC{L_1 M_1}{a}{V}$, (1):$\evalRPC{L_1}{a}{\lamL{b}{x}{M_0}}$, (2):$\evalRPC{M_1}{a}{W}$, and (3):$\evalRPC{M_0\subst{W}{x}}{b}{V}$ where $M=L_1M_1$.
	By (T-App), (4):$\typing{\emptyset}{a}{L_1}{B\funL{b} A}$ and (5):$\typing{\emptyset}{a}{M_1}{B}$.
	
	By I.H. with (1) and (4), $\evalRPC{\mono{L_1}}{a}{\mono{\lamL{b}{x}{M_0}}}$, which is (6):$\evalRPC{\mono{L_1}}{a}{\lamL{b}{x}{\mono{M_0}}}$. 
	
	By I.H. with (2) and (5), (7):$\evalRPC{\mono{M_1}}{a}{W}$.
	
	By the type soundness theorem with (1) and (4), (8):$\typing{\emptyset}{a}{\lamL{b}{x}{M_0}}{B\funL{b}A}$. 
	
	By the type soundness theorem with (2) and (5), (9):$\typing{\emptyset}{a}{W}{B}$.
	
	By the lemma (Value substitution) with (8) and (9), (10):$\typing{\emptyset}{b}{M_0\subst{W}{x}}{A}$.
	
	By I.H. with (3) and (10), (11):$\evalRPC{ \mono{M_0\subst{W}{x}} }{b}{ \mono{V} }$.
	
	By the lemma (Substitution with the monomorphization translation) with (11), (12):$\evalRPC{ \mono{M_0}\subst{\mono{W}}{x} }{b}{ \mono{V} }$.
	
	By (App) with (6), (7), and (12), $\evalRPC{\mono{L_1} \mono{M_1}}{a}{\mono{V}}$, which proves this case since $\mono{L_1 M_1}=\mono{L_1} \mono{M_1}$.
	
{\bf (Tapp):}
	$\evalRPC{M_1[B]}{a}{V_0\subst{B}{\alpha}}$, and (1):$\evalRPC{M_1}{a}{\Lambda\alpha.V_0}$ where $M=M_1[B]$ and $V=V_0\subst{B}{\alpha}$. 
	
	By (T-Tapp), $\typing{\emptyset}{a}{M_1[B]}{A_0\subst{B}{\alpha}}$, (2):$\typing{\emptyset}{a}{M_1}{\forall\alpha.A_0}$ where $A=A_0\subst{B}{\alpha}$.
	
	By I.H. with (1) and (2), $\evalRPC{ \mono{M_1} }{a}{ \mono{\Lambda\alpha.V_0} }$, which is (3):$\evalRPC{ \mono{M_1} }{a}{ \Lambda\alpha.\mono{V_0} }$ by the def. of the translation. 
	
	By (Tapp) with (3), $\evalRPC{ \mono{M_1}[ \mono{B} ] }{a}{ \mono{V_0}\subst{ \mono{B} }{\alpha}}$. 
	
	By the lemma \ref{lemma:typesubstitutiontermmono}, $\evalRPC { \mono{M_1}[ \mono{B} ] }{a}{ \mono{V_0\subst{B}{\alpha}} }$, which proves this case.

{\bf (Lapp):}
	$\evalRPC{L[b]}{a}{V_0\subst{b}{l}}$ and (1):$\evalRPC{L}{a}{\Lambda l.V_0}$ where $M=L[b]$ and $V=V_0\subst{b}{l}$. By (T-Lapp), $\typing{\emptyset}{a}{L[b]}{A_0\subst{b}{l}}$ and (2):$\typing{\emptyset}{L}{a}{\forall l.A_0}$ where $A=A_0\subst{b}{l}$.
	
	By applying I.H. to (1) and (2), $\evalRPC{ \mono{L} }{a}{ \mono{\Lambda l.V_0} }$. By the definition of the translation, (3):$\evalRPC{ \mono{L} }{a}{ ( \mono{V_0\subst{\client}{l}}, \mono{V_0\subst{\server}{l}} ) }$.
	
	Since $\mono{ L[b] }$ is $\pi_1(\mono{L})$ if $b=\client$, and it is $\pi_2(\mono{L})$ if $b=\server$, we prove two sub cases. Suppose $b=\client$. By (Pair) in the evaluation rule for {\rpc} with (3), $\evalRPC{\pi_1(\mono{L})}{a}{V_0\subst{\client}{l}}$, which proves one case of (Lapp) when $b=\client$. Now suppose $b=\server$. By (Pair) with (3) again, $\evalRPC{\pi_2(\mono{L})}{a}{V_0\subst{\server}{l}}$, which proves the other case of (Lapp).
	
\end{proof}

\subsection{On putting the polymorphic RPC calculus into practice}
\label{sec:howtoputintopractice}

	Now we have finished presenting our theory of the polymorphic RPC calculus. A strong point is that this is a new RPC calculus supporting polymorphic locations that conservatively extends the typed RPC calculus. 

	To put this theory into practice, however, there is a weak point to address. Although our theory  clearly accounts for what is the polymorphic RPC calculus, the monomorphization translation, on which the theory depends, can potentially lead to code explosion, as was explained in Section \ref{subsubsec:monomorphizationexamples}. The reason to perform the monomorphization translation is that after it, one can determine statically if every lambda abstraction is a local or remote procedure call. We call this a {\it static} approach to the polymorphic RPC calculus. 
	
	An alternative way is what we call a {\it dynamic} approach that allows one to determine dynamically if a given lambda application is a local or remote procedure call in runtime after the slicing  compilation. However, in the existing client-server calculi, given a function $V$ and an argument $W$, three application terms are local procedure call, $V(W)$, a remote procedure call from the client to the server, $\req{V}{W}$, and the other remote procedure call for the reverse direction, $\call{V}{W}$, and they can be used only where the caller and the callee locations are statically resolved. They are not equipped with such dynamic operation on locations in runtime necessary for lambda applications that are not certain whether they are remote procedure calls. For example, the cross-tier composition function ($\circ$) in Section \ref{subsec:polyrpcexample} gets a typing for a subterm $g \ x$ as 
	\[
	\typing{\{\alpha \beta \gamma, \ l_1 l_2, \ f:\beta\funL{l_2}\gamma, g:\alpha\funL{l_1}\beta , x:\alpha\}}{l_2}{g \ x}{\beta}
	\] 
	where $l_1$ is where to run the function $g$ and $l_2$ is where the application is. It cannot be statically determined if $g \ x$ is a remote procedure call or not. 
	
	For the dynamic approach to the polymorphic RPC calculus, we need to introduce a new application term that allows such dynamic location checking in runtime as this:
	\[ \gen{Loc'}{V}{W}
	\]
where $Loc'$ is the (callee) location of where to run the function $V$. Note that the caller location, say, $Loc$, is where this application term runs, which is easy to obtain in runtime.
	Whenever a lambda application in {\polyrpc}  poses uncertain caller or callee locations, it can be compiled into this new dynamic application term in a client-server calculus, which we might call a {\it polymorphic CS calculus}, {\polycs}.

	The dynamic semantics for $\gen{Loc'}{f}{arg}$ at $Loc$ can be defined as in the following table. 
\begin{center}
\begin{tabular}{| c | c | c | c | } \hline
Caller($Loc$), Callee($Loc'$) 
& $\gen{Loc'}{f}{arg}$ & procedure call (flow) \\ \hline \hline
$Loc = Loc'=a$ 
& $f(arg)$ & Local  (a $\rightarrow$a)\\ 
$Loc = \client$ and $Loc'= \server$ 
& $req(f, arg)$ & Remote  (\client $\rightarrow$ \server)\\
$Loc = \server$ and $Loc'= \client$ 
& $call(f, arg)$ & Remote (\server $\rightarrow$ \client)\\  \hline
\end{tabular}
\end{center}

\ \\ 
	The first column of the table shows three cases of dynamic checking on caller and callee locations.
	Note that the caller and the callee locations become monomorphic in runtime.
	When the caller and callee locations match to one of the three cases, the semantics for the generic application term is defined by the semantics for the matched specific application term in the second column.  


	Now a slicing compilation may simply compile a lambda application $L \ N$ in {\polyrpc}  into a generic application term in {\polycs}:
	\[
		\mathcal{C}\mono{L \ N}_{\tyenv,Loc,B} = 
			\gen{Loc'}
					{ \mathcal{C}\mono{L}_{\tyenv,Loc,A\funL{Loc'},B} }
					{\mathcal{C}\mono{N}_{\tyenv,Loc,A}}
	\]	
	where $\mathcal{C}{\mono{M}}_{\tyenv,Loc,A}$ denotes a compilation of a term $M$ that has type $A$ at location $Loc$ under a type environment $\tyenv$. 
	Then the generic application term will do local or remote procedure calls under the dynamic semantics in {\polycs}, as explained previously.
	For example, in the cross-tier composition function, the subterm $g \ x$ under the typing explained above is compiled into $\gen{l_1}{g}{x}$ at the location $l_2$. 

	Note that whenever the caller and callee locations are determined statically for comparison in compile-time, it is possible to optimize the compilation of lambda applications in {\polyrpc}  so as to generate more specific application terms in {\polycs}  for the purpose of avoiding dynamic location checking. 

	The dynamic approach could be further optimized in terms of the reduction of dynamically passing locations and checking them if we could make use of the monomorphization translation in the static approach  without risking code explosion. Some polymorphic location functions, say, with only one location abstraction such as $\Lambda l.\Lambda\alpha.\lamL{l}{x}{x}$, had better be monomorphized into a client version and a server version. This is because the polymorphic functions will be placed anyway both in the client and in the server  after the slicing compilation. 
	
	The design of a {\it selective} monomorphization translation would be also useful in making the polymorphic RPC calculus into practice. A basic idea is to classify location variables by {\it kind} indicating if the location variables are static or dynamic. That is, we may replace $\Lambda l.V$ by 
	\[ \Lambda l:k.V \ \mbox{where the kind $k$ is either static or dynamic.} \]
	For example, two location abstractions in the cross-tier composition function may get kind information as:
	\[
		\Lambda l_1:static. \ \Lambda l_2:dynamic. \ 
			\Lambda \alpha. \Lambda \beta. \Lambda \gamma.
			\lamL{l_2}{f}{\lamL{l_2}{g}{\lamL{l_2}{x}
				{f \ (g \ x)}}}.
	\]
	
	Then the selective monomorphization translation would be only applied to the polymorphic locations of the static kind. 
	\[ \mono{\Lambda l:k. V}_{\forall l:k.A} = 
	\begin{cases} 
	(\mono{V\subst{\client}{l}}_{A\subst{\client}{l}}, \mono{V\subst{\server}{l}}_{A\subst{\server}{l}})  & \texttt{if} \ k=\mbox{static}\\
	\Lambda l:k.\mono{ V }_{A} & \texttt{if} \ k = \mbox{dynamic}
	\end{cases}
	\]
where $\mono{M}_A$ denotes the selective monomorphization translation for a term $M$ of type $A$.
	When the kind $k$ is static, it is the same as the monomorphization translation. Otherwise, it leaves the location abstraction and translates the body. 
	The remaining dynamic kinded locations after the selective translation would be supported by the polymorphic CS calculus.
	For example, the location-kinded cross-tier composition function is selectively monomorphized into one instantiated with $\client$ for $l_1$ and the other with $\server$ for $l_1$. A benefit is that locations for $l_1$ can be statically resolved incurring no burden in run-time due to dynamically passing them. 
		
	Now a question on the selective monomorphization is how to decide if given location variables are static or dynamic. Generally, there are two options. Programmers  may assign location variables appropriate kinds manually, or some static analysis may do this automatically.

	While the dynamic approach thus solves the potential code explosion problem of the static approach,  it seems to require a {\it type-passing semantics} where type information is formed and passed to polymorphic functions during run-time.  But only the generic application terms depend on location information. Nothing else depends on location nor type information. So, we would like to have a {\it type-erasure semantics} where types will ultimately be erased and the same term represents different instantiations of polymorphic functions at run-time, resulting in no run-time cost. 
	To retain location information in the type-erasure semantics, we need to introduce {\it location representations} that can carry run-time location information about location variables  and can be type-safely inspected. 
	 For this purpose, a generalized algebraic data type (GADT \cite{Xi:10.1145/640128.604150} or first-class phantom type \cite{cheney:tr2003-1901}) can be used. For example $Location \ \alpha$ is such a GADT that has two constructors $\hat{\client}$ of type $Location \ Client$ and $\hat{\server}$ of type $Location \ Server$ where $Client$ and $Server$ are some types. Then every generic application term $\gen{Loc}{M}{N}$ in the type-passing semantics can be implemented by a variant as $\gen{L}{M}{N}$ in the type-erasure semantics where $L$ is a term of $Location \ A$. The type $A$ will be one of $\alpha$, $Client$, and $Server$ when $Loc$ is one of $l$, $\client$, and $\server$, respectively. The variant generic application terms can be implemented as a case expression, $\mbox{case} \ L \ \mbox{of} \ \{ \ \hat{\client}\rightarrow M; \ \hat{\server}\rightarrow N \ \}$. This shows that using GADT to encode location information fits well for our purpose.

	
	In summary, the polymorphic RPC calculus can be supported either by a static approach or a dynamic approach. In the static approach, no dynamic location checking is required in runtime but there is some potential code explosion problem. In the dynamic approach, no code explosion problem will happen but this advantage comes at the cost of dynamically passing locations and checking them in runtime. 
	
	The following table summarizes the status. 

\begin{center}
\begin{tabular}{|c|c|c|c|c|} \hline
 RPC calculi    & Dynamic check  & Location type    & Slicing \\ \hline
Untyped RPC          & always             & (untyped)        & yes \\
Typed RPC           & never              & mono. loc. & yes \\
Poly. RPC (Static)  & never  (code size $\uparrow$) & poly. loc. & yes \\
Poly. RPC (Dynamic) & only for poly loc. & poly. loc. & yes \\ \hline 
\end{tabular}
\end{center}

\section{Related work and Discussion}
\label{sec:relatedwork}

\paragraph{Polymorphic locations}
	There are only  a few publications that are relevant to the notion of polymorphic locations. ML5 supports what they call {\it world polymorphism} \cite{Murphy:2007:TDP:1793574.1793585,Murphy:2008:MTM:1467784}.  Let us first explain ML5 briefly. A key construct for remote evaluation is $from \ L \ get \ M$ where $L$ is supposed to evaluate to a world and $M$ is an expression at the world.
	For example, here is an example program that involves two worlds, home and server:

\ \ \ \ \ 
\ \ \ \ \ 
\begin{minipage}{0.9\textwidth}
\begin{lstlisting}[escapechar=\%,language=lisp]
extern bytecode world server
extern val server : server addr @ home
extern val version : unit -> string @ server
extern val alert : string -> unit @ home

fun showversion() = 
      let val s = from server
                  get version ()
      in  alert [Server's version is: [s]]
      end

do showversion()                   
\end{lstlisting}
\end{minipage}
where a client at world named $home$ invokes a client function $showversion$ and, subsequently, it invokes a server function $version$ at world named $server$ to retrieve a version string and to display it. 

	In ML5, a function is said to be {\it valid}, which means it can be used at any world, when it does not access any local resources. An example is $map$, which roughly has type $\Box w. (A\rightarrow B)\rightarrow([A] \rightarrow [B])$ using the box modal construct\footnote{In fact, it is {\it sham}, a variant of the box modal construct.}  over a bound world variable $w$. Using $map$, we can write as \[ from \ server \ get \ (map \ (fn \ x => x +1) \ [1,2,3]). \] 
	
	Although the use of quantified modal construct may look like the use of the universal quantifier over a bound location variable $\forall l$,  they are subtly different from each other. Every world variable quantified by a box modal construct is instantiated with some current world. In the example, $w$ becomes instantiated with $server$ in the server where $map$ is used. The polymorphic RPC calculus expresses location polymorphism by parametric polymorphism. In the example, the client would select a server version of $map$ by $map[\server]$ in the client, and would invoke the selected server map function. In this respect, while ML5 can be called modally quantified polymorphism, the polymorphic RPC calculus is based on parametric polymorphism. At least, our approach is believed to be more suitable for the Links programming language. It would be interesting to investigate a formal comparison between world polymorphism and location polymorphism.
	
	For implementation of polymorphism, ML5 represents  worlds at run-time to specialize the representation of values given its world while, in (the static approach to) the polymorphic RPC calculus, we compile all location abstractions and applications  away during compile-time to know communication flows  statically. Here, worlds are treated as concrete addresses of distributed computers while locations are regarded as abstract categories of distributed computers, i.e., client and server. That is why the ML5 example imports the server address as an external value to use it in the client. 
	However, both of ML5 and the polymorphic RPC calculus can take an advantage of each other's approach. 
	On the one hand, ML5 could enjoy something like our monomorphization translation, for example, as an optimization. On the other hand, the polymorphic RPC calculus could represent locations at run-time after introducing a dynamic operation on locations as in the dynamic approach.
	
%
%
	
	In Eliom \cite{Radanne2017,Radanne:2018:TWP:3184558.3185953}, there are a few features to discuss for comparison. First, it supports a macro feature called {\it shared sections}, which makes it possible to write code for the client and for the server at the same time. It is reported that this technique is pervasively used in Eliom to expose implementations that can be used either on the  client or on the server with similar semantics, in a very concise way. But this is a purely syntactic transformation, which is implemented simply by duplicating the code before type checking.
	
	Second, for integration with the OCaml language, Eliom introduces a third location called $base$. Code located on base can be used both on the client and on the server. This feature allows Eliom to be integrated with the OCaml ecosystem smoothly.  The type checker reports an error if Eliom programs put on the base location other than OCaml constructs. When a multi-tier programming language is designed to be based on some existing programming language, this feature will be useful. 
	
	Third, Eliom allows the same module to mix declarations from multiple locations. Such modules are called $mixed$. This allows programmers to group together declarations that are semantically related, regardless of client-server boundaries.  The module type checker enforces an important constraint that as you go down inside sub modules, locations should be properly included. A client module cannot contain server declarations and conversely, but mixed modules can contain everything. It would be interesting to extend the polymorphic RPC calculus with multi-tier ML modules like the ones provided by Eliom. 	

\paragraph{Location inference}
	
	The polymorphic RPC calculus needs a location inference method in practice. Without it, programmers would be burdened because they have to annotate locations on applications as well as on lambda abstractions. It is desirable to have an automatic location inference method.  
	
	In the design of a location inference algorithm, an issue is how location abstractions and applications are written in polymorphic RPC programs. A solution is to extend type inference methods for the System F calculus since  the inference problem for type abstractions and applications is thought to be similar to one for location counterparts in the polymorphic RPC calculus. Although it is well-known to be an undecidable problem in general \cite{Wells1993}, there have been some practical trade-offs including \cite{MILNER1978348,Dunfield:2013:CEB:2500365.2500582,Serrano:2018:GIP:3192366.3192389}. It would be interesting to see whether their methods can be applied to location inferences successfully. 
	
	Besides inference for location constructs, location inference may be designed to insert coercions where there are location conflicts to accept more programs as well-located ones.  For example, $\lamL{\client}{(f:A\funL{\server} B)}{\ \cdots}) \ (\lamL{\client}{x}{\ \cdots})$ is ill-typed but the term can be slightly transformed to be well-typed by inserting an $\eta$-conversion coercion over the argument automatically as: $\lamL{\client}{(f:A\funL{\server} B)}{\ \cdots}) \ (\lamL{\server}{y}{ (\lamL{c}{x}{\ \cdots}) \ y})$.
	
	Regarding ML5, it is reported that a simple extension of Hindley-Milner type inference algorithm is developed. The programmer does not usually need to use the box modality manually, because type inference will automatically generalize declarations to be valid whenever possible \cite{Murphy:2007:TDP:1793574.1793585,Murphy:2008:MTM:1467784}.
	
	Neubauer developed  a constraint-based static analysis that statically infers optimal location annotations for operations in the multi-tier calculus \cite{neubauer2007}.
	
%
%
%

\paragraph{Multi-tier calculi}

	There have been several multi-tier programming languages being developed. 	
		Links \cite{Cooper:2006:LWP:1777707.1777724} is a multi-tier web programming language that employs the RPC calculus as the foundation for client-server communication. 
	Lambda5 \cite{MurphyVII:2004:SML:1018438.1021865,Murphy:2008:MTM:1467784} is a modally-typed lambda calculus in which modal type systems can control local resources safely in distributed systems. 
	A multi-tier calculus by Neubauer and Thiemann \cite{Neubauer:2005:SPM:1040305.1040324} automatically constructs communications for concurrently running processes employing session types \cite{Gay:1999:TSC:645393.756448} to enforce the integrity of communications. They proposed a series of transformations as compilation to convert a source program into separate programs at different locations determined by the use of primitives that run only at specific locations.
	There are many other multi-tier web  programming languages as follows.
Hop \cite{Serrano2006, Serrano:2012:MPH:2240236.2240253} extending Scheme; Hop.js extending JavaScript \cite{Serrano:2016:GH:3022670.2951916};	Eliom, a multi-tier ML programming language, featuring module systems extended with location annotations \cite{Radanne2017} in the project Ocsigen \cite{Balat2006}; and Ur/Web \cite{Chlipala2015} with a dependently typed system; a multi-tier functional reactive programming framework, ScalaLoci \cite{Weisenburger:2018:DSD:3288538.3276499}, and another interesting framework for Scala \cite{Reynders2014}. 

\section{Conclusion}
\label{sec:conclusion}

	This paper proposed the polymorphic RPC calculus where programmers can write succinct multi-tier programs using polymorphic location constructs and the polymorphic multi-tier programs can be automatically translated into monomorphic multi-tier programs only with location constants amenable to the existing slicing compilation methods for client-server model.  For the polymorphic RPC calculus, we formulated the type system, and proved its type soundness. Also, we designed the monomorphization translation, and we proved its type and semantic correctness for the translation. 
	
	As future work, we want to integrate the polymorphic RPC calculus into the Links programming language \cite{Cooper:2006:LWP:1777707.1777724}, which was once designed based on the untyped RPC calculus.  But today Links offers many modern features of programming languages on a call-by-value variant of System F with row polymorphism, row-based effect types, and implicit subkinding \cite{Lindley:2012:RET:2103786.2103798,hillerstrm_et_al:LIPIcs:2017:7739,Fowler:2019:EAS:3302515.3290341}. Therefore it would be a challenge how the polymorphic RPC type system can be integrated smoothly to work with these features.  
	
	Another interesting direction is to mechanize the semantics of the polymorphic RPC calculus using a proof assistant to substantiate the proofs in this work and to make the mechanized semantics a basis for further development.


\bibliography{sbmf2019}

\begin{appendix}

\setcounter{lemma}{0}
\renewcommand{\thelemma}{3.\arabic{lemma}}

\setcounter{subsection}{0}
\renewcommand{\thesubsection}{\Alph{section}.\arabic{subsection}}
     
\section{Proofs of Lemmas in Section 3.1 A polymorphic RPC calculus}
\label{app:proofs_3_1}

\subsection{Proofs  of Lemmas in Section  3.1.1 Type soundness}

 \begin{lemma}[Value relocation]
 If \ $\typing{\tyenv}{\Loc}{V}{A}$ then $\typing{\tyenv}{\Loc'}{V}{A}$.
 \end{lemma}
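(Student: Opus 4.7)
The plan is to prove this by induction on the structure of the value $V$ (equivalently, on the height of the typing derivation), since values in {\polyrpc} come in only four shapes: variables, $\lambda$-abstractions, type abstractions, and location abstractions. The key observation is that none of the typing rules for values uses the location annotation on the typing judgment in a way that constrains the derivation, so the location on the outer judgment can be freely replaced.

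For the base case $V = x$, the derivation ends with (T-Var), whose premise $\tyenv(x)=A$ does not mention $\Loc$ at all, so re-applying (T-Var) with $\Loc'$ gives $\typing{\tyenv}{\Loc'}{x}{A}$ immediately. For $V = \lamL{\Loc''}{x}{M}$, the derivation ends with (T-Abs) from a premise $\typing{\tyenvExt{x}{A_1}}{\Loc''}{M}{B}$ that uses the function's annotation $\Loc''$ rather than $\Loc$; the same premise therefore yields $\typing{\tyenv}{\Loc'}{\lamL{\Loc''}{x}{M}}{A_1\funL{\Loc''}B}$ by another application of (T-Abs). This is the easiest case and it already shows why relocation is sound: an abstraction's type is determined by where its body runs, not by where the abstraction itself is formed as a value.

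The two remaining cases are where the induction hypothesis is actually needed. For $V = \Lambda\alpha.V_0$ with (T-Tabs), the premise is $\typing{\tyenv,\alpha}{\Loc}{V_0}{A_0}$, which does carry the outer location $\Loc$. Since $V_0$ is itself a value, the induction hypothesis relocates this to $\typing{\tyenv,\alpha}{\Loc'}{V_0}{A_0}$, and (T-Tabs) finishes the case. The case $V = \Lambda l.V_0$ using (T-Labs) is handled in exactly the same way, applying the IH to $V_0$ under the extended environment $\tyenvExtWith{l}$ and then re-applying (T-Labs).

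I do not expect any real obstacle here: the lemma is essentially a structural observation about what can appear at the root of a value's typing derivation. The only subtlety worth flagging in the write-up is that the IH is applicable in the (T-Tabs) and (T-Labs) cases precisely because the syntactic restriction on $\Lambda$-bodies forces the subterm $V_0$ to again be a value, keeping us in the scope of the structural induction. No side-condition on free variables or locations changes, because we leave $\tyenv$ and $A$ untouched.
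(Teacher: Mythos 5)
Your proof is correct and follows essentially the same route as the paper's: induction on the typing derivation of the value, with (T-Var) and (T-Abs) as immediate cases because their premises do not depend on the outer location, and (T-Tabs)/(T-Labs) handled by applying the induction hypothesis to the premise and re-applying the rule. Your added remark that the syntactic restriction on $\Lambda$-bodies keeps $V_0$ a value is a nice justification for why the structural induction stays well-founded, but it is not a substantive departure from the paper's argument.
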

 \begin{proof}
 This lemma is proved by induction on the height of the typing derivation tree in the condition. Base cases use (T-Var) and (T-Abs). Let $V$ be $x$ where (T-Var) is used. Then we have $\tyenv(x)=A$. By (T-Var) with $\Loc'$ this time, $\typing{\tyenv}{\Loc'}{x}{A}$. This proves one base case with (T-Var). For the other base case, the lemma is provable similarly. 
 
 Inductive cases involve (T-Tabs) and (T-Labs). Let $V$ be $\Lambda\alpha.V_0$ and $A$ be $\forall\alpha.A_0$ where (T-Tabs) is used. The condition of the lemma gives us the subtree with (1):$\typing{\tyenv,\alpha}{\Loc}{V_0}{A_0}$. 

	By applying the induction hypothesis to (1), we will have (2):$\typing{\tyenv,\alpha}{\Loc'}{V_0}{A_0}$. 

	By applying (T-Tabs) to (2), we can derive the typing derivation tree in the conclusion of the lemma: $\typing{\tyenv}{\Loc'}{\Lambda\alpha.V_0}{\forall\alpha.A_0}$. 

	For the other inductive case, the lemma can be proved in a similar way.
 \end{proof}

 \begin{lemma}[Value substitution] 
If \ $\typing{\tyenv}{\Loc}{\lamL{\Loc'}{x}{M}}{A\funL{\Loc'}B}$ and $\typing{\tyenv}{\Loc}{V}{A}$ then $\typing{\tyenv}{\Loc'}{M\subst{V}{x}}{B}$. 
 \end{lemma}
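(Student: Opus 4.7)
My plan is to reduce the statement to a more general substitution lemma that can be proved by induction on a typing derivation. First, I would invert the typing derivation of $\lamL{\Loc'}{x}{M}$, which, since (T-Abs) is the only applicable typing rule, yields $\typing{\tyenvExt{x}{A}}{\Loc'}{M}{B}$. Next, I would apply the value relocation lemma (Lemma 3.1) to the second hypothesis to obtain $\typing{\tyenv}{\Loc'}{V}{A}$, matching the location at which the body of the abstraction is typed. The goal is then $\typing{\tyenv}{\Loc'}{M\subst{V}{x}}{B}$, which is an instance of a standard substitution lemma.

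The generalized substitution lemma I would prove is: for any term $M$, type environment $\tyenv$, location $\Loc''$, and type $C$, if $\typing{\tyenvExt{x}{A}}{\Loc''}{M}{C}$ and $\typing{\tyenv}{\Loc''}{V}{A}$ with $V$ a value, then $\typing{\tyenv}{\Loc''}{M\subst{V}{x}}{C}$. The proof proceeds by induction on the height of the typing derivation of $M$. The (T-Var) case splits on whether $M=x$: if so, the goal is exactly the hypothesis about $V$; otherwise the substitution is a no-op and the derivation adapts immediately. The application cases (T-App), (T-Tapp), and (T-Lapp) apply the induction hypothesis to each subterm and reassemble with the same typing rule, since the substitution commutes with the term constructors in the natural way.

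The binder cases are where I expect to invest the most care. For (T-Abs) applied to $\lamL{\Loc_0}{y}{M_0}$, the body is typed at a location $\Loc_0$ that in general differs from $\Loc''$. To invoke the induction hypothesis I would first use value relocation again to obtain $\typing{\tyenv}{\Loc_0}{V}{A}$, and I would rely on the usual Barendregt convention so that $y\neq x$ and $y\notin fv(V)$, ensuring $(\lamL{\Loc_0}{y}{M_0})\subst{V}{x} = \lamL{\Loc_0}{y}{M_0\subst{V}{x}}$. The cases (T-Tabs) and (T-Labs) are analogous: the fresh type or location variable is assumed not to clash with $\tyenv$ or with the free variables of $V$, so the environment can be extended and the induction hypothesis applied to the body without any interference. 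Weakening is implicit in these steps, which is unproblematic given the standard treatment of typing environments.

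The main obstacle, as usual for substitution lemmas in calculi with multiple kinds of binders, is bookkeeping: we must keep the term variable $x$ distinct from bound type variables $\alpha$ and bound location variables $l$, and ensure that the free variables of $V$ are not captured. The presence of three separate binder forms in \polyrpc{} makes this slightly heavier than in the pure typed $\lambda$-calculus, but no genuinely new difficulty arises because values $V$ admitted here ($x$, $\lamL{\Loc'}{x}{M}$, $\Lambda\alpha.V$, $\Lambda l.V$) can always be $\alpha$-renamed away from binders encountered during the induction. Everything else reduces to straightforward applications of the typing rules and of value relocation to realign the location annotation on the judgment when descending into a $\lambda$-abstraction.
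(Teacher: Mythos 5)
Your proposal is correct and follows essentially the same route as the paper: invert (T-Abs) to expose $\typing{\tyenvExt{x}{A}}{\Loc'}{M}{B}$, relocate the typing of $V$ via the value relocation lemma, and then establish a generalized substitution lemma by induction on the height of the typing derivation, with the usual case split at (T-Var) and freshness conventions at the binders. Your explicit observation that value relocation must be reapplied inside the (T-Abs) case (because the body is typed at a different location) is a detail the paper leaves under ``provable similarly,'' but it is the same argument.
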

 \begin{proof} 
 	By (T-Abs) with the first part of the condition, $\typing{\tyenvExt{x}{A}}{\Loc'}{M}{B}$. Since $x$ is a bound variable, $x$ cannot appear as a free variable in $V$. By the lemma (Value relocation), we have $\typing{\tyenv}{\Loc'}{V}{A}$ from the second part.
 	
 	Then we have only to show a generalized lemma as: 
	if (1):$\typing{\tyenvExt{x}{A}}{\Loc'}{M}{B}$, (2):$\typing{\tyenv}{\Loc'}{V}{A}$, and (3):$x\not\in fv(V)$ then (4):$\typing{\tyenv}{\Loc'}{M\subst{V}{x}}{B}$. 

	We prove the generalized lemma by induction on the height of the derivation tree (1).
	The only base case uses (T-Var) with $M=y$. We do case analysis by $y=x$ and $y\not=x$. In either cases, the generalized lemma is provable immediately.
	
	For inductive cases, let us first consider a case using (T-App) with $M=L\ N$.
	The instance of (1) becomes $\typing{\tyenvExt{x}{A}}{\Loc'}{L \ N}{B}$. 

	By (T-App), we have (5):$\typing{\tyenvExt{x}{A}}{\Loc'}{L}{C \funL{\Loc''} B}$ and (6):$\typing{\tyenvExt{x}{A}}{\Loc'}{N}{C}$. 

	By I.H. with (5), (7):$\typing{\tyenv}{\Loc'}{L\subst{V}{x}}{C \funL{\Loc''} B}$. 

	By I.H. with (6), (8):$\typing{\tyenv}{\Loc'}{N\subst{V}{x}}{C}$.

	By (T-App) with (7) and (8), $\typing{\tyenv}{\Loc'}{(L\subst{V}{x}) \ (N\subst{V}{x})}{B}$ where $(L\subst{V}{x}) \ (N\subst{V}{x})$ is $(L \ N)\subst{V}{x}$. Hence, the case is proved.

	The other inductive cases use (T-Abs), (T-Tabs), (T-Tapp), (T-Labs), and (T-Lapp), which are all provable similarly. 
 \end{proof}
 
  \begin{lemma}[Type substitution] 
If \ $\typing{\tyenv}{\Loc}{\Lambda\alpha.V}{\forall\alpha.A}$ then $\typing{\tyenv}{\Loc}{V\subst{B}{\alpha}}{A\subst{B}{\alpha}}$. 
 \end{lemma}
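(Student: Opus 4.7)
The plan is to mirror the strategy used for the Value substitution lemma. First, I would invert the assumption using (T-Tabs) to obtain a subderivation
\[
\typing{\tyenvExtWith{\alpha}}{\Loc}{V}{A},
\]
which is a subtree of strictly smaller height than the whole. The conclusion then follows from a more general lemma, which I would state and prove independently: \emph{if $\typing{\tyenvExtWith{\alpha}}{\Loc}{M}{A}$ and $\alpha \not\in ftv(B) \cup \bigcup_{A_i \in rng(\tyenv)} ftv(A_i)$, then $\typing{\tyenv}{\Loc}{M\subst{B}{\alpha}}{A\subst{B}{\alpha}}$}. The side condition on $\alpha$ is supplied by the well-formedness convention on typing judgments stated just before the lemmas, together with an implicit alpha-renaming of bound type variables to avoid capture.

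The generalized lemma is proved by induction on the height of the typing derivation $\typing{\tyenvExtWith{\alpha}}{\Loc}{M}{A}$, case-splitting on the last rule used. The base case (T-Var) is immediate once one observes that $(\tyenvExtWith{\alpha})(x) = \tyenv(x)$, and that substituting $B$ for $\alpha$ in a type whose free type variables lie in $\dom(\tyenv)$ gives back an appropriate type. The cases for (T-Abs), (T-App), (T-Labs), and (T-Lapp) are straightforward inductive cases: the typing rule is reapplied to the inductive hypotheses after noting that substitution of types for $\alpha$ commutes with the relevant syntactic constructors (for example, $(\lamL{\Loc}{x}{M})\subst{B}{\alpha} = \lamL{\Loc}{x}{M\subst{B}{\alpha}}$ and the function-type annotation $\Loc$ is untouched by type substitution). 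The case (T-Tabs) for $\Lambda\beta.V_0$ requires $\beta \neq \alpha$ and $\beta \not\in ftv(B)$, which I would ensure by alpha-renaming the inner binder, and then apply the induction hypothesis.

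The one case requiring slightly more thought is (T-Tapp), where the goal reduces to showing
\[
(A_0\subst{C}{\beta})\subst{B}{\alpha} \;=\; (A_0\subst{B}{\alpha})\subst{C\subst{B}{\alpha}}{\beta},
\]
a standard compositional substitution identity on types. I would state it as a tiny auxiliary lemma, proved by straightforward induction on $A_0$ (again using the convention $\beta \neq \alpha$ and $\beta \not\in ftv(B)$ from alpha-renaming), and then chain it with the I.H. applied to $M$ and (T-Tapp) to conclude. I expect this case to be the main obstacle, not because it is deep but because it is the only place where substitution interacts non-trivially with itself; every other case is either immediate or closes by direct reapplication of the governing typing rule to the inductive hypotheses.
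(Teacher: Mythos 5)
Your proposal is correct and follows essentially the same route as the paper: invert (T-Tabs), generalize to arbitrary terms $M$ under a freshness condition on $\alpha$, induct on the height of the typing derivation, and close the (T-Tapp) case with the substitution composition identity $(A_0\subst{C}{\beta})\subst{B}{\alpha} = (A_0\subst{B}{\alpha})\subst{C\subst{B}{\alpha}}{\beta}$. The only cosmetic difference is that you make the capture-avoidance conditions ($\alpha \notin ftv(B)$, alpha-renaming of inner binders) explicit where the paper leaves them implicit in its well-formedness conventions.
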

 \begin{proof}
 	By (T-Tabs) with the first part of the condition, $\typing{\tyenvExtWith{\alpha}}{\Loc}{V}{A}$. Since $\alpha$ is a bound type variable, $\alpha$ cannot occur in $\tyenv$, i.e., $\alpha\not\in ftv(\tyenv)$.
 	
 	We prove a generalized lemma as: if (1):$\typing{\tyenvExtWith{\alpha}}{\Loc}{M}{A}$ and (2):$\alpha\not\in ftv(\tyenv)$ then (3):$\typing{\tyenv}{\Loc}{M\subst{B}{\alpha}}{A\subst{B}{\alpha}}$. 

	This generalized lemma is proved by induction on the height of the derivation tree (1). The base case uses (T-Var) with $M=x$. $M\subst{B}{\alpha}=x\subst{B}{\alpha}=x$. $A\subst{B}{\alpha}$ must be $A$. Otherwise, $\alpha$ occurs in $A$ for $\tyenv(x)=A$, which violates (2).
 	
 	For inductive cases, consider a case using (T-Tapp) with $M=L[C]$ and $A=A_0\subst{C}{\beta}$. (1) becomes (4):$\typing{\tyenvExtWith{\alpha}}{\Loc}{L[C]}{A_0\subst{C}{\beta}}$. 
	By (T-Tapp) with (4), we have (5):$\typing{\tyenvExtWith{\alpha}}{\Loc}{L}{\forall\beta.A_0}$.
	By I.H. with (5), (6):$\typing{\tyenv}{\Loc}{L\subst{B}{\alpha}}{(\forall\beta.A_0)\subst{B}{\alpha}}$. Note that $(\forall\beta.A_0)\subst{B}{\alpha}=\forall\beta.(A_0\subst{B}{\alpha})$ since $\alpha\not=\beta$. 
	By applying (T-Tapp) to (6) with $C\subst{B}{\alpha}$, (7):$\typing{\tyenv}{\Loc}{L\subst{B}{\alpha}[C\subst{B}{\alpha}]}{(A_0\subst{B}{\alpha})\subst{C\subst{B}{\alpha}}{\beta}}$. 
	This proves the inductive case by $\typing{\tyenv}{\Loc}{(L[C])\subst{B}{\alpha}}{(A_0\subst{C}{\beta})\subst{B}{\alpha}}$.
	The other inductive cases use (T-Abs), (T-App), (T-Tabs), (T-Labs), and (T-Lapp), which are provable similarly.
 \end{proof}

 \begin{lemma}[Location substitution]
 If \ $\typing{\tyenv}{\Loc}{\Lambda l.V}{\forall l.A}$ then $\typing{\tyenv}{\Loc}{V\subst{\Loc'}{l}}{A\subst{\Loc'}{l}}$.
 \end{lemma}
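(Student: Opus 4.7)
The plan is to mirror the strategy used for the Type substitution lemma. First I would invert the hypothesis via (T-Labs) to obtain $\typing{\tyenvExtWith{l}}{\Loc}{V}{A}$, together with the freshness information carried by the well-formed-typing-judgment convention: $l$ does not appear free in $\tyenv$ nor in $\Loc$. I would then state and prove the following generalized lemma: if $\typing{\tyenvExtWith{l}}{\Loc}{M}{A}$ with $l\notin flv(\tyenv)\cup flv(\Loc)$, then $\typing{\tyenv}{\Loc}{M\subst{\Loc'}{l}}{A\subst{\Loc'}{l}}$. Note that because $l\notin flv(\Loc)$, applying the substitution to $\Loc$ on the right-hand side leaves it unchanged, which lines up with the statement of the original lemma.

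The generalized lemma is proved by induction on the height of the typing derivation of $M$. The base case is (T-Var) where $M=x$: here $x\subst{\Loc'}{l}=x$ and we only need that $\tyenv(x)=A$ implies $\tyenv(x)=A\subst{\Loc'}{l}$ after substitution, which follows because $A$'s free location variables are a subset of $dom(\tyenv)\cup\{l\}$. The inductive cases (T-App), (T-Tabs), (T-Tapp) are routine: location substitution commutes with both term and type formation, and a $\alpha$-bound type variable does not interact with a location variable, so the two substitutions commute cleanly (analogous to the commutation used in the Type substitution lemma). (T-Abs) is handled by pushing the substitution into both the location annotation $\Loc''$ and the body, using the defining equation $(\lamL{\Loc''}{x}{N})\subst{\Loc'}{l} = \lamL{\Loc''\subst{\Loc'}{l}}{x}{N\subst{\Loc'}{l}}$.

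The two most delicate cases are (T-Labs) and (T-Lapp), which are exactly the location-binding and location-eliminating rules. For (T-Lapp) with $M=N[\Loc'']$, applying the I.H.\ to the premise gives $\typing{\tyenv}{\Loc}{N\subst{\Loc'}{l}}{(\forall l''.A_0)\subst{\Loc'}{l}}$. By $\alpha$-convention we may assume $l''\neq l$, so $(\forall l''.A_0)\subst{\Loc'}{l}=\forall l''.(A_0\subst{\Loc'}{l})$, and applying (T-Lapp) with argument $\Loc''\subst{\Loc'}{l}$ produces the needed typing once we check that $A_0\subst{\Loc'}{l}\subst{\Loc''\subst{\Loc'}{l}}{l''} = A_0\subst{\Loc''}{l''}\subst{\Loc'}{l}$, i.e.\ a substitution-commutation identity. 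For (T-Labs) with $M=\Lambda l''.W$, we again $\alpha$-rename so that $l''\neq l$ and $l''\notin flv(\Loc')$, apply the I.H.\ to the body (noting the extra binder $l''$ does not disturb the freshness of $l$), and then re-apply (T-Labs).

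The main obstacle I anticipate is the capture-avoidance bookkeeping: one must ensure that the $\alpha$-renaming of bound location variables is compatible with both the term-level substitution $V\subst{\Loc'}{l}$ and the type-level substitution $A\subst{\Loc'}{l}$, and that the substitution-commutation identity for types holds in the (T-Lapp) case. These are standard facts but should be stated as a short auxiliary lemma to avoid cluttering the induction. Once that is in hand, the remaining steps are mechanical and parallel the proof of the Type substitution lemma above.
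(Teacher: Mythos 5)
Your proposal follows essentially the same route as the paper's proof: invert (T-Labs), then prove a generalized lemma --- if $\typing{\tyenv,l}{\Loc}{M}{A}$ with $l\notin flv(\tyenv)$ then $\typing{\tyenv}{\Loc}{M\subst{\Loc'}{l}}{A\subst{\Loc'}{l}}$ --- by induction on the height of the typing derivation, with the (T-Lapp)/(T-Labs) cases discharged by exactly the substitution-commutation identity the paper also invokes. One small correction to your (T-Var) base case: the reason $A\subst{\Loc'}{l}=A$ is that $l\notin flv(\tyenv)$ forces $l\notin flv(A)$ for $A\in rng(\tyenv)$, not the well-formedness inclusion $flv(A)\subseteq dom(\tyenv)\cup\{l\}$ you cite --- but since that freshness condition is already a hypothesis of your generalized lemma, nothing substantive is missing.
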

 \begin{proof}
 By (T-Labs) with the first part of the condition, $\typing{\tyenv,l}{\Loc}{V}{A}$. Since $l$ is a bound location variable, $l$ cannot occur in $\tyenv$, i.e., $l\not\in flv(\tyenv)$.
 
 We prove a generalized lemma as: if (1):$\typing{\tyenv,l}{\Loc}{M}{A}$ and (2):$l\not\in flv(\tyenv)$ then (3):$\typing{\tyenv}{\Loc}{M\subst{\Loc'}{l}}{A\subst{\Loc'}{l}}$. In the base case, $M=x$. $M\subst{\Loc'}{l} = x\subst{\Loc'}{l} = x$. $A\subst{\Loc'}{l}=A$ because of $\tyenv(x)=A$ and (2). 
 
 For inductive cases, let us first a case using (T-Lapp) with $M=L[\Loc_0]$ and $A=B\subst{\Loc_0}{l_0}$. The instance of (1) becomes (4):$\typing{\tyenv,l}{\Loc}{L[\Loc_0]}{B\subst{\Loc_0}{l_0}}$. 

 By (T-Lapp) with (4), we have (5):$\typing{\tyenv,l}{\Loc}{L}{\forall l_0. B}$. 

 By I.H. with (5), (6):$\typing{\tyenv}{\Loc}{L\subst{\Loc'}{l}}{(\forall l_0. B)\subst{\Loc'}{l}}$. Since $l\not=l_0$, (7):$(\forall l_0. B)\subst{\Loc'}{l}=\forall l_0. (B\subst{\Loc'}{l})$.
 
 By (T-Lapp) with (6), (7), and $\Loc_0\subst{\Loc'}{l}$, we can derive 
 \[\typing{\tyenv}{\Loc}{(L\subst{\Loc'}{l})[\Loc_0\subst{\Loc'}{l}]}{(B\subst{\Loc'}{l})\subst{\Loc_0\subst{\Loc'}{l}}{l_0}}\]
 which is $\typing{\tyenv}{\Loc}{(L[\Loc_0])\subst{\Loc'}{l}}{(B\subst{\Loc_0}{l_0})\subst{\Loc'}{l}}$.
 
 The other inductive cases use (T-Abs), (T-App), (T-Tabs), (T-Tapp), and (T-Labs), which are proved similarly.
 \end{proof}
 
\section{Proofs  of Lemmas in Section 3.2 A monomorphization translation of the polymorphic RPC calculus}
\label{app:proofs_3_2}

\setcounter{lemma}{4}

\subsection{Proofs  of Lemmas in Section  3.2.2 Type correctness}

\begin{lemma}[Type substitution over type under monomorphization] 
$\mono{A}\subst{\mono{B}}{\alpha}$  $=$ $\mono{A\subst{B}{\alpha}}$.
\end{lemma}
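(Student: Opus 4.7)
The plan is to prove the lemma by induction on the size of the type $A$ (measured by number of constructors), rather than plain structural induction on $A$. The reason a size measure is convenient rather than strict structural induction is the $\forall l$ case: the translation there unfolds through $A_0\subst{\client}{l}$ and $A_0\subst{\server}{l}$, and while these are not subterms of $\forall l.A_0$ in the syntactic sense, location substitution does not change the number of type constructors, so they are strictly smaller than $\forall l.A_0$. I will assume the usual Barendregt convention to avoid capture, so that any bound type variable $\alpha'$ in $A$ is distinct from $\alpha$ and does not occur free in $B$.

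The cases $A = base$ and $A = \alpha'$ are immediate from the definitions (splitting the variable case into $\alpha' = \alpha$ and $\alpha' \neq \alpha$). For $A = A_1 \funL{\Loc} A_2$, I would unfold $\mono{-}$ on both sides, push the substitution through the function arrow (the annotation $\Loc$ is unaffected by a type substitution), and then apply the induction hypothesis on $A_1$ and $A_2$. The case $A = \forall \alpha'.A_0$ is analogous: under the Barendregt convention, $\mono{\forall\alpha'.A_0}\subst{\mono{B}}{\alpha} = \forall\alpha'.(\mono{A_0}\subst{\mono{B}}{\alpha})$, and IH on $A_0$ closes the case.

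The case $A = \forall l.A_0$ is the main obstacle and deserves the most care. Here we have
\[
\mono{\forall l.A_0}\subst{\mono{B}}{\alpha}
= \bigl(\mono{A_0\subst{\client}{l}} \times \mono{A_0\subst{\server}{l}}\bigr)\subst{\mono{B}}{\alpha}
= \mono{A_0\subst{\client}{l}}\subst{\mono{B}}{\alpha} \times \mono{A_0\subst{\server}{l}}\subst{\mono{B}}{\alpha}.
\]
Applying IH to the two smaller types $A_0\subst{\client}{l}$ and $A_0\subst{\server}{l}$ (justified by the size measure above) yields
\[
\mono{A_0\subst{\client}{l}\subst{B}{\alpha}} \times \mono{A_0\subst{\server}{l}\subst{B}{\alpha}}.
\]
To finish I need the auxiliary observation that location substitution and type substitution commute: for any location constant $a$, $A_0\subst{a}{l}\subst{B}{\alpha} = A_0\subst{B}{\alpha}\subst{a}{l}$. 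This holds because $a$ contains no type variables, $l$ is a location variable disjoint from type-variable substitution, and the two substitutions act on disjoint syntactic categories; it can be verified by a short structural induction on $A_0$. With this, the product becomes $\mono{(A_0\subst{B}{\alpha})\subst{\client}{l}} \times \mono{(A_0\subst{B}{\alpha})\subst{\server}{l}}$, which by the definition of the translation equals $\mono{\forall l.(A_0\subst{B}{\alpha})} = \mono{(\forall l.A_0)\subst{B}{\alpha}}$, completing the case.

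In summary, the proof is a routine induction where almost every case reduces to pushing substitution through a constructor and invoking IH; the only delicate point is the $\forall l$ case, which relies on (i) choosing a size-based induction measure so that the intermediate types $A_0\subst{a}{l}$ are admissible, and (ii) a small commutation lemma between location substitution and type substitution. Both are unproblematic, so no genuine technical hurdle arises beyond bookkeeping.
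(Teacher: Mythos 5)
Your proof follows essentially the same route as the paper's: induction on $A$ where the only nontrivial case is $\forall l.A_0$, resolved by unfolding the pair translation, applying the induction hypothesis to $A_0\subst{\client}{l}$ and $A_0\subst{\server}{l}$, and then commuting location substitution with type substitution. You are in fact somewhat more careful than the paper, which labels its induction ``structural'' even though the hypothesis is invoked on the non-subterms $A_0\subst{a}{l}$ and which performs the substitution commutation silently; your size measure and explicit commutation observation (which does require $l\notin flv(B)$ -- harmless here, since the lemma is only ever applied to location-closed $B$) address exactly those points.
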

 \begin{proof} We prove this lemma by the structural induction on $A$. We have two base cases. Let us first consider when $A=base$. In the left-hand side of the equation, $\mono{base}\subst{\mono{B}}{\alpha} = base\subst{\mono{B}}{\alpha} = base$. In the right-hand side, $\mono{base\subst{B}{\alpha}}=base$.
 
	For the other base case that $A$ is a type variable, say, $\beta$, we do a case analysis on if $\beta$ is the same as $\alpha$ or not. When $\beta\not=\alpha$, this can be proved similarly as for the first base case. When $\beta=\alpha$, $\mono{\alpha}\subst{\mono{B}}{\alpha} = \alpha\subst{\mono{B}}{\alpha} = \mono{B} = \mono{\alpha\subst{B}{\alpha}}$.
	
	There are three inductive cases where $A$ is a function type, a polymorphic type, and a polymorphic location. 
	
	{\bf Case $A$ is $A_1\funL{a}A_2$:}

\begin{tabular}{l l l}
 		& $\mono{A_1\funL{a}A_2}\subst{\mono{B}}{\alpha}$ & by def. of $\mono{-}$ \\
 $=$ & $(\mono{A_1}\funL{a}\mono{A_2})\subst{\mono{B}}{\alpha}$ & \\
 $=$ & $\mono{A_1}\subst{\mono{B}}{\alpha}\funL{a}\mono{A_2}\subst{\mono{B}}{\alpha}$ & by I.H. \\
 $=$ & $\mono{A_1\subst{B}{\alpha}}\funL{a}\mono{A_2\subst{B}{\alpha}}$ & by def. of $\mono{-}$ \\
 $=$ & $\mono{A_1\subst{B}{\alpha}\funL{a}A_2\subst{B}{\alpha}}$ &  \\ 
 $=$ & $\mono{(A_1\funL{a}A_2)\subst{B}{\alpha}}$ &  \\ 
 \\
\end{tabular}

	{\bf Case $A$ is $\forall\beta.A_0$:}

\ \ i) $\beta=\alpha$ 

\begin{tabular}{l l l}
		& $\mono{\forall\beta.A_0}\subst{\mono{B}}{\alpha}$ & by def. of $\mono{-}$ \\
 $=$	& $(\forall\beta.\mono{A_0})\subst{\mono{B}}{\alpha}$ & \\		
 $=$	& $\forall\beta.\mono{A_0}$ & by def. of $\mono{-}$ \\
 $=$	& $\mono{\forall\beta.A_0}$ & \\ 		
 $=$	& $\mono{(\forall\beta.A_0)\subst{B}{\alpha}}$ & \\ 		
 \\
\end{tabular}

\ \ ii) $\beta\not=\alpha$

\begin{tabular}{l l l}
		& $\mono{\forall\beta.A_0}\subst{\mono{B}}{\alpha}$ & by def. of $\mono{-}$ \\
 $=$	& $(\forall\beta.\mono{A_0})\subst{\mono{B}}{\alpha}$ & \\	
 $=$	& $\forall\beta.(\mono{A_0}\subst{\mono{B}}{\alpha})$ & by I.H. \\	 
 $=$	& $\forall\beta.(\mono{A_0\subst{B}{\alpha}})$ &  \\	  
 $=$	& $\mono{\forall\beta.(A_0\subst{B}{\alpha})}$ &  \\	   
 $=$	& $\mono{(\forall\beta.A_0)\subst{B}{\alpha}}$ &  \\	
 \\    
\end{tabular}

	{\bf Case $A$ is $\forall l.A_0$:}

\begin{tabular}{l l l}
     & $\mono{\forall l.A_0}\subst{\mono{B}}{\alpha}$  & by def. of $\mono{-}$ \\
 $=$ & $(\mono{A_0\subst{\client}{l}}\times\mono{A_0\subst{\server}{l}})\subst{\mono{B}}{\alpha}$ & \\
 $=$ & $(\mono{A_0\subst{\client}{l}}\times\mono{A_0\subst{\server}{l}})\subst{\mono{B}}{\alpha}$ & \\
 $=$ & $\mono{A_0\subst{\client}{l}}\subst{\mono{B}}{\alpha}\times\mono{A_0\subst{\server}{l}\subst{\mono{B}}{\alpha}}$ & by I.H. \\ 
 $=$ & $\mono{(A_0\subst{\client}{l})\subst{B}{\alpha}}\times\mono{(A_0\subst{\server}{l})
\subst{B}{\alpha}}$ &  \\  
 $=$ & $\mono{(A_0\subst{B}{\alpha})\subst{\client}{l}}\times\mono{(A_0
\subst{B}{\alpha})\subst{\server}{l}}$ &  by def. of $\mono{-}$ \\
 $=$ & $\mono{\forall l.(A_0\subst{B}{\alpha})}$ &  \\  
 $=$ & $\mono{(\forall l.A_0)\subst{B}{\alpha}}$ &   
\end{tabular}

 \end{proof}
 
\begin{lemma}[Location polymorphism] 
Suppose $\tyenv=\{l_1,\cdots, l_n\}\cup\tyenv_0$  such that $\tyenv_0$ has no location variables. If \ $\typing{\tyenv}{\Loc}{M}{A}$ then $(\typing{\tyenv_0}{\Loc}{M}{A})\{a_1/l_1,\cdots, a_n/l_n\}$ for any $a_1,\cdots, a_n$ with the same height.
\end{lemma}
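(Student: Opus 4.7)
The plan is to proceed by induction on the height of the typing derivation for $\typing{\tyenv}{\Loc}{M}{A}$, doing a case analysis on the last typing rule used. In every case the strategy is the same: apply the simultaneous substitution $\sigma = \{a_1/l_1,\cdots,a_n/l_n\}$ to the judgment, apply the induction hypothesis to each premise (note $\tyenv_0$ contains no location variables, so $\tyenv_0\sigma = \tyenv_0$), and reconstruct the conclusion using the same typing rule. Since no new inference steps are introduced, the substituted derivation has the same height as the original.

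First I would dispatch the easy cases. (T-Var) is immediate: $x\sigma = x$, $\tyenv_0(x) = A$, and $A\sigma$ is the type of $x$ on both sides. For (T-App), (T-Tabs), and (T-Tapp), the induction hypothesis on each premise plus the observation that $\sigma$ commutes with term/type syntactic constructors (e.g.\ $(L\ M)\sigma = L\sigma\ M\sigma$ and $(M[B])\sigma = M\sigma[B\sigma]$) gives the result directly. (T-Abs) works similarly, using the fact that $(\lamL{\Loc}{x}{M})\sigma = \lamL{\Loc\sigma}{x}{M\sigma}$ and that the location annotation in the premise simply becomes $\Loc\sigma$.

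The two cases that demand real care are (T-Labs) and (T-Lapp). For (T-Labs), where $M = \Lambda l.V$, we may assume by $\alpha$-renaming that $l$ is fresh with respect to $\{l_1,\dots,l_n\}$ and disjoint from the free location variables of $\sigma$'s targets, so that $l\sigma = l$ and $(\Lambda l.V)\sigma = \Lambda l.(V\sigma)$. The premise has derivation $\typing{\tyenv,l}{\Loc}{V}{A}$ in the environment $\{l_1,\dots,l_n,l\}\cup\tyenv_0$; applying the induction hypothesis with $\sigma$ (which does not mention $l$) yields $\typing{\tyenv_0,l}{\Loc\sigma}{V\sigma}{A\sigma}$, and reapplying (T-Labs) gives the goal. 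For (T-Lapp), where $M = L[\Loc']$ and $A = A_0\subst{\Loc'}{l_0}$, the key point is that a small commutation lemma $A_0\subst{\Loc'}{l_0}\sigma = (A_0\sigma)\subst{\Loc'\sigma}{l_0}$ holds (again assuming $l_0$ fresh), which lets us apply (T-Lapp) to the substituted premise obtained by induction.

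The main obstacle will be managing the bound location variables under $\alpha$-renaming and verifying the commutation lemmas for substitution on types and terms; in particular, $\sigma$ must avoid capturing the bound variable $l$ in (T-Labs) and $l_0$ in the universal type of (T-Lapp). These are routine but must be stated explicitly to ensure the induction goes through. Once these commutation properties are in hand, the height-preservation is immediate because each rule invocation in the original derivation is answered by exactly one rule invocation in the substituted derivation.
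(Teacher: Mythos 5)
Your proof follows the same route as the paper, whose entire argument is the single sentence that the lemma ``can be done straightforwardly by induction on the height of the derivation tree''; you simply supply the case analysis, the substitution-commutation facts, and the $\alpha$-renaming bookkeeping that such an induction requires, and your treatment of the two delicate cases (T-Labs) and (T-Lapp) is correct. One small caveat: the hypothesis only excludes location-variable \emph{entries} from $\tyenv_0$, so types in $\tyenv_0$ may still mention $l_1,\dots,l_n$ and the aside $\tyenv_0\sigma=\tyenv_0$ is not valid in general, but since your argument applies $\sigma$ uniformly to the whole judgment this does not affect the induction.
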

 \begin{proof} The proof can be done straightforwardly by induction on the height of the derivation tree for the typing judgment in the condition.
 \end{proof}
 
\subsection{Proofs  of Lemmas in Section  3.2.3 Semantics correctness}

\setcounter{lemma}{7}

\begin{lemma}[Substitution under  monomorphization] $\mono{M}\subst{\mono{W}}{x} = \mono{M\subst{W}{x}}$. 
\end{lemma}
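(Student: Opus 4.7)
The plan is to prove this by structural induction on the term $M$, handling each syntactic form of the polymorphic RPC calculus. Throughout, I will implicitly rely on the well-formedness assumptions (no free location variables in the substituted value $W$, and suitable freshness of bound names) so that every translation appearing on either side of the equation is well-defined.

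The easy cases are the variable case and the congruence cases. For $M = y$, a case split on whether $y = x$ immediately gives equality. For $M = L \ N$, $M = \lamL{a}{y}{M_0}$, $M = \Lambda\alpha.V_0$, and $M = N[B]$ (type application), the translation commutes with the subterm structure, so a direct appeal to the induction hypothesis on the immediate subterms closes these cases. For the location application $M = N[a]$ with $a$ a constant (the only well-formed possibility under our invariant), I would compute $\mono{N[a]}\subst{\mono{W}}{x} = \pi_i(\mono{N})\subst{\mono{W}}{x} = \pi_i(\mono{N}\subst{\mono{W}}{x})$ and then apply the induction hypothesis to $N$ to rewrite this as $\pi_i(\mono{N\subst{W}{x}}) = \mono{(N\subst{W}{x})[a]} = \mono{(N[a])\subst{W}{x}}$.

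The interesting case is $M = \Lambda l. V_0$. Unfolding both sides, the left-hand side gives
\[
\mono{\Lambda l.V_0}\subst{\mono{W}}{x} = (\mono{V_0\subst{\client}{l}}\subst{\mono{W}}{x},\ \mono{V_0\subst{\server}{l}}\subst{\mono{W}}{x}),
\]
while the right-hand side gives
\[
\mono{(\Lambda l.V_0)\subst{W}{x}} = \mono{\Lambda l.(V_0\subst{W}{x})} = (\mono{(V_0\subst{W}{x})\subst{\client}{l}},\ \mono{(V_0\subst{W}{x})\subst{\server}{l}}).
\]
To line these up, I need the auxiliary equality $V_0\subst{a}{l}\subst{W}{x} = V_0\subst{W}{x}\subst{a}{l}$ for $a \in \{\client,\server\}$, which holds because $l$ is bound and therefore does not appear free in $W$ (so substituting $a$ for $l$ in $W$ is a no-op) and because value and location substitutions act on disjoint syntactic categories. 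With that commutation, each component becomes $\mono{V_0\subst{a}{l}\subst{W}{x}}$, and the induction hypothesis applied to the smaller term $V_0\subst{a}{l}$ (with the same $W$ and $x$) closes the case.

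The main obstacle is precisely this $\Lambda l$ case, because the translation is not homomorphic on location abstractions---it duplicates the body under two different location substitutions. Consequently one cannot apply the induction hypothesis to $V_0$ directly; one must first descend into each monomorphic instance $V_0\subst{\client}{l}$ and $V_0\subst{\server}{l}$, and justify commuting the value substitution past the location substitution using the freshness of $l$ in $W$. Once that auxiliary commutation lemma is in hand (either proved by a short separate induction on $V_0$ or simply cited from the conventions on bound variables), the rest of the proof is mechanical.
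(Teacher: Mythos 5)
Your proposal is correct and follows essentially the same route as the paper: structural induction on $M$, with the $\Lambda l.V$ case handled by applying the induction hypothesis to the two monomorphic instances $V\subst{\client}{l}$ and $V\subst{\server}{l}$ and then commuting the value substitution past the location substitution. In fact you are slightly more careful than the paper, which silently performs that commutation step (labelling it as an unfolding of the translation) where you explicitly identify and justify the auxiliary equality $V\subst{a}{l}\subst{W}{x} = V\subst{W}{x}\subst{a}{l}$.
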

 \begin{proof} We prove this lemma by the structural induction on $M$. 
	In the base case, $M=y$. When $y=x$, $\mono{x}\subst{\mono{W}}{x} = x\subst{\mono{W}}{x} = \mono{W} = \mono{x\subst{W}{x}}$. 
	When $y\not=x$, $\mono{y}\subst{\mono{W}}{x} = y\subst{\mono{W}}{x} = y = \mono{y} = \mono{y\subst{W}{x}}$.
	
	For the inductive cases, the proof is done as follows. 

	{\bf Case $M$ is $\lamL{a}{y}{L}$:}

\ \ i) $y=x$ 

\begin{tabular}{l l l}
     & $\mono{\lamL{a}{y}{L}}\subst{\mono{W}}{x}$  & by def. of $\mono{-}$ \\
 $=$ & $(\lamL{a}{y}{\mono{L}})\subst{\mono{W}}{x}$ \\
 $=$ & $\lamL{a}{y}{\mono{L}}$ \\ 
 $=$ & $\mono{\lamL{a}{y}{L}}$ \\  
 $=$ & $\mono{(\lamL{a}{y}{L})\subst{W}{x}}$ \\   
\end{tabular}     	

\ \ ii) $y\not=x$ 

\begin{tabular}{l l l}
     & $\mono{\lamL{a}{y}{L}}\subst{\mono{W}}{x}$  & by def. of $\mono{-}$ \\
 $=$ & $(\lamL{a}{y}{\mono{L}})\subst{\mono{W}}{x}$ & \\
 $=$ & $\lamL{a}{y}{(\mono{L}\subst{\mono{W}}{x})}$ & by I.H. \\ 
 $=$ & $\lamL{a}{y}{\mono{L\subst{W}{x}}}$ & by def. of $\mono{-}$ \\  
 $=$ & $\mono{\lamL{a}{y}{(L\subst{W}{x})}}$ &  \\   
 $=$ & $\mono{(\lamL{a}{y}{L})\subst{W}{x}}$ &  \\ 
 \\  
\end{tabular}  

	{\bf Case $M$ is $L N$:}

\begin{tabular}{l l l}
     & $\mono{L N}\subst{\mono{W}}{x}$  & by def. of $\mono{-}$ \\
 $=$ & $(\mono{L} \mono{N})\subst{\mono{W}}{x}$  &  \\     
 $=$ & $(\mono{L}\subst{\mono{W}}{x}) (\mono{N}\subst{\mono{W}}{x})$  & by I.H. \\        
 $=$ & $(\mono{L\subst{W}{x}}) (\mono{N\subst{W}{x}})$  &  \\     
 $=$ & $\mono{(L\subst{W}{x}) (N\subst{W}{x}})$  &  \\          
 $=$ & $\mono{(L N)\subst{W}{x}}$  &  \\   
\\      
\end{tabular}  

	{\bf Case $M$ is $\Lambda l.V$:}

\begin{tabular}{l l l}
     & $\mono{\Lambda l.V}\subst{\mono{W}}{x}$  & by def. of $\mono{-}$ \\
 $=$ & $(\mono{V\subst{\client}{l}},\mono{V\subst{\server}{l}})\subst{\mono{W}}{x}$  &  \\
 $=$ & $(\mono{V\subst{\client}{l}}\subst{\mono{W}}{x},\mono{V\subst{\server}{l}}\subst{\mono{W}}{x})$  & by I.H. \\    
 $=$ & $(\mono{(V\subst{\client}{l})\subst{W}{x}},\mono{(V\subst{\server}{l})\subst{W}{x}}$  &  \\   
 $=$ & $(\mono{(V\subst{W}{x})\subst{\client}{l}},\mono{(V\subst{W}{x})\subst{\server}{l}}$  & by def. of $\mono{-}$ \\      
 $=$ & $(\mono{\Lambda l.(V\subst{W}{x})}$  &  \\        
 $=$ & $(\mono{(\Lambda l.V)\subst{W}{x}}$  &  \\        
     \\     
\end{tabular} 

	{\bf Case $M$ is $\Lambda \alpha.V$:}

\begin{tabular}{l l l}
     & $\mono{\Lambda \alpha.V}\subst{\mono{W}}{x}$  & by def. of $\mono{-}$ \\
 $=$ & $(\Lambda \alpha.\mono{V})\subst{\mono{W}}{x}$  &  \\          
 $=$ & $\Lambda \alpha.(\mono{V}\subst{\mono{W}}{x})$  & by I.H. \\     
 $=$ & $\Lambda \alpha.(\mono{V\subst{W}{x}})$  &  by def. of $\mono{-}$ \\    
 $=$ & $\mono{\Lambda \alpha.(V\subst{W}{x})}$  &   \\  
 $=$ & $\mono{(\Lambda \alpha.V)\subst{W}{x}}$  &   \\   
 \\
\end{tabular} 

	{\bf Case $M$ is $L[a]$:}

\begin{tabular}{l l l}
     & $\mono{L[a]}\subst{\mono{W}}{x}$  & by def. of $\mono{-}$ \\
 $=$ & $(\pi_i\mono{L})\subst{\mono{W}}{x}$  & $i=1$ if $a=\client$, \ \ $i=2$ if $a=\server$ \\     
 $=$ & $\pi_i(\mono{L}\subst{\mono{W}}{x})$  & by I.H. \\ 
 $=$ & $\pi_i(\mono{L\subst{W}{x}})$  & \\  
 $=$ & $\mono{L\subst{W}{x}[a]}$  & by def. of $\mono{-}$ \\   
 $=$ & $\mono{(L[a])\subst{W}{x}}$  & \\    
\\
\end{tabular} 

	{\bf Case $M$ is $L[B]$:}

\begin{tabular}{l l l}
     & $\mono{L[B]}\subst{\mono{W}}{x}$  & by def. of $\mono{-}$ \\
 $=$ & $\mono{L}[\mono{B}]\subst{\mono{W}}{x}$  &  \\     
 $=$ & $(\mono{L}\subst{\mono{W}}{x})[\mono{B}]$  & by I.H. \\ 
 $=$ & $\mono{L\subst{W}{x}}[\mono{B}]$  & by def. of $\mono{-}$  \\  
 $=$ & $\mono{L\subst{W}{x}[B]}$  &  \\   
 $=$ & $\mono{(L[B])\subst{W}{x}}$  &  \\  
\end{tabular} 

 \end{proof}
 
\begin{lemma}[Type substitution over term under  monomorphization] $\mono{V}\subst{\mono{B}}{\alpha} = \mono{V\subst{B}{\alpha}}$. 
\end{lemma}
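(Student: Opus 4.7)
The plan is to prove this by structural induction on $V$, following the pattern established in Lemma~\ref{lemma:typesubstitutionmono} (for types) and Lemma~\ref{lemma:substitutionmono} (for term substitution on terms). Because the body of a lambda abstraction, type abstraction, or location abstraction may be a general term rather than a value, I will in fact prove the slightly more general statement $\mono{M}\subst{\mono{B}}{\alpha} = \mono{M\subst{B}{\alpha}}$ for all terms $M$, from which the lemma follows as a special case. I assume throughout that the translation is well-defined (all subterms and subtypes are closed with respect to location variables); this invariant is preserved by type substitution, which touches neither locations nor location variables.

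First I would dispatch the base case $M = x$, where both sides are immediately $x$. For the inductive cases $\lamL{a}{x}{M_0}$, $L\,N$, and $L[a]$, the translation recurses element-wise and type substitution commutes past these constructors, so a direct appeal to the induction hypothesis (together with unfolding the definition of $\mono{-}$) closes each case. The type application case $M = L[A']$ requires an additional appeal to Lemma~\ref{lemma:typesubstitutionmono} in order to push the substitution through the translated type argument: $\mono{L[A']}\subst{\mono{B}}{\alpha} = \mono{L}\subst{\mono{B}}{\alpha}[\mono{A'}\subst{\mono{B}}{\alpha}] = \mono{L\subst{B}{\alpha}}[\mono{A'\subst{B}{\alpha}}] = \mono{(L[A'])\subst{B}{\alpha}}$. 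The type abstraction case $M = \Lambda\beta.V'$ splits on whether $\beta = \alpha$: if so, the substitution is blocked on both sides and the equation is immediate; otherwise, by the usual Barendregt convention, $\beta$ does not occur free in $B$ and the induction hypothesis applies under the binder.

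The main obstacle, and the only case that differs substantively from the earlier lemmas, is the location abstraction case $M = \Lambda l.V'$. Here the monomorphization eagerly eliminates the binder by substituting $\client$ and $\server$ for $l$, so I need to reconcile
\[
\mono{\Lambda l.V'}\subst{\mono{B}}{\alpha} = \bigl(\mono{V'\subst{\client}{l}}\subst{\mono{B}}{\alpha},\ \mono{V'\subst{\server}{l}}\subst{\mono{B}}{\alpha}\bigr)
\]
with
\[
\mono{(\Lambda l.V')\subst{B}{\alpha}} = \mono{\Lambda l.(V'\subst{B}{\alpha})} = \bigl(\mono{(V'\subst{B}{\alpha})\subst{\client}{l}},\ \mono{(V'\subst{B}{\alpha})\subst{\server}{l}}\bigr).
\]
Applying the induction hypothesis to each component reduces the problem to showing that location substitution and type substitution commute on terms, i.e.\ $(V'\subst{a}{l})\subst{B}{\alpha} = (V'\subst{B}{\alpha})\subst{a}{l}$ for $a \in \{\client,\server\}$. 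This commutativity holds because the two substitutions act on disjoint syntactic categories (location annotations versus type variables) and $a$ contains no type variables, so it follows by a straightforward side induction on $V'$; I would either discharge this inline or state it as a small auxiliary lemma. Once that commutativity is in hand, the location abstraction case closes by pairing the two invocations of the induction hypothesis, and the overall lemma follows.
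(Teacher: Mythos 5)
Your proposal is correct and follows essentially the same route as the paper: generalize to arbitrary terms $M$, proceed by structural induction, invoke Lemma~\ref{lemma:typesubstitutionmono} in the type-application case, split on $\beta=\alpha$ for type abstraction, and handle $\Lambda l.V'$ by applying the induction hypothesis to the two location-instantiated components. Your explicit identification of the commutation $(V'\subst{a}{l})\subst{B}{\alpha} = (V'\subst{B}{\alpha})\subst{a}{l}$ as a needed auxiliary fact is in fact slightly more careful than the paper, which performs that swap silently under a ``by def.\ of $\mono{-}$'' annotation (note that for it to hold you also need $flv(B)=\emptyset$, which your well-definedness invariant supplies).
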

 \begin{proof} 
	We prove a slightly general lemma as $\mono{M}\subst{\mono{B}}{\alpha} = \mono{M\subst{B}{\alpha}}$. 
	We prove the general lemma by the structural induction on $M$.
	For the base case $M=x$, $\mono{x}\subst{\mono{B}}{\alpha} = x \subst{\mono{B}}{\alpha} = x = \mono{x} = \mono{x \subst{B}{\alpha} }$.
	
	For the inductive cases, the proof is done as follows.

	{\bf Case $M$ is $\lamL{a}{x}{L}$:}

\begin{tabular}{l l l}
     & $\mono{\lamL{a}{x}{L}}\subst{\mono{B}}{\alpha}$ & by def. of $\mono{-}$ \\
 $=$ & $(\lamL{a}{x}{\mono{L}})\subst{\mono{B}}{\alpha}$ &  \\
 $=$ & $\lamL{a}{x}{(\mono{L}\subst{\mono{B}}{\alpha})}$ & by I.H. \\ 
 $=$ & $\lamL{a}{x}{\mono{L\subst{B}{\alpha}}}$ &  \\  
 $=$ & $\mono{\lamL{a}{x}{(L\subst{B}{\alpha})}}$ & by def. of $\mono{-}$  \\   
 $=$ & $\mono{(\lamL{a}{x}{L})\subst{B}{\alpha}}$ &  \\  
     \\
\end{tabular} 

	{\bf Case $M$ is $\Lambda l.V$:}

\begin{tabular}{l l l}
     & $\mono{\Lambda l.V}\subst{\mono{B}}{\alpha}$ & by def. of $\mono{-}$ \\
 $=$ & $(\mono{V\subst{\client}{l}},\mono{V\subst{\server}{l}})\subst{\mono{B}}{\alpha}$ & \\
 $=$ & $(\mono{V\subst{\client}{l}}\subst{\mono{B}}{\alpha},\mono{V\subst{\server}{l}}\subst{\mono{B}}{\alpha})$ & by applying I.H. twice \\ 
 $=$ & $(\mono{(V\subst{\client}{l})\subst{B}{\alpha}},\mono{(V\subst{\server}{l})\subst{B}{\alpha}}$ &  \\ 
 $=$ & $(\mono{(V\subst{B}{\alpha})\subst{\client}{l}},\mono{(V\subst{B}{\alpha})\subst{\server}{l}}$ &  by def. of $\mono{-}$  \\  
 $=$ & $\mono{\Lambda l.(V\subst{B}{\alpha})}$ &   \\  
 $=$ & $\mono{(\Lambda l.V)\subst{B}{\alpha}}$ &   \\  
     \\
\end{tabular}

	{\bf Case $M$ is $L N$:}

\begin{tabular}{l l l}
     & $\mono{L N}\subst{\mono{B}}{\alpha}$ & by def. of $\mono{-}$ \\
 $=$ & $(\mono{L}\mono{N})\subst{\mono{B}}{\alpha}$ & \\     
 $=$ & $(\mono{L}\subst{\mono{B}}{\alpha} (\mono{N}\subst{\mono{B}}{\alpha})$ & by applying I.H. twice \\      
 $=$ & $\mono{L\subst{B}{\alpha}} \mono{N\subst{B}{\alpha}}$ & by def. of $\mono{-}$ \\       
 $=$ & $\mono{(L\subst{B}{\alpha}) (N\subst{B}{\alpha})}$ &  \\  
 $=$ & $\mono{(L N)\subst{B}{\alpha}}$ &  \\ 
     \\
\end{tabular}  

	{\bf Case $M$ is $L[a]$:}

\begin{tabular}{l l l}
     & $\mono{L[a]}\subst{\mono{B}}{\alpha}$ & by def. of $\mono{-}$ \ \ ($i=1$ if $a=\client$, $i=2$ if $a=\server$) \\
 $=$ & $(\pi_i\mono{L})\subst{\mono{B}}{\alpha}$ &     \\
 $=$ & $\pi_i(\mono{L}\subst{\mono{B}}{\alpha})$ &  by I.H.   \\ 
 $=$ & $\pi_i\mono{L\subst{B}{\alpha}}$ & by def. of $\mono{-}$ \\  
 $=$ & $\mono{(L\subst{B}{\alpha})[a]}$ &  \\   
 $=$ & $\mono{(L[a])\subst{B}{\alpha}}$ &  \\   
 \\ 
\end{tabular}  

	{\bf Case $M$ is $\Lambda\beta.V$:}

\ \ i) $\beta=\alpha$ 
	
\begin{tabular}{l l l}
     & $\mono{\Lambda\alpha.V}\subst{\mono{B}}{\alpha}$ & by def. of $\mono{-}$ \\
 $=$ & $(\Lambda\alpha.\mono{V})\subst{\mono{B}}{\alpha}$ &    \\  
 $=$ & $\Lambda\alpha.\mono{V}$ & by def. of $\mono{-}$ \\
 $=$ & $\mono{\Lambda\alpha.V}$ &  \\  
 $=$ & $\mono{(\Lambda\alpha.V)\subst{B}{\alpha}}$ &   \\
     \\
\end{tabular} 

\ \ i) $\beta\not=\alpha$ 
	
\begin{tabular}{l l l}
     & $\mono{\Lambda\beta.V}\subst{\mono{B}}{\alpha}$ & by def. of $\mono{-}$ \\
 $=$ & $(\Lambda\beta.\mono{V})\subst{\mono{B}}{\alpha}$ &    \\       
 $=$ & $\Lambda\beta.(\mono{V}\subst{\mono{B}}{\alpha})$ & by I.H.   \\        
 $=$ & $\Lambda\beta.(\mono{V\subst{B}{\alpha}})$ &  \\         
 $=$ & $\mono{\Lambda\beta.(V\subst{B}{\alpha})}$ &  \\     
 $=$ & $\mono{(\Lambda\beta.V)\subst{B}{\alpha}}$ &  \\   
     \\
\end{tabular} 

	{\bf Case $M$ is $L[A]$:}

\begin{tabular}{l l l}
     & $\mono{L[A]}\subst{\mono{B}}{\alpha}$ & by def. of $\mono{-}$ \\
 $=$ & $(\mono{L}[\mono{A}])\subst{\mono{B}}{\alpha}$ & \\     
 $=$ & $(\mono{L}\subst{\mono{B}}{\alpha})[\mono{A}\subst{\mono{B}}{\alpha}]$ & by I.H. and Lemma \ref{lemma:typesubstitutionmono} \\  
 $=$ & $\mono{L\subst{B}{\alpha}}[\mono{A\subst{B}{\alpha}}]$ & by def. of $\mono{-}$ \\ 
 $=$ & $\mono{L\subst{B}{\alpha}[A\subst{B}{\alpha}]}$ & \\ 
 $=$ & $\mono{(L[A])\subst{B}{\alpha}}$ & \\  
     \\
\end{tabular} 

 \end{proof}


\end{appendix}

\end{document}